\documentclass[a4paper,12pt]{article}
\pdfoutput=1

\usepackage{geometry}
 \geometry{
 a4paper,
 total={166mm,220mm},
 left=22mm,
 top=40mm,
 }
\usepackage[margin=10mm]{caption}

\usepackage[utf8]{inputenc}
\usepackage[english]{babel}
\usepackage{amssymb, amsmath}
\usepackage{amsthm}
\usepackage{eucal}
\usepackage{mathalpha}
\usepackage{amsfonts}
\usepackage{mathrsfs}
\usepackage{setspace}
\usepackage{graphicx}
\usepackage{subcaption}
\usepackage{slashed}
\usepackage{bm}
\usepackage[sorting=none, backend=biber]{biblatex}
\addbibresource{Bibliography.bib}
\usepackage{array,booktabs}
\usepackage{lmodern}
\usepackage{tikz}
\usepackage{microtype}
\usepackage{hyperref}
\usepackage{authblk}
\usepackage{yfonts}
\usepackage{csquotes}

\usepackage{tikz-cd}
\usetikzlibrary{shapes.geometric}
\usepackage{caption}
\usetikzlibrary{calc}
\def\centerarc[#1](#2)(#3:#4:#5)
    { \draw[#1] ($(#2)+({#5*cos(#3)},{#5*sin(#3)})$) arc (#3:#4:#5); }



\theoremstyle{definition}
\newtheorem{definition}{Definition}[section]
\newtheorem{remark}{Remark}[section]

\newtheorem{theorem}{Theorem}
\newtheorem{prop}{Proposition}[section]
\newtheorem{lemma}{Lemma}[section]
\newtheorem{corollary}{Corollary}[section]
\newtheorem{example}{Example}[section]

\makeatletter
\renewenvironment{proof}[1][\proofname]{%
   \par\pushQED{\qed}\normalfont%
   \topsep6\p@\@plus6\p@\relax
   \trivlist\item[\hskip\labelsep\bfseries#1\@addpunct{.}]%
   \ignorespaces
}{%
   \popQED\endtrivlist\@endpefalse
}



\newcommand{\beq}{\begin{equation}}
\newcommand{\eeq}{\end{equation}}
\newcommand{\eeeem}{\end{multline}}
\newcommand{\bem}{\begin{multline}}
\newcommand{\bqa} {\begin{eqnarray}}
\newcommand{\eqa} {\end{eqnarray}}
\newcommand{\bmul}{\begin{multline}}
\newcommand{\emul}{\end{multline}}


\DeclareMathOperator{\Id}{Id}

\DeclareMathOperator{\diam}{diam}
\DeclareMathOperator{\Ad}{Ad}
\DeclareMathOperator{\ad}{ad}

\DeclareMathOperator{\dist}{dist}


\newcommand{\ups}{\upsilon}
\newcommand{\eps}{\varepsilon}

\def \ra {\rightarrow}

\newcommand{\CA}{{\mathcal A}}
\newcommand{\CB}{{\mathcal B}}

\newcommand{\CD}{{\mathcal D}}

\newcommand{\CG}{{\mathcal G}}
\newcommand{\CH}{{\mathcal H}}

\newcommand{\CU}{{\mathcal U}}
\newcommand{\CV}{{\mathcal V}}
\newcommand{\CW}{{\mathcal W}}

\newcommand{\NN}{{\mathbb N}}
\newcommand{\ZZ}{{\mathbb Z}}
\newcommand{\RR}{{\mathbb R}}
\newcommand{\RRp}{{\mathbb R}_{\geq 0}}
\newcommand{\RRm}{{\mathbb R}_{< 0}}
\newcommand{\CCC}{{\mathbb C}}

\newcommand{\TT}{{\mathbb T}}

\newcommand{\SA}{{\mathscr A}}
\newcommand{\SAl}{{\mathscr A}^{l}}
\newcommand{\SAal}{{\mathscr A}^{al}}
\newcommand{\SAql}{{\mathscr A}^{ql}}

\newcommand{\SUl}{{\mathscr U}^l}
\newcommand{\SUal}{{\mathscr U}^{al}}
\newcommand{\SUql}{{\mathscr U}^{ql}}

\newcommand{\HilbV}{{V}}

\newcommand{\UnitaryOper}{{U}}
\newcommand{\BoundedOper}{{B}}
\newcommand{\Ball}{{\mathsf B}}
\newcommand{\bpi}{{\bm{\pi}}}




\newcommand{\mfkDal}{{\mathfrak{D}}^{al}}
\newcommand{\mfkDl}{{\mathfrak{D}}^{l}}

\newcommand{\mfkdal}{{{\mathfrak d}^{al}}}
\newcommand{\mfkdl}{{{\mathfrak d}^{l}}}
\newcommand{\mfkdalsn}{{{\mathfrak d}^{al}_{<0}}}
\newcommand{\mfkdalsp}{{{\mathfrak d}^{al}_{\geq 0}}}

\newcommand{\mfkDalsp}{{\mathfrak{D}}^{al}_{\geq 0}}
\newcommand{\mfkDalsm}{{\mathfrak{D}}^{al}_{< 0}}
\newcommand{\mfkDalp}{{\mathfrak{D}}^{al}_{+}}
\newcommand{\mfkDalm}{{\mathfrak{D}}^{al}_{-}}

\newcommand{\mfkg}{{\mathfrak g}}

\newcommand{\LPA}{{\CG^{\text{lp}}}}

\newcommand{\LPAp}{{\CG^{\text{lp}}_+}}
\newcommand{\LPAm}{{\CG^{\text{lp}}_-}}
\newcommand{\LPApm}{{\CG^{\text{lp}}_\pm}}
\newcommand{\LPAz}{{\CG^{\text{lp}}_0}}
\newcommand{\LPAsp}{{\CG^{\text{lp}}_{\geq 0}}}
\newcommand{\LPAsm}{{\CG^{\text{lp}}_{<0}}}

\newcommand{\alLPA}{{\CG^{\text{al}}}}

\newcommand{\alLPAp}{{\CG^{\text{al}}_+}}
\newcommand{\alLPAm}{{\CG^{\text{al}}_-}}
\newcommand{\alLPAz}{{\CG^{\text{al}}_0}}
\newcommand{\alLPAsp}{{\CG^{\text{al}}_{\geq 0}}}
\newcommand{\alLPAsm}{{\CG^{\text{al}}_{<0}}}



\newcommand{\chf}{{\mathsf f}}
\newcommand{\chg}{{\mathsf g}}
\newcommand{\chh}{{\mathsf h}}

\newcommand{\chq}{{\mathsf q}}


\newcommand{\derF}{{\mathsf{F}}}
\newcommand{\derG}{{\mathsf{G}}}
\newcommand{\derH}{{\mathsf{H}}}
\newcommand{\derQ}{{\mathsf{Q}}}




\newcommand{\OL}{\mathcal{O}(L^{-\infty})}
\newcommand{\Or}{\mathcal{O}(r^{-\infty})}

\newcommand{\Br}{{\mathbb B}}

\newcommand{\Frechet}{{Fr\'{e}chet}}

\def \l {\left(}
\def \r {\right)}

\def \lal {\langle}
\def \ral {\rangle}


\title{Anomalous symmetries of quantum spin chains and a generalization of the Lieb-Schultz-Mattis theorem}

\author{ Anton Kapustin \\
{\it California Institute of Technology, Pasadena, CA 91125}
\and 
Nikita Sopenko \\
{\it Institute for Advanced Study, Princeton, NJ 08540}
}

\date{\today}

\begin{document}

\maketitle

\begin{abstract}
For any locality-preserving action of a group $G$ on a quantum spin chain one can define an anomaly index taking values in the group cohomology of $G$. The anomaly index is a kinematic quantity, it does not depend on the Hamiltonian. We prove that a nonzero anomaly index prohibits any $G$-invariant Hamiltonian from having $G$-invariant gapped ground states. Lieb-Schultz-Mattis-type theorems are a special case of this result when $G$ involves translations. In the case when the symmetry group $G$ is a Lie group, we define an anomaly index which takes values in the differentiable group cohomology as defined by J.-L. Brylinski and prove a similar result.
\end{abstract}

\section{Introduction}

The most basic questions about the dynamics of quantum spin systems concern the properties of their ground states and the low-energy spectrum in the thermodynamic limit. In particular, one would like to know whether there is a gap in the energy spectrum separating ground states from excited states, whether the ground state is unique, and, if a symmetry is present, whether it is spontaneously broken or not.

An early example of a rigorous constraint on the low-energy behavior of quantum spin systems appeared in the works of Lieb, Schultz, and Mattis \cite{LSM} and Affleck and Lieb \cite{AL} in the analysis of the one-dimensional  antiferromagnetic Heisenberg model with a half-integral spin. These authors proved that certain spin chains (i.e. one-dimensional quantum spin systems) cannot have a unique gapped ground state in the thermodynamic limit. It was later realized that this constraint is robust against deformations of the Hamiltonian as long as they preserve the symmetry of the system. In particular, it was conjectured in \cite{chen2011classification} and proved in \cite{ogata2019lieb,ogatatachikawatasaki} that any quantum spin chain with a finite-range Hamiltonian which is invariant under translation symmetry and a projective on-site action of an internal finite symmetry group cannot have a symmetric gapped ground state in the thermodynamic limit. Various generalizations, including higher-dimensional ones, have been proposed over the years and are often referred to as LSM-type theorems due to the work \cite{LSM}. See \cite{tasaki2022lieb} for a review.

Recently, thanks to the developments in the classification of topological phases of matter, it was argued that the underlying reason for (at least some) LSM-type theorems is the anomalous realization of symmetries \cite{Barkeshlietal,ChoHsiehRyu,ChengSeiberg}. This proposal builds on an earlier observation that the robust character of gapless symmetric edges of Symmetry Protected Topological (SPT) phases can often be explained by an anomalous realization of symmetries in the Effective Field Theory of the edge modes \cite{ChenLiuWen,levin2012braiding,SPTanomalies1,SPTanomalies2}. 

The constraints on the low-energy behaviour of quantum field theories (QFTs) due to symmetries also have a long history. It has been  known since the works of Adler \cite{Adler} and Bell and Jackiw \cite{BellJackiw} that in theories with fermions there may be obstructions to gauging continuous global symmetries. It has been argued by 't Hooft \cite{tHooft} that these obstructions are preserved under RG flow and thus a QFT with a non-trivial obstruction for gauging a symmetry $G$ cannot have a gapped vacuum. One can compute these obstructions, often referred to as 't Hooft anomalies, if one has a good control over the QFT dynamics at short distance scales. The task is simplified by the fact that t' Hooft anomalies are robust: they do not change under reasonable deformations of the dynamics. The concept of 't Hooft anomalies has been generalized to discrete symmetries and to space-time diffeomorphism symmetries ('t Hooft anomalies of diffeomorphism symmetries are known as gravitational anomalies). When a discrete symmetry of a QFT  is afflicted with 't Hooft anomalies, a gapped vacuum is not ruled out, but the invariance of 't Hooft anomaly under RG flow means that the low-energy Effective Field Theory cannot be completely trivial. 

Since 't Hooft anomalies of QFTs can be computed from the knowledge of dynamics at arbitrarily short distances, it is natural to expect that one can read them off a lattice regularization of the theory. A concrete way to detect anomalies of internal unitary symmetries of low-dimensional lattice systems was proposed by Nayak and Else  \cite{else2014classifying}. These authors assumed that the symmetry group $G$ acts on a lattice system by (unitary finite-depth) circuits (a special type of automorphisms which map local observables to local observables, see Section 2 for a precise definition). In the one-dimensional case, for any homomorphism from $G$ to the group of circuits, they constructed an element of the degree-3 group cohomology $H^3(G,U(1))$ which we call the anomaly index. The anomaly index is an obstruction to localizing the action of $G$ to a half-line and it trivially vanishes for on-site actions. Nayak and Else argued that if the anomaly index is nonzero then a state obtained by acting on an unentangled state with a circuit cannot be $G$-invariant. While this result is similar to an LSM-type theorem, it is distinct in a couple of ways: it does not involve translation symmetry and it does not immediately lead to constraints on gapped states. More recently, other procedures for defining an anomaly index that allow  one to incorporate anti-unitary symmetries \cite{kawagoe2021anomalies} and translations \cite{seifnashri2023lieb} have been proposed.

In this paper, we perform a general analysis of the action of a symmetry group $G$ on a quantum spin chain by {\it locality-preserving automorphisms.} This is a very general class of automorphisms studied by Ranard, Walter, and Witteveen \cite{ranard2022converse} which includes circuits, Quantum Cellular Automata (QCAs), as well as finite-time evolutions generated by sufficiently local Hamiltonians. For a given action, we define an anomaly index which takes values in $H^3(G,U(1))$  (for both discrete and continuous groups). We then prove that any Hamiltonian which is invariant under an anomalous symmetry cannot have a gapped $G$-invariant ground state in the thermodynamic limit. This includes Lieb-Schultz-Mattis-type theorems as a special case, where $G$ is a product of translation symmetry and a projective on-site unitary symmetry. The anomaly indices for the translation symmetry alone and for the projective on-site symmetry alone vanish, so in this case a non-trivial anomaly arises from an  interplay between the two (in the QFT context, this is referred to as a "mixed anomaly"). 

The case of a Lie group symmetry needs special attention. While group cohomology and the anomaly index are well-defined in this case too, they are impractical, since groups of cochains of an uncountably infinite group are immense (their  cardinality is strictly larger than the cardinality of the continuum). We argue that for Lie groups, a suitable replacement of the ordinary group cohomology is the differentiable group cohomology defined in \cite{brylinski2000differentiable}. Assuming that the action of $G$ on a spin chain is smooth in a natural sense, we define an anomaly index with values in the differentiable group cohomology. We prove that if it is nonzero, then a Hamiltonian invariant under such a symmetry cannot have a gapped $G$-invariant ground state in the thermodynamic limit. The original Lieb-Schultz-Mattis theorem for $SO(3)$-invariant spin chains is a special case of this result. Differentiable group cohomology appears to be a suitable replacement for ordinary group cohomology in other related problems, such as the classification of SPT phases. 

We emphasize that the anomaly index for lattice spin systems is purely kinematic, i.e. it does not depend on the choice of the Hamiltonian. In contrast, although 't Hooft anomalies in QFT are quite robust, it is not known how to define them without specifying dynamics. Another difference between QFT and lattice systems is which anomalies can be nonzero. In the case of quantum spin chains, there can be a mixed anomaly between translation symmetry and an on-site unitary  symmetry, leading to LSM-type theorems, while in 1+1d QFT mixed anomalies between translations and internal symmetries are impossible. Instead, the QFT interpretation of LSM-type results involves "emanant" internal symmetries of continuum Effective Field Theories which do not exist on the lattice \cite{ChoHsiehRyu,ChengSeiberg}. In the opposite direction, we prove that a compact connected Lie group symmetry of a quantum spin chain cannot have a nonzero anomaly index. This in a marked contrast to QFT, where 1+1d theories of chiral bosons provide some of the most well-known examples of 't Hooft anomalies for compact connected Lie groups.

The paper is organized as follows. In Section \ref{sec:preliminaries}, we review the general structure of locality-preserving automorphisms of one-dimensional lattice spin systems. In Section \ref{sec:anomalyindex}, we define the anomaly index for a given action of a symmetry group $G$. We discuss some well-known examples of anomalous symmetries, both with and without translations, and show how they are compatible with our definition. In Section \ref{sec:LSM}, we prove the main theorem of the paper. In Appendix \ref{app:LieGroup}, we generalize the discussion in the main text to the case of a Lie group symmetry. This is technically more involved because one needs to take into account the smooth manifold structure of the group. In particular, we define the differentiable cohomology of Lie groups where the anomaly indices take values and introduce a class of automorphisms of spin systems which can be used to realize a Lie group symmetry. We also prove that for compact connected Lie groups the anomaly index always vanishes. Appendix \ref{app:cocycle} consists of computations needed to verify that the anomaly index is well-defined.
\\

\noindent
{\bf Acknowledgements:} N.S. would like to thank Nathan Seiberg, Wilbur Shirley and, especially, Sahand Seifnashri for the discussion of anomalies of lattice systems. A.K. is supported by the U.S.\ Department of Energy, Office of Science, Office of High Energy Physics, under Award Number DE-SC0011632, and by the Simons Investigator Award. N.S. is supported by NSF Grant PHY-0503584 and Ambrose Monell Foundation.
\\

\section{Preliminaries} \label{sec:preliminaries}

\subsection{1d quantum spin systems}

For a complex Hilbert space $\HilbV$, we let $\BoundedOper(\HilbV)$ be the normed $*$-algebra of bounded operators on $\HilbV$. We denote the unitary group of $\HilbV$ by $\UnitaryOper(\HilbV)$.

Let $\Lambda$ be a countable set. For a finite-dimensional complex Hilbert space $\HilbV$ and a finite subset $\Gamma \subset \Lambda$, we define $\SA_{\Gamma} := \bigotimes_{j \in \Gamma} \BoundedOper(\HilbV)$. {\it The algebra of local observables} $\SAl$ of a quantum spin  system with an on-site Hilbert space $V$ is defined to be the colimit (of normed $*$-algebras) $\SAl := \varinjlim_{\Gamma} \SA_{\Gamma}$ over finite subsets $\Gamma$ with respect to the canonical embeddings $ \SA_{\Gamma} \to \SA_{\Gamma'}$ for $\Gamma \subset \Gamma'$. This is a normed $*$-algebra. If $\CA \in \SAl$ belongs to $\SA_{\Gamma}$, we say that $\CA$ is {\it strictly localized} or {\it supported} on $\Gamma$. The completion of the algebra $\SAl$ with respect to the operator norm gives the $C^*$-algebra of quasi-local observables $\SAql$. $\SAql$ belongs to the class of Uniformly Hyperfinite (UHF) $C^*$-algebras. It depends on the on-site Hilbert space $V$. The groups of unitary elements in $\SAl$ and $\SAql$ will be denoted $\SUl$ and $\SUql$, respectively.

In the following, we let $\Lambda$ be the set $\ZZ \subset \RR$ embedded into a one-dimensional Euclidean space. For a subset $X \subset \RR$, we let $\SAl_{X}$ be $\SAl_{X \cap \Lambda} := \varinjlim_{\Gamma} \SA_{\Gamma \subset (X \cap \Lambda)}$.  We let $\SAql_X\subset\SAql$ be the norm-completion of the subalgebra $\SAl_X\subset\SAl$. When $X = \RR_{\geq 0}$ (resp. $\RR_{<0}$) we denote $\SAql_{X}$ by $\SAql_{\geq 0}$ (resp. $\SAql_{<0}$). The groups of unitary elements in $\SAl_X$ and $\SAql_X$ will be denoted $\SUl_X$ and $\SUql_X$, respectively.

Given two lattice systems with on-site Hilbert spaces $\HilbV_1$ and $\HilbV_2$ and their algebras of quasilocal observables $\SAql_1$ and $\SAql_2$, we define their stacking as the lattice system with the on-site Hilbert space $\HilbV_1\otimes\HilbV_2$. The corresponding algebra of quasi-local observables is the tensor product $\SAql_1\otimes\SAql_2$. The tensor product is unambiguously defined because UHF algebras are nuclear \cite{bratteli2012operator}. The stacking operation is commutative and associative up to a canonical isomorphism.

\subsection{Locality-Preserving Automorphisms}\label{sec:LPAs}

Symmetries of quantum spin systems are realized by automorphisms\footnote{In this paper, by automorphisms of $*$-algebras we always mean $*$-automorphisms. In the case of $C^*$-algebras, $*$-automorphisms automatically preserve the norm \cite{bratteli2012operator}.} of the algebra of observables. Usually, they are required to preserve some notion of locality that depends on the metric on the lattice. One way to characterize the locality of an automorphism $\alpha:\SAql\ra\SAql$ is by how hard it is to approximate $\alpha(\CA)$, $\CA\in\SAl$ in the norm by local observables with a slightly bigger support than $\CA$. More precisely, for any $r>0$ we may consider a non-negative but possibly infinite quantity
\beq \label{eq:autlocality}
\sup_{\Gamma} \sup_{\CA \in \SAql_{\Gamma}} \inf_{\CB \in \SAql_{\Ball_{\Gamma}(r)}} (\|\alpha(\CA) - \CB\|/\|\CA\|),
\eeq
where the supremum is taken over all closed intervals $\Gamma \subset \RR$ and $\Ball_{\Gamma}(r)$ is the set of points $x \in \RR$ such that $|x-y| \leq r$ for some $y \in \Gamma$. We say that $\alpha$ is locality-preserving\footnote{Note that this definition of locality excludes reflection symmetry and time-reversal symmetry of a 1d spin chain.} if there exists a function $f:\RR_{\geq 0} \to \RR_{\geq 0}$ with $\lim_{r \to \infty} f(r) = 0$ that upper-bounds the expression  (\ref{eq:autlocality}). In this case, we also say that $\alpha$ has $f(r)$-tails.

If $\alpha$ is such that $f(r)$ can be chosen to vanish for sufficiently large $r$, then there exists $R>0$ such that $\alpha(\CA) \subset \SA_{\Ball_X(R)}$ for any $X\subset\RR$ and any $\CA \in \SA_X$. Such automorphisms are called {\it Quantum Cellular Automata} (QCAs). They form a group with respect to composition.
An example of a QCA is an automorphism of the form
\beq
\prod_{p=-\infty}^\infty \Ad_{\CU_p},
\eeq
where $\Ad_{\CU}$ is an automorphism $\Ad_{\CU}(\CA):=\CU \CA \CU^*$ and local unitary observables $\CU_p\in \SUl$, $p\in\ZZ$, are supported on non-intersecting intervals with  uniformly bounded diameters. Following \cite{ranard2022converse}, we call such automorphisms {\it block-partitioned unitaries}. A composition of a finite number of block-partitioned unitaries is called a (finite-depth unitary) {\it circuit}. Circuits form a group with respect to composition. Every  circuit is a QCA, but not every QCA is a circuit. For example, a  QCA which shifts the whole spin system by one site to the right (i.e. an automorphism $\tau$ of $\SAl$ uniquely defined by the condition that for all $j\in\ZZ$ it maps $\SA_j\simeq \BoundedOper(\HilbV)$ to $\SA_{j+1}\simeq \BoundedOper(\HilbV)$ in the obvious manner) is not a circuit \cite{GNVW}.

The problem of classifying QCAs modulo circuits has been posed and solved in \cite{GNVW}. One can define a homomorphism from the group of QCAs to $\ZZ[\{\log p_i\}_{i \in J}]$, where $\{p_i\}_{i\in J}$ is the set of all positive primes dividing $\dim \CV$. The group of circuits is precisely the kernel of this homomorphism. We call this homomorphism the {\it GNVW index}. It is surjective, and QCAs whose GNVW indices generate $\ZZ[\{\log p_i\}_{i \in J}]$ can be constructed as follows. Let us fix an isomorphism $V\simeq \left(\bigotimes_{i\in J} V_{p_i}\right)\otimes V'$, where $V_{p_i}$ is a Hilbert space of dimension $p_i$ and $V'$ is a Hilbert space of dimension $\dim V/\prod_{i\in J} p_i$. Then $\SAql$ is identified with a tensor product of $|J|+1$ UHF algebras, and for any $i\in J$ there is a well-defined QCA which shifts the sub-algebra corresponding to $i$ by one site to the right and acts as the identity automorphism of the other factors. We will call such an automorphism a  generalized translation. Generalized translations (for a fixed isomorphism 
$V\simeq \left(\bigotimes_{i\in J} V_{p_i}\right)\otimes V'$) form an abelian group isomorphic  to $\ZZ[\{\log p_i\}_{i \in J}]$.

\begin{remark}\label{rem:semidirect} 
Surjectivity of the GNVW index means that the group of QCAs is an extension of $\ZZ[\{\log p_i\}_{i \in J}]$ by the group of circuits. In fact, the above construction of QCAs generating $\ZZ[\{\log p_i\}_{i \in J}]$ is a homomorphism from $\ZZ[\{\log p_i\}_{i \in J}]$ to the group of QCAs and thus exhibits the latter as a semi-direct product of the group of circuits and $\ZZ[\{\log p_i\}_{i \in J}]$. This identification depends on the choice of the isomorphism $V\simeq \left(\bigotimes_{i\in J} V_{p_i}\right)\otimes V'$.
\end{remark}

The GNVW index is additive under the stacking of systems. For example, although the shift automorphism $\tau$ is not a circuit, the automorphism $\tau\otimes\tau^{-1}$ acting on two copies of the same system has a vanishing GNVW index and thus is a circuit. 

Ref. \cite{ranard2022converse} explored the question about when a Locality-Preserving Automorphism (LPA) \footnote{Note that in \cite{ranard2022converse} Locality-Preserving Automorphisms were called Approximately Locality-Preserving Unitaries (ALPU).} can be described by a local Hamiltonian evolution with a time-dependent Hamiltonian. This question is analogous to the question of which QCAs are circuits. Indeed, a block-partitioned unitary can be thought of as an evolution automorphism induced by a (rather special) time-independent finite-range Hamiltonian, while a circuit is induced by a piecewise-constant time-dependent 
finite-range Hamiltonian. The authors of Ref.\cite{ranard2022converse} proved that an obstruction to realizing an LPA by Hamiltonian evolution arises from a generalized GNVW index which again takes values in $\ZZ[\{\log p_i\}_{i \in J}]$ (Theorem 5.8 in \cite{ranard2022converse}). Moreover, they showed  that this index is in some sense complete.

LPAs form a group. Indeed, according to Lemma 3.8.iv in \cite{ranard2022converse}, the inverse of an automorphism with $f(r)$-tails has $4 f(r)$-tails. It is also easy to see that if $\alpha$ has $f(r)$-tails and $\beta$ has $g(r)$-tails, then $\alpha\circ\beta$ has $h(r)$-tails, where $h(r)=f(r/2)+g(r/2)+f(r/2) g(r/2)$. We denote the group of LPAs by $\LPA$. We also introduce its various subgroups. We let $\LPAsp$, (resp. $\LPAsm$ ) be the subgroup of LPAs which preserve $\SAql_{\geq 0}$ (resp. $\SAql_{< 0}$) and act trivially on $\SAql_{< 0}$ (resp. $\SAql_{\geq 0}$). We let $\LPAz \subset \LPA$ be the subgroup of automorphisms of the form $\Ad_{\CU}$ for a unitary $\CU \in \SAql$. Note that $\LPAsp$ and $\LPAsm$ act on $\LPAz$ by conjugation. We denote the subsets $\LPAz \LPAsp$ and $\LPAz \LPAsm$ by $\LPAp$ and $\LPAm$, respectively. $\LPAp$ and $\LPAm$ are subgroups of $\LPA$.

The following property will be crucial in what follows.
\begin{lemma} \label{lma:alphadecomposition}
A locality-preserving automorphism $\alpha$ has a trivial GNVW index if and only if it admits a decomposition $\alpha = \alpha_{<0} \alpha_0 \alpha_{\geq 0}$ for some $\alpha_{<0} \in \LPAsm$, $\alpha_{0} \in \LPAz$, $\alpha_{\geq 0} \in \LPAsp$.
\end{lemma}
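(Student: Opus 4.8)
\noindent\emph{Proof plan.} I would treat the two implications separately; the implication ``decomposes $\Rightarrow$ trivial index'' is short, while the converse is where the content of the GNVW classification enters. For the easy direction, use that the GNVW index extends to a homomorphism $\mathrm{ind}\colon\LPA\to\ZZ[\{\log p_i\}_{i\in J}]$ (the generalized index of \cite{ranard2022converse}), and check that it vanishes on each of the three subgroups appearing in the decomposition. On $\LPAz$: a conjugation $\Ad_\CU$ by a quasilocal unitary is a norm-limit of conjugations by local unitaries, each of which is a circuit, so $\Ad_\CU$ lies in the kernel of the discretely-valued index. On $\LPAsm$ (and symmetrically $\LPAsp$): the index can be computed from the behaviour of the automorphism near a single cut and is independent of where that cut is placed (conservation of ``flow''); an element of $\LPAsm$ acts as the identity on $\SAql_{<0}$, hence as the identity near any cut far to the left, so its index is $0$. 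Therefore $\mathrm{ind}(\alpha_{<0}\alpha_0\alpha_{\geq0})=\mathrm{ind}(\alpha_{<0})+\mathrm{ind}(\alpha_0)+\mathrm{ind}(\alpha_{\geq0})=0$.

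For the converse, assume $\mathrm{ind}(\alpha)=0$ and consider the pair of commuting subalgebras $\mathcal B_L:=\alpha(\SAql_{<0})$ and $\mathcal B_R:=\alpha(\SAql_{\geq0})$. They generate $\SAql$, and each is the relative commutant of the other in $\SAql$ (because $\SAql_{<0}$ and $\SAql_{\geq0}$ are complementary tensor factors of the UHF algebra $\SAql$, whose relative commutants are each other since UHF algebras have trivial center, and $\alpha$ preserves relative commutants). Since $\alpha$ and $\alpha^{-1}$ are locality-preserving, for every $\epsilon>0$ there is an $s(\epsilon)$ with $\SAql_{\geq s(\epsilon)}\subset_\epsilon\mathcal B_R\subset_\epsilon\SAql_{\geq -s(\epsilon)}$ in the sense of near-inclusions of operator algebras; thus $\mathcal B_R$ is a ``half-line-like'' subalgebra attached to the cut at $0$. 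The heart of the argument is the claim that, when $\mathrm{ind}(\alpha)=0$, there is a quasilocal unitary $\CU\in\SAql$ localized near the cut with $\Ad_\CU(\mathcal B_R)=\SAql_{\geq0}$ (and then automatically $\Ad_\CU(\mathcal B_L)=\SAql_{<0}$ by passing to relative commutants). Granting this, $\gamma:=\Ad_\CU\circ\alpha$ is an LPA preserving both $\SAql_{<0}$ and $\SAql_{\geq0}$; writing $\SAql=\SAql_{<0}\otimes\SAql_{\geq0}$ and restricting, $\gamma(a\otimes1)\in\SAql_{<0}\otimes1$ and $\gamma(1\otimes b)\in1\otimes\SAql_{\geq0}$ commute and multiply to $\gamma(a\otimes b)$, so $\gamma=\gamma_L\otimes\gamma_R$. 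Locality-preservation of $\gamma$ (together with the conditional expectations onto $\SAql_{<0}$ and $\SAql_{\geq0}$) then forces $\alpha_{<0}:=\gamma_L\otimes\mathrm{id}\in\LPAsm$ and $\alpha_{\geq0}:=\mathrm{id}\otimes\gamma_R\in\LPAsp$, and since $\Ad_{\CU^*}\circ(\gamma_L\otimes\mathrm{id})=(\gamma_L\otimes\mathrm{id})\circ\Ad_{\CW}$ with $\CW:=(\gamma_L^{-1}\otimes\mathrm{id})(\CU^*)\in\SAql$, setting $\alpha_0:=\Ad_\CW\in\LPAz$ yields $\alpha=\alpha_{<0}\alpha_0\alpha_{\geq0}$.

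Proving the claim about the disentangling unitary $\CU$ is the step I expect to be the main obstacle. For a QCA it is the classical GNVW computation: the overlap of $\mathcal B_R$ and $\SAql_{\geq0}$ is confined to a bounded interval $[-R,R]$, inside which the two algebras restrict to matrix algebras of dimensions $d_{\mathrm{left}},d_{\mathrm{right}}$ with $\mathrm{ind}(\alpha)=\tfrac12\log(d_{\mathrm{right}}/d_{\mathrm{left}})$, so $\mathrm{ind}(\alpha)=0$ makes these match and furnishes a \emph{local} unitary in $\SAql_{[-R,R]}$ (or a slightly larger interval) conjugating one algebra to the other. For a general locality-preserving $\alpha$ there is no sharp overlap region, so $\CU$ must instead be built by successive approximation: correct the near-inclusion on intervals of growing radius by local unitaries with errors governed by the tail function $f$, and take the resulting norm-convergent infinite product. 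Making this convergence and the final exact equality work for an arbitrary $f$ with $f(r)\to0$, with no assumed decay rate, is precisely where one leans on and adapts the structure theory of locality-preserving automorphisms of \cite{ranard2022converse}, and it is the only place the hypothesis $\mathrm{ind}(\alpha)=0$ is genuinely used.
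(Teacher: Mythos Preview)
Your plan for the easy direction is essentially what the paper does (citing Proposition~5.14 of \cite{ranard2022converse}). For the converse, your route genuinely differs from the paper's. You seek a single disentangling unitary $\CU$ rotating $\alpha(\SAql_{\geq 0})$ exactly onto $\SAql_{\geq 0}$, then factor $\Ad_\CU\circ\alpha$ as a tensor product across the cut; the relative-commutant and tensor-factorization arguments you give after that point are correct. The paper instead invokes Theorem~5.9 of \cite{ranard2022converse}, which represents an index-zero LPA as a convergent infinite product of block-partitioned layers $\{v^{(k,a)}_n\}$ with $\|v^{(k,a)}_n-1\|$ decaying in $k$; it then sorts the unitaries in each layer into left, right, and middle pieces according to their supports relative to the cut (with recursive bookkeeping so the left and right stacks never re-overlap as $k$ grows), and the product of the finitely many middle pieces in each layer converges in norm to the required $\CU\in\SUql$, yielding $\alpha=\alpha_{<0}\alpha_{\geq 0}\Ad_\CU$ directly.

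Your approach is conceptually cleaner, but the step you flag as the obstacle really is one: producing a quasilocal $\CU$ achieving the \emph{exact} equality $\Ad_\CU(\mathcal B_R)=\SAql_{\geq 0}$ from near-inclusions governed by an arbitrary $f(r)\to 0$ is precisely the structural content of Theorem~5.9, and your successive-approximation sketch does not supply it. The paper sidesteps this by treating that theorem as a black box and only reorganizing its output, which is why its proof is short.
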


\begin{proof}
By Proposition 5.14 from \cite{ranard2022converse}, if $\alpha$ admits such a decomposition, then its GNVW index is trivial. The converse follows from Theorem 5.9 of  \cite{ranard2022converse}. Let $\{v^{(k,a)}_n\}_{n \in \ZZ, k \in \NN, a \in \{1,2\}} \in \SUl$ be the set of unitary observables constructed in this proof. We define the sets $S_+^{(k,a)} \subset \ZZ$ recursively as follows. We let $S^{(1,1)}_+$ be the set of $n \in \ZZ$, such that the support of $v^{(1,1)}_n$ has no overlap with $\RR_{<0}$. For $k \geq 1$, we let $S^{(k,2)}_+$ be the set of $n \in \ZZ$, such that the support of $v^{(k,2)}_n$ has no overlap with the left-most point of the supports of $v^{(k,1)}_m$ for $m \in S^{(k,1)}_+$. For $k > 1$, we let $S^{(k,1)}_+$ be the set of $n \in \ZZ$, such that the support of $v^{(k,1)}_n$ has no overlap with the left-most point of the supports of $v^{(k-1,2)}_m$ for $m \in S^{(k-1,2)}_+$. The sets $S_-^{(k,a)} \subset \ZZ$ are defined similarly, with $\RR_{<0}$ replaced by $\RR_{\geq 0}$ and the left-most replaced by the right-most. We let $S_0^{(k,a)} \subset \ZZ$ be the sets which are complements of $S_+^{(k,a)} \cup S_-^{(k,a)}$ in $\ZZ$. Let $\CV^{(k)} = \l \prod_{n \in S_0^{(k,2)}} v^{(k,2)*}_n \r \l \prod_{n \in S_0^{(k,1)}} v^{(k,1)*}_n \r$, and let $\CU^{(k)} = \CV^{(k)} ... \CV^{(2)} \CV^{(1)}$. By the bound on $\|v^{k,a)}_n -1 \|$, the unitaries $\CU^{(k)}$ converge to some $\CU \in \SUql$. Clearly, we get the decomposition $\alpha = \alpha_{<0} \alpha_{\geq 0} \Ad_{\CU}$ for $\alpha_{<0} \in \LPAsm$, $\alpha_{\geq 0} \in \LPAsp$. We can take $\alpha_0$ to be $\Ad_{\alpha_{\geq 0}(\CU)}$.
\end{proof}

\begin{remark}\label{rem:semidirectLPA}
As in Remark \ref{rem:semidirect}, by choosing an isomorphism $V\simeq \left(\bigotimes_{i\in J} V_{p_i}\right)\otimes V'$, we get a homomorphism from $\ZZ[\{\log p_i\}_{i \in J}]$ to the group of LPAs. Thus the group of LPAs is exhibited as a semidirect product of the group of LPAs with a vanishing GNVW index and $\ZZ[\{\log p_i\}_{i \in J}]$.
\end{remark}

The decomposition whose existence is ensured by Lemma \ref{lma:alphadecomposition} is not unique, and the next lemma characterizes this non-uniqueness.

\begin{lemma}\label{lma:nonuniqueness}
Suppose $\alpha\in\LPA$ has a trivial GNVW index. For any two decompositions $\alpha = \alpha_{<0} \alpha_0 \alpha_{\geq 0}={\tilde\alpha}_{<0} {\tilde\alpha}_0 {\tilde\alpha}_{\geq 0}$ as in Lemma \ref{lma:alphadecomposition}, we must have
$\tilde\alpha_{<0}\alpha_{<0}^{-1}\in\LPAz$ and $\tilde\alpha_{\geq 0}\alpha_{\geq 0}^{-1}\in\LPAz$.
\end{lemma}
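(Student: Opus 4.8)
The plan is to reduce the statement, by elementary group manipulations, to a single fact: the restriction of an inner automorphism $\Ad_{\CW}$, $\CW\in\SUql$, of $\SAql$ to a preserved half-chain subalgebra is again inner in that subalgebra. I would then prove this fact by a slice-map argument.

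\textit{Step 1 (reduction).} I would first show it suffices to prove $\tilde\alpha_{<0}\alpha_{<0}^{-1}\in\LPAz$: writing $\tilde\alpha_{<0}=(\tilde\alpha_{<0}\alpha_{<0}^{-1})\alpha_{<0}$ and substituting into $\alpha_{<0}\alpha_0\alpha_{\geq 0}=\tilde\alpha_{<0}\tilde\alpha_0\tilde\alpha_{\geq 0}$, one solves for $\tilde\alpha_{\geq 0}\alpha_{\geq 0}^{-1}$ and finds it equal to a product of three elements of $\LPAz$ (using that $\LPAz$ is a subgroup and that $\alpha_{<0}$ maps quasi-local unitaries to quasi-local unitaries). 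Now set $\beta:=\alpha_{<0}^{-1}\tilde\alpha_{<0}$; it lies in $\LPAsm$ (a subgroup) and $\tilde\alpha_{<0}\alpha_{<0}^{-1}=\alpha_{<0}\beta\alpha_{<0}^{-1}$, so it is enough to prove $\beta\in\LPAz$. Rearranging the same identity gives $\beta=\alpha_0\circ(\alpha_{\geq 0}\tilde\alpha_{\geq 0}^{-1})\circ\tilde\alpha_0^{-1}$ with $\alpha_{\geq 0}\tilde\alpha_{\geq 0}^{-1}\in\LPAsp$; since $\LPAsp$ normalizes $\LPAz$, this can be rewritten as $\beta=\Ad_{\CW}\circ\sigma$ for some $\CW\in\SUql$ and some $\sigma\in\LPAsp$.

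\textit{Step 2 (restriction to the half-chains and the key computation).} By definition $\beta$ acts trivially on $\SAql_{\geq 0}$ and maps $\SAql_{<0}$ onto itself, while $\sigma$ acts trivially on $\SAql_{<0}$; hence $\phi:=\beta|_{\SAql_{<0}}$ is an automorphism of $\SAql_{<0}$ and, for $a\in\SAql_{<0}$, $\phi(a)=\Ad_{\CW}(a)$. Writing $\SAql=\SAql_{<0}\otimes\SAql_{\geq 0}$, this says $\CW(a\otimes 1)=(\phi(a)\otimes 1)\CW$ for all $a\in\SAql_{<0}$. I would then choose a bounded linear functional $\omega$ on $\SAql_{\geq 0}$ with $\CW_\omega:=(\Id\otimes\omega)(\CW)\neq 0$ (possible since $\CW\neq 0$ and the algebraic tensor product is norm-dense in $\SAql$). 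Applying the slice map $\Id\otimes\omega$, which is an $\SAql_{<0}$-bimodule map, to the above identity yields $\CW_\omega\,a=\phi(a)\,\CW_\omega$ for all $a\in\SAql_{<0}$, and taking adjoints gives also $a\,\CW_\omega^*=\CW_\omega^*\,\phi(a)$. Combining the two, $\CW_\omega^*\CW_\omega$ and $\CW_\omega\CW_\omega^*$ commute with $\SAql_{<0}=\phi(\SAql_{<0})$, hence are scalars $\lambda 1$ and $\mu 1$ with $\lambda>0$; since $\SAql_{<0}$ is a UHF algebra it carries a faithful tracial state, which forces $\mu=\lambda$, so $\CW':=\lambda^{-1/2}\CW_\omega$ is a unitary in $\SAql_{<0}$ and $\CW_\omega a=\phi(a)\CW_\omega$ becomes $\phi=\Ad_{\CW'}$.

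\textit{Step 3 (conclusion).} Since $\beta$ preserves $\SAql_{<0}$ (acting there as $\Ad_{\CW'}$) and $\SAql_{\geq 0}$ (acting there trivially), we get $\beta=\Ad_{\CW'}$ on all of $\SAql$, so $\beta\in\LPAz$, and hence $\tilde\alpha_{<0}\alpha_{<0}^{-1}=\Ad_{\alpha_{<0}(\CW')}\in\LPAz$. The one substantive point is the computation in Step 2, and it really uses that $\CW$ is a quasi-local \emph{unitary} and not merely a locality-preserving automorphism of $\SAql_{<0}$: the latter need not be inner in $\SAql_{<0}$ (e.g.\ $\bigotimes_{j<0}\Ad_{u}$ for $u\in\SUl$ supported at one site and non-scalar is locality-preserving but not inner), so no ``peel off the sites one at a time'' argument can work; the slice-map identity is precisely what converts quasi-locality of $\CW$ into innerness of $\phi$. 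The routine checks I am suppressing are that $\LPAsp,\LPAsm,\LPAz$ are subgroups with the stated conjugation behaviour, that an automorphism of $\SAql$ preserving both tensor factors is the tensor product of its restrictions, and the density of the algebraic tensor product.
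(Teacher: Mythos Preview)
Your proof is correct and complete. It differs from the paper's in an interesting way, so let me briefly compare.

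The paper's argument sets $\beta_{<0}=\tilde\alpha_{<0}\alpha_{<0}^{-1}\in\LPAsm$ and $\beta_{\geq 0}=\tilde\alpha_{\geq 0}\alpha_{\geq 0}^{-1}\in\LPAsp$, observes (using that $\LPAsm$ and $\LPAsp$ commute) that $\beta_{<0}\beta_{\geq 0}=\Ad_\CU$ for some $\CU\in\SUql$, and then invokes a result of Lance on inner automorphisms of UHF algebras to factor $\CU=\CU_{<0}\CU_{\geq 0}$ into half-chain unitaries; the input to Lance's lemma is that $\Ad_\CU$ is \emph{asymptotically identity} on each half-chain. Your approach instead writes $\beta=\Ad_\CW\circ\sigma$ with $\sigma\in\LPAsp$ and proves directly, via the slice-map/intertwiner computation, that the restriction of $\Ad_\CW$ to $\SAql_{<0}$ is inner in $\SAql_{<0}$. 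This is essentially a self-contained proof of the special case of Lance's result that is needed here, and it is noteworthy that your argument uses only that $\SAql_{<0}$ is simple, unital, with trivial center and a faithful trace---the ``asymptotically identity'' hypothesis never enters. What the paper's route buys is brevity (one sentence citing Lance); what yours buys is self-containment and a slightly more elementary input. Your final remark explaining why a naive ``peel off sites'' argument cannot work, and why the global unitarity of $\CW$ is the crucial ingredient, is a nice observation that the paper does not make explicit.
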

\begin{proof}
Let $\beta_{<0}=\tilde\alpha_{<0}\alpha_{<0}^{-1}$ and $\beta_{\geq 0}=\tilde\alpha_{\geq 0}\alpha_{\geq 0}^{-1}$. Then $\beta_{<0} \in \LPAsm$, $\beta_{\geq 0} \in \LPAsp$, and $\beta_{<0} \beta_{\geq 0} = \Ad_{\CU}$ for some $\CU \in \SUql$.
Thus $\Ad_\CU$ induces an automorphism of $\SAql_{<0}$ as well as an automorphism of $\SAql_{\geq 0}$. On the other hand, $\Ad_\CU$ is asymptotically identity on $\SAql$ and therefore on $\SAql_{<0}$ and $\SAql_{\geq 0}$ (i.e. for any $\eps>0$ there is an $R>0$ such that 
$\|(\left(\Ad_\CU-\Id\right)\vert_{\SAql_{(-\infty,-R]}}\|\leq \eps$ and $\|(\left(\Ad_\CU-\Id\right)\vert_{\SAql_{[R,+\infty)}}\|\leq \eps$). Therefore by Lemma 3.1 from \cite{Lance}, $\CU=\CU_{<0}\CU_{\geq 0}$ for some $\CU_{<0}\in\SUql_{<0}$ and $\CU_{\geq 0}\in\SUql_{\geq 0}$. Then $\beta_{<0}=\Ad_{\CU_{<0}}$ and $\beta_{\geq 0}=\Ad_{\CU_{\geq 0}}$.
\end{proof}

\begin{corollary}\label{cor:intersection}
The intersection $\LPAp\cap\LPAm$ coincides with $\LPAz$.
\end{corollary}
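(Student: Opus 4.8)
The plan is to prove both inclusions of the claimed equality $\LPAp\cap\LPAm = \LPAz$. The inclusion $\LPAz\subseteq\LPAp\cap\LPAm$ is immediate: the identity automorphism belongs to both $\LPAsp$ and $\LPAsm$, so $\LPAz = \LPAz\cdot\Id$ sits inside $\LPAz\LPAsp = \LPAp$ and inside $\LPAz\LPAsm = \LPAm$. (Here I also use that $\LPAz$ is a subgroup of $\LPA$, being the image of the homomorphism $\Ad\colon \SUql\to\LPA$, $\CU\mapsto\Ad_\CU$.) So the real content is the reverse inclusion.

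Let $\alpha\in\LPAp\cap\LPAm$. Then I may write $\alpha = \Ad_{\CU_+}\,\alpha_{\geq 0}$ with $\CU_+\in\SUql$ and $\alpha_{\geq 0}\in\LPAsp$, and also $\alpha = \Ad_{\CU_-}\,\alpha_{<0}$ with $\CU_-\in\SUql$ and $\alpha_{<0}\in\LPAsm$. The first expression, rewritten as $\alpha = \Id\cdot\Ad_{\CU_+}\cdot\alpha_{\geq 0}$ with $\Id\in\LPAsm$, is already a decomposition of exactly the form appearing in Lemma \ref{lma:alphadecomposition}; in particular $\alpha$ has trivial GNVW index. For the second expression I would move the unitary factor to the right past $\alpha_{<0}$: since $\alpha_{<0}$ is an automorphism of $\SAql$, we have $\alpha_{<0}^{-1}(\CU_-)\in\SUql$, whence $\alpha = \alpha_{<0}\cdot\Ad_{\alpha_{<0}^{-1}(\CU_-)}\cdot\Id$, which is again a decomposition of the form in Lemma \ref{lma:alphadecomposition}, this time with $\Id\in\LPAsp$.

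Now I would invoke Lemma \ref{lma:nonuniqueness} for these two decompositions of $\alpha$. Comparing the $\LPAsm$-components yields $\alpha_{<0}\in\LPAz$, and comparing the $\LPAsp$-components yields $\alpha_{\geq 0}^{-1}\in\LPAz$, hence $\alpha_{\geq 0}\in\LPAz$. Consequently $\alpha = \Ad_{\CU_+}\,\alpha_{\geq 0}$ is a product of two elements of the group $\LPAz$ and therefore lies in $\LPAz$, completing the argument. I do not expect a serious obstacle here — the proof is pure bookkeeping with Lemmas \ref{lma:alphadecomposition} and \ref{lma:nonuniqueness} — but the one point requiring attention is that Lemma \ref{lma:nonuniqueness} demands the factors appear in the fixed left-to-right order $\LPAsm,\LPAz,\LPAsp$, which is precisely why the conjugation identity $\Ad_{\CU_-}\alpha_{<0} = \alpha_{<0}\,\Ad_{\alpha_{<0}^{-1}(\CU_-)}$ is needed to bring the second expression into the required shape.
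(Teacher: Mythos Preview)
Your proof is correct and follows essentially the same approach as the paper's: both argue the nontrivial inclusion by writing the given automorphism in two ways as a product in $\LPAsm\cdot\LPAz\cdot\LPAsp$ (with one outer factor equal to $\Id$ in each case) and then invoking Lemma~\ref{lma:nonuniqueness} to conclude the remaining outer factor lies in $\LPAz$. The only cosmetic difference is that the paper writes the $\LPAm$-element directly as ${\tilde\alpha}_{<0}\Ad_{\tilde\CU}$, whereas you start from $\Ad_{\CU_-}\alpha_{<0}$ and explicitly perform the conjugation to put it in the required order; your remark that this reordering is the one step needing care is apt.
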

\begin{proof}
Clearly, $\LPAz\subset \LPAp\cap\LPAm$. To show the opposite inclusion, suppose $\beta=\Ad_\CU \alpha_{\geq 0}={\tilde \alpha}_{<0} \Ad_{\tilde\CU}$ for some $\CU,\tilde\CU\in\SUql$. Then by Lemma \ref{lma:nonuniqueness} we have $\alpha_{\geq 0}\in\LPAz$ and ${\tilde\alpha}_{<0}\in\LPAz $, and thus $\beta\in\LPAz$.
\end{proof}

\begin{remark}\label{rem:normalsubgroups}
Lemma \ref{lma:alphadecomposition} implies that $\LPAp$ and $\LPAm$ are normal subgroups of the group of LPAs with a trivial GNVW index. It is also not difficult to show that conjugation with a generalized translation $\tau$ maps $\LPApm$ to itself. Indeed, by Theorem 5.15 \cite{ranard2022converse}, the automorphism $\tau \alpha \tau^{-1}$ for $\alpha \in \LPAsp$ has a trivial GNVW index. Therefore, by Lemma \ref{lma:alphadecomposition}, it admits a decomposition $\tau \alpha \tau^{-1} = \alpha_- \alpha_0 \alpha_+$. Lemma 3.1 from \cite{Lance} implies that $\alpha_- \in \LPAz$. Therefore both $\LPAp$ and $\LPAm$ are normal subgroups of $\LPA$.
\end{remark}

\subsection{Group cohomology}

Let $G$ be a group. For $n \geq 1$ and $k=0,...,n$, we define the maps $d_k:G^n \to G^{n-1}$ via
\beq
d_k (g_1,...,g_n) = 
\begin{cases}
    (g_2,...,g_n)\ \ \text{for}\ k=0,\\
    (g_1,...,g_k g_{k+1},...,g_n)\ \ \text{for}\ 0<k<n,\\
    (g_1,...,g_{n-1})\ \ \text{for}\ k=n.
\end{cases}
\eeq

For an abelian group $A$, we let $(C^{\bullet}(G,A),d)$ be the cochain complex with $C^{n}(G,A)$ being the abelian group of maps $f:G^n \to A$ and with the differential being defined by $d f := d_0^* f - d_1^*f + ... + (-1)^n d^*_n f$. The group cohomology $H^{\bullet}(G,A)$ of $G$ with coefficients in $A$ can be computed as the cohomology of this cochain complex.

For example, a 2-cocycle is a function  $\lambda:G\times G\ra A$ satisfying 
\beq
\lambda(g_2,g_3)+\lambda(g_1,g_2 g_3)-\lambda(g_1 g_2, g_3)-\lambda(g_1,g_2)=0,\quad\forall g_1,g_2,g_3\in G,
\eeq
and a 2-coboundary is a function of the form  $\lambda(g_1,g_2)=\phi(g_1)+\phi(g_2)-\phi(g_1 g_2)$ for some $\phi:G\ra A$. Elements of $H^2(G,A)$ label equivalence classes of central extensions of $G$ by $A$. A 3-cocycle is a function $\omega:G\times G\times G\ra A$ satisfying
\beq
\omega(g_2,g_3,g_4)+\omega(g_1,g_2g_3,g_4)+\omega(g_1,g_2,g_3)-\omega(g_1g_2,g_3,g_4)-\omega(g_1,g_2,g_3g_4)=0,\quad \forall g_1,...,g_4\in G,
\eeq
while a 3-coboundary is a function of the form
\beq
\omega(g_1,g_2,g_3)=\psi(g_2,g_3)+\psi(g_1,g_2g_3)-\psi(g_1g_2,g_3)-\psi(g_1,g_2),
\eeq 
for some $\psi:G\times G\ra A$.

Equivalently, one may define group cohomology of $G$ with coefficients in $A$ as the singular cohomology $H^n_{sing}(BG,A)$ of the classifying space $BG$ with coefficients in $A$. The classifying space $BG$ is a topological space whose homotopy type is uniquely determined by requiring that  $\pi_1(BG)=G$ and all other homotopy groups are  trivial. Such a space exists for every group $G$ \cite{Hatcher}. For example, one can take $B\ZZ=S^1$, and thus for $n>1$, $H^n(\ZZ,A)=H^n_{sing}(S^1,A)=0$, while  $H^1(\ZZ,A)=A$. 

For each $n\geq 0$ and each $A$ the assignment $G\mapsto H^n(G,A)$ defines a contravariant functor from the category of groups to the category of abelian groups. Thus for every homomorphism of groups $\phi:G'\ra G$ there is a pull-back homomorphism $\phi^*:H^n(G,A)\ra H^n(G',A)$ . On the level of cocycles, it is defined by pre-composing a cocycle of $G$ with $\phi\times \ldots\times \phi$.

Group cohomology is most useful for finite or countably infinite groups. For uncountably infinite groups it can be very large and hard to compute, see \cite{Milnor} and references therein. For Lie groups, whether compact or non-compact,  differentiable group cohomology defined in \cite{brylinski2000differentiable} and reviewed in Appendix A is better behaved and often more useful.

\section{Anomaly index for symmetries of quantum spin chains} \label{sec:anomalyindex}

In this section, we analyze the case of an abstract  group $G$ which acts on a 1d spin system by locality-preserving automorphisms. For any  homomorphism $\alpha:G \to \LPA$, we define an $H^3(G,U(1))$-valued index that we refer to as an {\it anomaly index}. In case when the image of $\alpha$ lands in the subgroup of LPAs with a trivial GNVW index, the definition is essentially the same as in \cite{else2014classifying}. In this special case the anomaly index is an obstruction to represent $\alpha$ as $\beta_-  \beta_+$ for some commuting 
homomorphisms $\beta_-:G \to \LPAm$ and $\beta_+:G\ra \LPAp$. One can interpret it as an obstruction to promote the symmetry to a spatially-localized symmetry  or a gauge symmetry. In particular, it gives an obstruction to define the action of the symmetry on a system with a boundary so that far away from the boundary it acts in the same way as $\alpha$.

When $G$ is a Lie group, it is more natural to consider homomorphisms which are smooth in some sense. The anomaly index can be defined in this situation as well, but this case is more technical and is discussed in Appendix \ref{app:LieGroup}.

\subsection{Anomalous symmetries in quantum mechanics}

Before discussing the case of spin chains, let us review symmetry actions in quantum mechanics. If $\CH$ is the Hilbert space of a quantum mechanical systems, then an action of a group $G$ on $\CH$ is a homomorphism $\alpha: G \to PU(\CH)={\rm Aut}(B(\CH))$ from $G$ to the projective unitary group $PU(\CH)$ of the Hilbert space $\CH$. It is always possible to lift it to a homomorphism $\tilde{G} \to U(\CH)$ from $\tilde{G}$ to the unitary group $U(\CH)$ of $\CH$, where $\tilde{G}$ is a central extension of $G$ by $U(1)$, but it might not be possible to lift it to a homomorphism $G \to U(\CH)$. The obstruction is given by an element of $H^2(G, U(1))$. 

To construct a cocycle for this class, we can choose a map $\CV:G \to U(\CH)$ (which is not necessarily a homomorphism) that lifts $\alpha: G \to PU(\CH)$. Since $(d^*_2 \alpha)(d^*_0 \alpha)(d^*_1 \alpha)^{-1} = \Id$ on $G^2$, the unitary $(d^*_2 \CV)(d^*_0 \CV)(d^*_1 \CV)^{-1}$ is proportional to the identity and therefore defines an element of $ C^2(G,U(1))$. It is easy to see that it is a 2-cocycle and therefore defines a class in $H^2(G,U(1))$. Moreover, the class is independent of the choice of $\CV$. Indeed, two different choices of $\CV$ are related by a function $G \to U(1)$. Therefore changing $\CV$ can affect the cocycle $(d^*_2 \CV)(d^*_0 \CV)(d^*_1 \CV)^{-1}$ at most by a coboundary.

A non-trivial class in $H^2(G,U(1))$ prohibits the existence of $G$-invariant pure states on the $C^*$-algebra $\CB(\CH)$. In particular, a $G$-invariant Hamiltonian on $\CH$ cannot have a $G$-invariant pure ground state. One can also regard a non-trivial class in $H^2(G,U(1))$ as an obstruction to gauging $G$, since it prohibits one from defining a non-trivial subspace of $G$-invariant vectors in $\CH$.

\subsection{Anomalous symmetries of quantum spin chains}

Let us now define an $H^3(G,U(1))$-valued index for a homomorphism $\alpha:G \to \LPA$.

First, we consider the case when all elements in the image of $\alpha$ have a trivial GNVW index. By decomposing $\alpha$ as in Lemma \ref{lma:alphadecomposition}, we can choose a map $\beta:G \to \LPAp$ (which is not necessarily a homomorphism) such that the image of $\alpha  \beta^{-1}$ belongs to $\LPAm$. Since $(d^*_2 \alpha)  (d^*_0 \alpha) (d^*_1 \alpha)^{-1}=\Id$, using Remark \ref{rem:normalsubgroups} we get $(d^*_2 \beta) (d^*_0 \beta) (d^*_1 \beta)^{-1}\in\LPAm$. Then by Corollary \ref{cor:intersection}, there exists a map $\CV:G^2 \to \SUql$, such that $(d^*_2 \beta) (d^*_0 \beta)  (d^*_1 \beta)^{-1} = \Ad_{\CV}$. The expression 
\beq\label{eq:omega}
\omega = (d^*_3 \CV) (d^*_1 \CV) (d^*_2 \CV)^{-1} ((d^*_3 d^*_2 \beta) (d^*_0 \CV))^{-1}
\eeq
is a multiple of the identity and therefore defines an element of $C^3(G,U(1))$. In Appendix \ref{app:CocycleConditionD}, we show that it satisfies the cocycle condition and therefore defines a class $[\omega] \in H^3(G,U(1)) \cong H^4_{sing}(BG,\ZZ)$. (The computation is identical to that in Appendix B of \cite{else2014classifying}). 

\begin{prop}
The class $[\omega] \in H^3(G,U(1))$ does not depend on the choice of $(\beta, \CV)$.
\end{prop}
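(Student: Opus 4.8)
The plan is to show invariance in two stages, mirroring the two-stage freedom in the construction: first fix $\beta$ and vary $\CV$, then vary $\beta$ itself. For the first stage, suppose $(\beta,\CV)$ and $(\beta,\CV')$ are two choices with the same $\beta$. Since both $\Ad_\CV$ and $\Ad_{\CV'}$ equal the cocycle $(d^*_2\beta)(d^*_0\beta)(d^*_1\beta)^{-1}\in\LPAz$, the ratio $\CV'\CV^{-1}$ lands in the center of $\SUql$ up to an element implementing the trivial automorphism; more precisely $\CV'(g_1,g_2)=\mu(g_1,g_2)\CV(g_1,g_2)$ for a function $\mu:G^2\to U(1)$ (here one uses that $\SAql$ is a UHF algebra, so its center is $\CCC$, hence a unitary implementing the identity automorphism is a phase). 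Substituting $\CV'=\mu\CV$ into formula (\ref{eq:omega}) and using that the $d^*_k$ act by precomposition (so $d^*_k(\mu\CV)=(d^*_k\mu)(d^*_k\CV)$ and the phases pull out), the new $\omega'$ differs from $\omega$ by $(d^*_3\mu)(d^*_1\mu)(d^*_2\mu)^{-1}(d^*_0\mu)^{-1}$, which is exactly $d\mu$ written multiplicatively; note the term $(d^*_3d^*_2\beta)(d^*_0\CV)$ contributes the factor $(d^*_0\mu)^{-1}$ since conjugating a phase by an automorphism leaves it fixed. Hence $[\omega']=[\omega]$ in $H^3(G,U(1))$.

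For the second stage, let $(\beta,\CV)$ and $(\tilde\beta,\tilde\CV)$ be two full choices. Both $\beta$ and $\tilde\beta$ are maps $G\to\LPAp$ with $\alpha\beta^{-1}$ and $\alpha\tilde\beta^{-1}$ landing in $\LPAm$; therefore $\gamma:=\tilde\beta\beta^{-1}=(\alpha\tilde\beta^{-1})^{-1}(\alpha\beta^{-1})$ is a map $G\to\LPAp\cap\LPAm=\LPAz$ by Corollary \ref{cor:intersection}, so $\gamma(g)=\Ad_{\CW(g)}$ for some $\CW:G\to\SUql$. The strategy is then a "homotopy" argument: interpolate between the two choices by replacing $\tilde\beta$ with $\gamma^{-1}\tilde\beta=\beta\cdot(\beta^{-1}\gamma^{-1}\beta)$... more directly, use the fact that we have already established stage-one invariance, so it suffices to exhibit \emph{one} convenient $\tilde\CV$ for the given $\tilde\beta$ and compute. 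Writing $\tilde\beta=\gamma\beta$ and expanding $(d^*_2\tilde\beta)(d^*_0\tilde\beta)(d^*_1\tilde\beta)^{-1}$ using that $\gamma$ is valued in the abelian-up-to-inner group $\LPAz$ (on which $\LPAp,\LPAm$ act by conjugation via their inner parts, Remark \ref{rem:normalsubgroups}), one gets $\Ad_{\tilde\CV}$ with $\tilde\CV$ expressible in terms of $\CV$, the $d^*_k\CW$, and conjugates of $\CW$ by $\beta$. Substituting into (\ref{eq:omega}) should produce $\omega$ times a coboundary built from $\CW$.

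\textbf{Main obstacle.} The genuinely delicate point is the bookkeeping in stage two: $\LPAz$ is not central in $\LPAp$ or $\LPAm$, so $\gamma$ and $\beta$ do not commute, and the rewriting of $(d^*_2\tilde\beta)(d^*_0\tilde\beta)(d^*_1\tilde\beta)^{-1}$ generates terms like $\beta(g_1)\bigl(\text{conjugation of }\CW(g_2)\text{-type unitaries}\bigr)\beta(g_1)^{-1}$, and one must track how these interact with the asymmetric term $(d^*_3d^*_2\beta)(d^*_0\CV)$ in (\ref{eq:omega}). I expect that, exactly as in Appendix B of \cite{else2014classifying}, everything collapses: the $\CW$-dependent factors assemble into the multiplicative coboundary of some $\psi:G^2\to U(1)$, after one is careful that all ambient unitaries act by automorphisms that fix $U(1)$-phases, and that the composition $\LPAp\cdot\LPAm$ decompositions are only unique up to $\LPAz$ (Lemma \ref{lma:nonuniqueness}), which is precisely the slack absorbed by $\CW$. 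A clean way to organize the computation is to reduce to stage one: show that for a suitable choice of $\tilde\CV$ associated to $\tilde\beta=\gamma\beta$, the cochain $\omega$ is literally unchanged when $\gamma$ is inner-trivial on each factor, and push all the $\CW$-dependence into the $\CV\mapsto\CV'$ freedom already handled. I would present the full term-by-term verification in the appendix and here only record the two reductions above.
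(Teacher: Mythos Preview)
Your two-stage approach is the same as the paper's: first fix $\beta$ and vary $\CV$ by a phase $\mu\in C^2(G,U(1))$ (your argument here is complete and correct), then write $\tilde\beta=\beta\,\Ad_{\CU}$ for some $\CU:G\to\SUql$, pick a canonical $\tilde\CV$, and defer the algebraic verification to an appendix. The paper adds one alternative you do not mention---since $\SUql$ is connected, $\CU$ deforms to the identity and the class cannot jump---and its appendix computation in fact shows that for the canonical $\tilde\CV$ the cochain $\omega$ is \emph{literally unchanged}, not merely shifted by a coboundary, which is the ``clean way'' you anticipate at the end.
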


\begin{proof}
For a fixed $\beta$, any two choices of $\CV$ differ by some 
$U(1)$-valued functions on $G^2$. Such functions manifestly modify $\omega$ by a coboundary. Therefore, the class $[\omega]$ can depend only on the choice of $\beta$.

Let us make a choice of $(\beta, \CV)$. By Lemma  \ref{lma:nonuniqueness}, any other choice of $\beta$ is given by $\tilde{\beta} = \beta  \Ad_{\CU}$ for some function $\CU:G \to \SUql$. We can take the corresponding $\CV$ to be 
\beq
\tilde{\CV} = (d^*_2\beta(\CU)) \CV (d^*_{1} \beta)((d^*_{0} \CU) (d^*_{1} \CU)^{-1}),
\eeq
One can argue that the class $[\omega]$ is independent of $\CU$ by continuity, as the former is discrete, while the latter can be continuously deformed to the identity thanks to the connectedness of the group $\SUql$. Instead, in Appendix \ref{app:IndependenceDiscrete} we show that the cocycle $\omega$ is the same for $(\beta,\CV)$ and $(\tilde{\beta}, \tilde{\CV})$ by a direct computation.\footnote{The same computation can be found in Appendix B of \cite{else2014classifying}.} Thus, the class $[\omega]$ is independent of the choice of $\beta$.

\end{proof}

\begin{remark}
Instead of restrictions to the right half-chain, one can use restrictions to the the left one and similarly define a class in $H^3(G,U(1))$. One can show that this class is the negative of the class defined above. We omit the details. 
\end{remark}

\begin{remark}\label{rem:onsiteactions}
If $G$ acts on-site (i.e. for all $g\in G$ and all $j\in\ZZ$, $\alpha(g)$ maps $\SAl_j$ to itself), then one can choose $\beta: G\ra\LPAp$ to be a homomorphism. Hence, $\CV(g,h)=1$ for all $g,h\in G$. Thus for on-site actions the anomaly index is trivial. 
\end{remark}

Consider now the general case when the image of $\alpha$ does not necessarily consist of automorphisms with a vanishing GNVW index. Let us fix an isomorphism $\HilbV \to \left(\bigotimes_{i \in J} \HilbV_{p_i}\right)\otimes \HilbV'$, where $\{p_i\}_{i\in J}$ is the set of all primes dividing  $\dim \HilbV$, $\HilbV_{p_i}$ is a $p_i$-dimensional Hilbert space, and $\HilbV'$ is a Hilbert space of dimension $\dim \HilbV / \prod_{i \in J} p_i$. By Remark \ref{rem:semidirectLPA}, $\LPA$ can be identified with a semi-direct product of the group of LPAs with a vanishing GNVW index and the group of generalized translations. Thus $\alpha:G\ra \LPA$ induces a homomorphism $\tau$ from $G$ to the abelian group of generalized translations. Consider a second copy of our system on which the symmetry element $g \in G$ acts by $\tau(g)^{-1}$. Stacking this system with the original system, we get a composite system on which $G$ acts via the automorphisms $\alpha(g) \otimes \tau(g)^{-1}$ each of which has a trivial GNVW index. We define the anomaly index of $\alpha$ to be the anomaly index of $\alpha\otimes\tau^{-1}:G\ra\LPA$ acting on the composite system.

To check that the anomaly index does not depend on the choice of $\tau$, consider a different choice of an isomorphism $\HilbV \to \left(\bigotimes_{i \in J} \HilbV_{p_i}\right)\otimes V'$ which gives a different homomorphism $\tau':G\ra \ZZ\left[\{\log p_i\}_{i\in J}\right]$. Let us consider the action of $G$ on four copies of the original system by $\alpha \otimes \tau^{-1} \otimes \tau'^{-1} \otimes \tau$. Since the anomaly index is additive under stacking, the anomaly indices of $\alpha\otimes\tau^{-1}$ and $\alpha\otimes \tau'^{-1}$ differ by the anomaly index of $\tau \otimes\tau'^{-1}$ minus the anomaly index of $\tau \otimes \tau^{-1}$. Since the circuits $\tau \otimes\tau'^{-1}$ and $\tau \otimes \tau^{-1}$ differ by a conjugation with a circuit (in fact, by a block-partitioned unitary), their anomaly indices are the same.

\begin{remark}\label{rem:functoriality}
Suppose $G$ acts on a 1d lattice system via a homomorphism $\alpha:G\ra\LPA$. Let $G'$ be some other group and $\phi:G'\ra G$ be a homomorphism, so that $G'$ acts on the same lattice system via $\alpha'=\alpha \phi$. Then it is clear that the anomaly index of $\alpha'$ is the pull-back of the anomaly index of $\alpha$ via $\phi$. We refer to this property of the anomaly index as functoriality. 
\end{remark}

\begin{example}
\label{example:z2}
Let $G = \ZZ/2$. We identify $\ZZ/2$ with the multiplicative group of $\{1,-1\}$. We have $H^3(\ZZ/2,U(1)) = \ZZ/2$. Let us describe an action of $\ZZ/2$ by circuits which has a nonzero anomaly index \cite{ChenLiuWen,levin2012braiding}. Consider a spin system with a two-dimensional on-site Hilbert space $\HilbV$. Let $X_j$, $Y_j$, $Z_j$ be observables corresponding to the Pauli matrices $\sigma^x$, $\sigma^y$, $\sigma^z$ acting on the site $j$, and let $\gamma$ be an automorphism of $\SAql$ uniquely defined by 
\begin{equation}\label{eq:Z2auto}
    \gamma(Z_j) = - Z_j,\quad \gamma(X_j) = Z_{j-1} X_j Z_{j+1}.
\end{equation}
It corresponds to a conjugation with a formal infinite product
\beq
\prod_{j \in \ZZ}\l e^{\frac{\pi i}{4} Z_j Z_{j+1}}\r \prod_{j \in \ZZ} X_j.
\eeq
Since $\gamma \gamma = \Id$, we get an action of $\ZZ/2$ on the spin system by letting $\alpha(-1)=\gamma,\alpha(1)=\Id$.  

Let us show that this symmetry is anomalous. We can choose a restriction to the right half-chain to be a conjugation with 
\beq
\prod_{j \in \ZZ_{\geq 0}}\l e^{\frac{\pi i}{4} Z_j Z_j+1{}}\r \prod_{j \in \ZZ_{\geq 0}} X_j.
\eeq
It squares to $\Ad_{Z_0}$. Therefore we can choose  $\CV(-1,-1)$ to be $Z_0$ and $\CV(1,1)=\CV(1,-1)=\CV(-1,1)=1$. The resulting class $[\omega] \in H^3(\ZZ/2,U(1))$ is non-trivial. 

\end{example}

\begin{example}\label{example:LSMfinite}
Let us describe an example where there is a ``mixed anomaly'' between translations and a finite on-site symmetry, even though each of them separately has a vanishing anomaly index.

Let $G_0$ be a finite group which acts on a finite-dimensional Hilbert space $\HilbV$ via a projective representation $\bpi:G_0\ra PU(\HilbV)$. We can regard $\bpi$ as a map $\bpi:G_0\ra U(\HilbV)$ such that
\beq
\bpi(g g')=\rho(g,g')\bpi(g)\bpi(g'),\quad \forall g,g'\in G_0,
\eeq
where $\rho\in C^2(G_0,U(1))$ is a 2-cocycle. We define an action of $G=G_0\times\ZZ$ on the spin chain with the on-site Hilbert space $\HilbV$ as a homomorphism $G\ra \alLPA$ defined by
\beq
(g,n)\mapsto \tau^n\circ \prod_{j\in\ZZ} \Ad_{\bpi_j(g)},\quad (g,n)\in G_0\times\ZZ,
\eeq 
where $\tau$ is a translation to the right by one site and $\bpi_j(g)\in\SUl_j$ is the unitary observable on site $j$ corresponding to $\bpi(g)\in U(\HilbV)$. 

By the universal coefficient formula and K\"unneth formula, $H^3(G,U(1))\simeq H^3(G_0,U(1))\oplus H^2(G_0,U(1))$. The map $H^3(G,U(1))\ra H^3(G_0,U(1))$ is the restriction (pull-back) map. The restriction of the anomaly index to $G_0$ vanishes, since the action of $G_0$ is on-site (see Remark \ref{rem:onsiteactions}). Therefore the anomaly index can be regarded as taking values in $H^2(G_0,U(1))$ and describes a ``mixed anomaly'' between $G_0$ and $\ZZ$. 

The map $H^3(G,U(1))\ra H^2(G_0,U(1))$ is induced by the slant product (see e.g. Chapter 3.B of \cite{Hatcher}) with the generator $[1]$ of the group homology $H_1(\ZZ,\ZZ)\simeq H_1^{sing}(S^1)\simeq\ZZ$. Writing $U(1)$ multiplicatively (i.e. identifying it with the unit circle in the complex plane), it maps a 3-cocycle 
$\omega(g,n; g',n'; g'',n'')\in C^3(G_0\times\ZZ,U(1))$ to a 2-cocycle
\beq\label{slantproduct}
(\omega/[1])(g,g')=\frac{\omega(e,1;g,0;g',0)\omega(g,0;g',0;e,1)}{\omega(g,0;e,1;g',0)},
\eeq
where $e\in G_0$ is the identity element.

Since the image of $G$ has a non-trivial GNVW index, to compute the 3-cocycle $\omega$ we stack the system of interest with its copy and let $G$ act on the second copy via a map which sends $(g,n)\in G$ to the automorphism $\tau^{-n}$. The action of $G$ on the composite system is via QCAs which have zero GNVW index and thus are circuits. Explicitly, the generator of the translation symmetry acts on the composite system by $\tau\otimes\tau^{-1}$, and one can write $\tau\otimes\tau^{-1}={\tilde S} S,$ where $S$ is the swap of the first and the second copy of the system (Fig. \ref{fig:shiftS}) and ${\tilde S}$ is the swap of the first copy and the second copy shifted to the right (Fig. \ref{fig:shiftStilde}). 

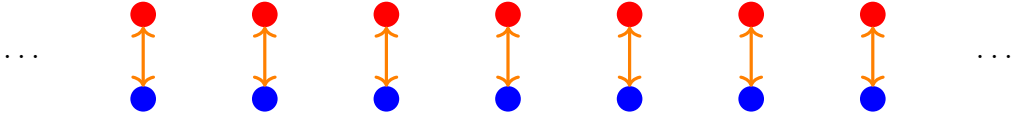
\begin{figure}
\begin{center}
\begin{tikzpicture}
    \begin{scope}[scale=1.6,yshift=0cm]
    \foreach \x in {-3,...,3}{
    \fill[blue] (\x,0) circle [radius=3pt];
    \fill[red] (\x,0.7) circle [radius=3pt];
   \draw[<->,orange,very thick] (\x,0.1)--(\x,0.6);
    }
    \node at (-4,0.35) {$\ldots$};
    \node at (4,0.35) {$\ldots$};
    \end{scope}
\end{tikzpicture}
\end{center}
\caption{The block-partitioned unitary $S$ which swaps the original system (blue) and its copy (red).}\label{fig:shiftS}
\end{figure}

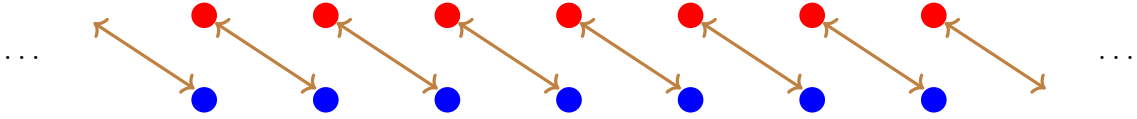
\begin{figure}
\begin{center}
\begin{tikzpicture}
    \begin{scope}[scale=1.6,yshift=0cm]
    \foreach \x in {-3,...,3}{
    \fill[blue] (\x,0) circle [radius=3pt];
    \fill[red] (\x,0.7) circle [radius=3pt];
   \draw[<->,brown,very thick] (\x+0.92,0.09)--(\x+0.09,0.64);
    }
    \draw[<->,brown,very thick] (-4+0.92,0.09)--(-4+0.09,0.64);
     \node at (-4.5,0.35) {$\ldots$};
     \node at (4.5,0.35) {$\ldots$};
    \end{scope}
\end{tikzpicture}
\end{center}
\caption{The circuit $\tilde S$.}\label{fig:shiftStilde}
\end{figure}

Thus $(g,n)\in G_0\times\ZZ$ acts on the composite system by a circuit 
\beq\label{eq:alphaLSM}
\alpha(g,n)=({\tilde S} S)^n\circ \left(\prod_{j\in\ZZ} \Ad_{\bpi_j(g)}\otimes\Id\right).
\eeq

A natural choice of an automorphism $\beta(g,n)\in \alLPAp$ such that $\alpha(g,n)\beta(g,n)^{-1}\in\alLPAm$ is a circuit 
\beq\label{eq:betaLSM}
\beta(g,n)=({\tilde S}_+ S_+)^n\circ \left(\prod_{j=1}^\infty \Ad_{\bpi_j(g)}\otimes \Id\right).
\eeq
where $S_+$ and ${\tilde S}_+$ are ``partial swaps'' whose action is schematically shown in Fig. \ref{fig:tSplusSplus}. For this choice of $\beta(g,n)$, one gets (up to scalar multiples):  $\CV(g,0;g',0)=\CV(g,1;g',0)=1$, $\CV(g,0;g',1)=\bpi_1(g)\otimes 1$. Substituting into (\ref{slantproduct}), we find
\beq
(\omega/[1])(g,g')=\rho(g,g').
\eeq
Thus the anomaly index for the $G_0\times\ZZ$-action regarded as an element of $H^2(G_0,U(1))$ is the class $[\rho]$ determined by the projective action of $G_0$ on the on-site Hilbert space.

\begin{figure}
\begin{center}
\begin{tikzpicture}
    \begin{scope}[scale=1.6,yshift=0cm]
    \foreach \x in {-3,...,3}{
    \fill[blue] (\x,0) circle [radius=3pt];
    \fill[red] (\x,0.7) circle [radius=3pt];
    }
    \foreach \x in {1,...,3}{
    \draw[<->,orange,very thick] (\x,0.1)--(\x,0.6);
    }
     \foreach \x in {0,...,3}{
      \draw[<->,brown,very thick] (\x+0.92,0.09)--(\x+0.09,0.64);
     }
      \node at (-4,0.35) {$\ldots$};
    \node at (4.5,0.35) {$\ldots$};  
    \end{scope}    
\end{tikzpicture}
\end{center}
\caption{The circuit ${\tilde S}_+ S_+.$ The orange arrows are implemented first.}\label{fig:tSplusSplus}
\end{figure}
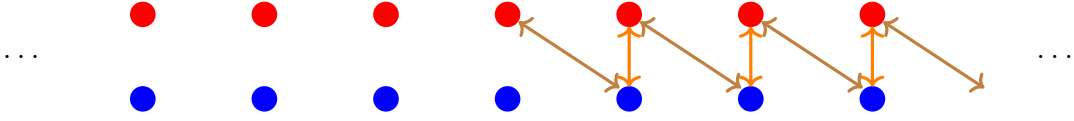

\end{example}

\section{Absence of symmetric gapped states for systems with an anomalous symmetry.} \label{sec:LSM}

A state of $\SAql$ is a positive linear functional $\SAql\ra \CCC$ of norm 1. A state is pure if it cannot be written as a convex linear combination of two different states in a nontrivial way. Equivalently, a state $\psi$ is pure if any positive linear functional $\rho:\SAql\ra\CCC$ which is majorized by $\psi$ (i.e. $\psi-\rho$ is positive) must be proportional to $\psi$. We denote the evaluation of a state $\psi$ on an observable $\CA \in \SA$ by $\lal \CA \ral_\psi \in \CCC$. Given a state $\psi$, we can produce a new state $\psi \circ \alpha$ by pre-composing it with an automorphism $\alpha$. That is,  $\lal \CA \ral_{\psi \circ \alpha} := \lal \alpha(\CA) \ral_\psi$.

For any state $\psi$, we let $(\CH_{\psi},\pi_{\psi},|\psi\ral)$ be its Gelfand–Naimark–Segal (GNS) triple: a Hilbert space $\CH_{\psi}$, a $*$-representation $\pi_{\psi}:\SAql \to \BoundedOper(\CH_{\psi})$, and a cyclic vector $|\psi\ral \in \CH_{\psi}$ such that $\lal \CA \ral_{\psi} = \lal \psi| \pi_{\psi}(\CA) | \psi\ral$. Such a triple is unique up to an isomorphism. The GNS representation $\pi_\psi$ is irreducible iff $\psi$ is pure \cite{bratteli2012operator}.

Two states $\psi_1$ and $\psi_2$ of a $C^*$-algebra are called unitarily equivalent if the corresponding GNS representations are unitarily equivalent. In this case we will write $\psi_1\sim\psi_2$. For example, for any $\CU\in \SUql$ and any state $\psi$ of $\SAql$, the states $\psi$ and $\psi\circ\Ad_\CU$ are unitarily equivalent. When $\psi_1\sim\psi_2$, one can identify  $\CH_{\psi_1}$ and $\CH_{\psi_2}$ and represent $\psi_2$ by a vector $|\psi_2\ral\in \CH_{\psi_1}$, in the sense that $\langle\CA\rangle_{\psi_2}=\langle\psi_2| \pi_{\psi_1}(\CA)|\psi_2\rangle$ for all $\CA\in\SAql$.

A finite-range Hamiltonian $\derH$ of a quantum spin system is an unbounded densely-defined derivation of the $C^*$-algebra $\SAql$ such that for any $\CA \in \SAl$ one has 
\beq
\derH(\CA) = \sum_{j \in \ZZ} [\chh_j,\CA]
\eeq
for some $R > 0$ and a collection of observables $\{\chh_j \in \SA_{\Ball_j(R)}\}_{j \in \ZZ}$ of uniformly bounded norm. We denote the vector space of such Hamiltonians by $\mfkDl$. Recall that a state $\psi$ is a ground state of $\derH$ if for any $\CA \in \SAl$ we have $\lal \CA^* \derH(\CA) \ral_{\psi} \geq 0$. This condition implies that $\psi$ is invariant under the one-parameter group of automorphisms $\alpha_\derH(t)$ generated by $\derH$ (which exists thanks to the Lieb-Robinson bound \cite{bratteli2012operator2}), that $|\psi\ral$ is invariant under the corresponding one-parameter group of unitaries $e^{it\hat H}$ acting on $\CH_\psi$, and that the spectrum of the Hamiltonian operator $\hat H$ on $\CH_\psi$ is non-negative. A ground state is called gapped if there is $\Delta > 0$ such that for any $\CA \in \SAl$ we have $\lal \CA^* \derH(\CA) \ral_{\psi} \geq \Delta \l \lal \CA^* \CA \ral_{\psi} - |\lal \CA \ral_{\psi}|^2 \r$.
This condition is equivalent to requiring the spectrum of $\hat H$ on the orthogonal complement to $|\psi\ral$ to be contained in $[\Delta,+\infty)$. In particular,  $|\psi\ral$ is the unique eigenvector of $\hat H$ with eigenvalue $0$. It is also not hard to show that every gapped ground state is pure.\footnote{Suppose $\psi$ is a gapped ground state of $\derH$ and $\rho:\SAql\ra\CCC$ is a nonzero positive linear functional majorized by $\psi$. Then there exists $T\in B(\CH_\psi)$ such that  $\rho(\CB^*\CA)=\langle\psi|\pi_\psi(\CB)^* T\pi_\psi(\CA)|\psi\rangle$ for any $\CA,\CB\in\SAql$  (see proof of Theorem 2.3.19 of \cite{bratteli2012operator}). This implies $T\in \pi_\psi(\SAql)'$, and since $e^{it\hat H}\in \pi_\psi(\SAql)''$ (Prop. 5.3.19 of \cite{bratteli2012operator2}), $T$ commutes with $e^{it\hat H}$. Let $|\psi'\rangle$ be the projection of the nonzero vector $T|\psi\rangle$ to the orthogonal complement of $|\psi\rangle$. Since $|\psi\rangle$ is the unique vector in $\CH_\psi$ which is invariant under $e^{it\hat H}$, $|\psi'\rangle$ must vanish. Therefore $T|\psi\rangle$ is proportional to $|\psi\rangle$, and thus $\rho$ is proportional to $\psi$.}

Although $\SAql\simeq \SAql_{<0}\otimes\SAql_{\geq 0}$, a pure state $\psi$ of $\SAql$ may not be unitarily equivalent to a state of the form $\psi_{<0}\otimes\psi_{\geq 0}$, where $\psi_{<0}$ (resp. $\psi_{\geq 0}$) is some pure state of $\SAql_{<0}$ (resp. $\SAql_{\geq 0}$). If such $\psi_{<0}$ and $\psi_{\geq 0}$ do exist, we say that $\psi$ satisfies the split property\footnote{In general, a not-necessarily-pure state $\psi$ is said to satisfy the split property if it is quasi-equivalent to a state of the form $\psi_{<0}\otimes\psi_{\geq 0}$ for some half-chain states $\psi_{<0}$ and $\psi_{\geq 0}$ \cite{Matsui}. Two states are called quasi-equivalent if the corresponding GNS representations are quasi-equivalent. Two representations $\pi_1$ and $\pi_2$ are called quasi-equivalent if every sub-representation of $\pi_1$ contains a sub-representation unitarily equivalent to a sub-representation of $\pi_2$ and vice versa.}. We will use the following result obtained by T. Matsui using the area law for gapped ground states proved in \cite{hastings2007area}:

\begin{theorem} \label{lma:splitproperty}
(Matsui, \cite{Matsui}.) Let $\psi$ be a gapped ground state of some $\derH \in \mfkDl$. Then there are pure states $\psi_{<0}$, $\psi_{\geq 0}$ of the algebras $\SAql_{<0}$, $\SAql_{\geq 0}$, respectively, such that the state $\psi_{<0} \otimes \psi_{\geq 0}$ of $\SAql$ is unitarily equivalent to $\psi$.
\end{theorem}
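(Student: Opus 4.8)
This is Matsui's theorem; here is the route I would take to reconstruct it. The plan is to combine three ingredients: purity of gapped ground states (established above), Hastings' one-dimensional area law \cite{hastings2007area}, and the operator-algebraic principle that a uniform entanglement-entropy bound across a cut forces the half-chain von Neumann algebra to be of type I --- which is the split property --- and that for a \emph{pure} state this already yields a unitary equivalence with a product of pure half-chain states.

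First I would form the GNS triple $(\CH_\psi,\pi_\psi,|\psi\rangle)$ of the gapped ground state $\psi$. Since $\psi$ is pure, $\pi_\psi$ is irreducible, so the commuting von Neumann algebras $\mathcal{R}_{<0}:=\pi_\psi(\SAql_{<0})''$ and $\mathcal{R}_{\geq 0}:=\pi_\psi(\SAql_{\geq 0})''$ together generate $\BoundedOper(\CH_\psi)$. The target is to prove that $\mathcal{R}_{<0}$ is a type I factor. Granting this, $\CH_\psi\cong\CH_1\otimes\CH_2$ with $\mathcal{R}_{<0}\cong\BoundedOper(\CH_1)\otimes 1$ and $\mathcal{R}_{<0}'\cong 1\otimes\BoundedOper(\CH_2)$; the inclusion $\mathcal{R}_{\geq 0}\subseteq\mathcal{R}_{<0}'$ is automatic, so $\pi_\psi$ is a tensor-product representation $\sigma_1\otimes\sigma_2$ of $\SAql_{<0}\otimes\SAql_{\geq 0}$ in which $\sigma_1,\sigma_2$ generate type I factors. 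Then the restrictions $\psi|_{\SAql_{<0}}$ and $\psi|_{\SAql_{\geq 0}}$ are normal states of type I factors, hence quasi-equivalent to pure states $\psi_{<0}$ and $\psi_{\geq 0}$ of $\SAql_{<0}$ and $\SAql_{\geq 0}$, and consequently $\psi$ is quasi-equivalent to $\psi_{<0}\otimes\psi_{\geq 0}$. Finally, a tensor product of pure states of nuclear $C^*$-algebras is pure, and two quasi-equivalent pure states are unitarily equivalent, which gives the claim.

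The analytic core is producing the type I structure, and here I would invoke the area law: since $\psi$ is a gapped ground state of a finite-range $\derH\in\mfkDl$, there is a constant $c$ with $S(\rho_{[-N,-1]})\leq c$ for all $N$, where $\rho_{[-N,-1]}$ is the reduced density matrix of $\psi$ on the interval $[-N,-1]$. The remaining task is to show that this $N$-uniform entropy bound forces $\mathcal{R}_{<0}$ to be type I. Heuristically, the eigenvalue distributions of the $\rho_{[-N,-1]}$ stabilize as $N\to\infty$ into a single trace-class density operator describing $\psi|_{\SAql_{<0}}$ in a type I representation; making this rigorous --- controlling the thermodynamic limit of the reduced states and extracting from the entropy bound a genuine normal state on a type I factor rather than, a priori, a type II or III folium on the hyperfinite UHF algebra $\SAql_{<0}$ --- is precisely the implication ``bounded entanglement entropy $\Rightarrow$ split property'', and I expect this to be the main obstacle. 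Everything else is routine: the factorization bookkeeping, normality of vector states, purity of product states on nuclear algebras, and the passage from quasi-equivalence to unitary equivalence of pure states.

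An alternative one might try in one dimension is to approximate $\psi$ by matrix product states of controlled bond dimension (whose restriction to a half-chain manifestly lives in a finite type I factor); but since a gapped ground state need not itself be a matrix product state, this would give only an approximate statement and would still require a limiting argument of essentially the same difficulty, so I would prefer the entropy route above.
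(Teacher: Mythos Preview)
The paper does not give its own proof of this statement; it is quoted as an external result of Matsui (built on Hastings' area law \cite{hastings2007area}) and used as a black box. Your sketch is a faithful outline of Matsui's argument --- area law gives a uniform entropy bound across the cut, this forces the half-chain von Neumann algebra in the GNS representation to be a type~I factor, and for a pure state that upgrades to unitary equivalence with a product of pure half-chain states --- and you correctly flag the step ``uniform entropy bound $\Rightarrow$ type~I'' as the substantive part that needs Matsui's analysis.
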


From this we deduce

\begin{lemma}\label{lma:unitaryeq}
Let $\psi$ be a gapped ground state of a Hamiltonian $\derH \in \mfkDl$, and let $\alpha \in \LPA$ be an automorphism with a trivial GNVW index such that $\psi \circ \alpha$ is unitarily equivalent to $\psi$. Then for any choice of the decomposition $\alpha = \beta_- \beta_+$ such that $\beta_\pm \in \LPApm$, the states $\psi \circ \beta_{\pm}$ are unitarily equivalent to $\psi$.
\end{lemma}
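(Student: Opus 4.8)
The plan is to reduce, via Matsui's split property (Theorem~\ref{lma:splitproperty}) together with the semidirect-product structure of $\LPA$ recorded in Section~\ref{sec:LPAs}, to a purely representation-theoretic statement about product states of pure states on the two half-chains.

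First, by Theorem~\ref{lma:splitproperty} I would fix pure states $\psi_{<0}$ of $\SAql_{<0}$ and $\psi_{\geq 0}$ of $\SAql_{\geq 0}$ with $\psi \sim \psi' := \psi_{<0}\otimes\psi_{\geq 0}$; since unitary equivalence of states is preserved under pre-composition with an automorphism, it suffices to prove all claims with $\psi'$ in place of $\psi$. Next, using $\LPAm=\LPAz\LPAsm$ and $\LPAp=\LPAz\LPAsp$, write $\beta_{-}=\Ad_{\CU_{-}}\circ\alpha_{<0}$ and $\beta_{+}=\Ad_{\CU_{+}}\circ\alpha_{\geq 0}$ with $\CU_{\pm}\in\SUql$, $\alpha_{<0}\in\LPAsm$, $\alpha_{\geq 0}\in\LPAsp$. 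Since $\chi\circ\Ad_{\CU}\sim\chi$ for every state $\chi$ and every $\CU\in\SUql$, and since pulling $\Ad_{\CU_{+}}$ to the left past $\alpha_{<0}$ merely replaces it by $\Ad_{\alpha_{<0}(\CU_{+})}$, one obtains $\psi'\circ\beta_{+}\sim\psi'\circ\alpha_{\geq 0}$, $\psi'\circ\beta_{-}\sim\psi'\circ\alpha_{<0}$, and $\psi'\circ\alpha=\psi'\circ\beta_{-}\circ\beta_{+}\sim\psi'\circ\alpha_{<0}\circ\alpha_{\geq 0}$. Combining the last relation with the hypothesis $\psi\circ\alpha\sim\psi$ (equivalently $\psi'\circ\alpha\sim\psi'$), the lemma becomes the implication: if $\psi'\circ\alpha_{<0}\circ\alpha_{\geq 0}\sim\psi'$ then $\psi'\circ\alpha_{\geq 0}\sim\psi'$ and $\psi'\circ\alpha_{<0}\sim\psi'$.

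Now I would use that $\alpha_{\geq 0}$ acts as the identity on $\SAql_{<0}$ and, being an element of the group $\LPAsp$, restricts to an automorphism of $\SAql_{\geq 0}$ (and dually for $\alpha_{<0}$); hence $\alpha_{\geq 0}(\CA\CB)=\CA\,\alpha_{\geq 0}(\CB)$ and $\alpha_{<0}(\CA\CB)=\alpha_{<0}(\CA)\,\CB$ for $\CA\in\SAql_{<0}$, $\CB\in\SAql_{\geq 0}$, and likewise for the composite $\alpha_{<0}\circ\alpha_{\geq 0}$. Put $\psi_{<0}':=\psi_{<0}\circ(\alpha_{<0}|_{\SAql_{<0}})$ and $\psi_{\geq 0}':=\psi_{\geq 0}\circ(\alpha_{\geq 0}|_{\SAql_{\geq 0}})$, which are again pure, being pushforwards of pure states along automorphisms. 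Since the linear span of products $\CA\CB$ with $\CA\in\SAql_{<0}$ and $\CB\in\SAql_{\geq 0}$ is dense in $\SAql$, this gives $\psi'\circ\alpha_{\geq 0}=\psi_{<0}\otimes\psi_{\geq 0}'$, $\psi'\circ\alpha_{<0}=\psi_{<0}'\otimes\psi_{\geq 0}$, and $\psi'\circ\alpha_{<0}\circ\alpha_{\geq 0}=\psi_{<0}'\otimes\psi_{\geq 0}'$. The hypothesis therefore reads $\psi_{<0}'\otimes\psi_{\geq 0}'\sim\psi_{<0}\otimes\psi_{\geq 0}$, and it remains to deduce $\psi_{<0}'\sim\psi_{<0}$ and $\psi_{\geq 0}'\sim\psi_{\geq 0}$; granting these, $\psi'\circ\alpha_{\geq 0}=\psi_{<0}\otimes\psi_{\geq 0}'\sim\psi_{<0}\otimes\psi_{\geq 0}=\psi'$ and, symmetrically, $\psi'\circ\alpha_{<0}\sim\psi'$, which finishes the proof.

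For this last step I would use that the GNS representation of a product state $\phi_{<0}\otimes\phi_{\geq 0}$ is $\pi_{\phi_{<0}}\otimes\pi_{\phi_{\geq 0}}$, so its restriction to $\SAql_{<0}\otimes 1$ is the amplification $\pi_{\phi_{<0}}\otimes\Id_{\CH_{\phi_{\geq 0}}}$, which is quasi-equivalent to $\pi_{\phi_{<0}}$. Restricting a unitary equivalence $\pi_{\psi_{<0}'}\otimes\pi_{\psi_{\geq 0}'}\cong\pi_{\psi_{<0}}\otimes\pi_{\psi_{\geq 0}}$ to the subalgebra $\SAql_{<0}$ then shows that $\pi_{\psi_{<0}'}$ and $\pi_{\psi_{<0}}$ are quasi-equivalent; since $\psi_{<0}'$ and $\psi_{<0}$ are pure, their GNS representations are irreducible, and quasi-equivalent irreducible representations are unitarily equivalent, so $\psi_{<0}'\sim\psi_{<0}$. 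The same argument restricted to $\SAql_{\geq 0}$ gives $\psi_{\geq 0}'\sim\psi_{\geq 0}$. I expect this descent of unitary equivalence to the two half-chain factors to be the only non-formal point; it is exactly the purity of $\psi_{<0}$ and $\psi_{\geq 0}$ provided by Theorem~\ref{lma:splitproperty} that allows one to upgrade the quasi-equivalence obtained by restriction to a genuine unitary equivalence, while everything else is bookkeeping with the semidirect decomposition of $\LPA$.
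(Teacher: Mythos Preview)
Your argument is correct and follows essentially the same route as the paper: reduce via Matsui's split property to a product state, strip off the inner parts of $\beta_\pm$, and then descend from unitary equivalence of $\psi'_{<0}\otimes\psi'_{\geq 0}$ with $\psi_{<0}\otimes\psi_{\geq 0}$ to unitary equivalence of the half-chain factors. The only difference is organizational: the paper first picks a fixed decomposition $\alpha=\alpha_{<0}\alpha_0\alpha_{\geq 0}$ from Lemma~\ref{lma:alphadecomposition} and then invokes Lemma~\ref{lma:nonuniqueness} to conclude that any $\beta_+$ equals $\Ad_{\CU}\alpha_{\geq 0}$, whereas you decompose the given $\beta_\pm$ directly; and you spell out the descent step (restriction to $\SAql_{<0}$ gives an amplification, hence quasi-equivalence, hence unitary equivalence by irreducibility) that the paper condenses into the phrase ``and therefore''.
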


\begin{proof}

By Lemma \ref{lma:alphadecomposition}, we can represent $\alpha$ as $\alpha_{< 0} \alpha_0  \alpha_{\geq 0}$ for some $\alpha_{<0} \in \LPAsm$, $\alpha_{0} \in \LPAz$, $\alpha_{\geq 0} \in \LPAsp$. By Theorem \ref{lma:splitproperty}, there are pure states $\omega_{<0}$, $\omega_{\geq 0}$ on the left and right half-chains, respectively, such that $\psi$ is unitarily equivalent to the tensor product $\omega = \omega_{<0} \otimes \omega_{\geq 0}$. It follows that $(\omega_{< 0} \circ \alpha_{<0}) \otimes (\omega_{\geq 0} \circ \alpha_{\geq 0})$ is unitarily equivalent to $\omega_{<0} \otimes \omega_{\geq 0}$, and therefore, the state $\omega_{\geq 0} \circ \alpha_{\geq 0}$ on $\SAql_+$ (resp. the state $\omega_{<0} \circ \alpha_{<0}$ on $\SAql_-$) is unitarily equivalent to $\omega_{\geq 0}$ (resp. $\omega_{<0}$). Hence $\omega \circ (\Id\otimes\alpha_{\geq 0})$ is unitarily equivalent to $\omega$ and therefore $\psi \circ (\Id\otimes\alpha_{\geq 0})$ is unitarily equivalent to $\psi$. Since any $\beta_+$ arising from a decomposition $\alpha=\beta_- \beta_+$ has the form  $\Ad_\CU \alpha_{\geq 0}$ for some $\CU\in\SUql$, this implies the statement of the lemma.
\end{proof}

We say that $\derH \in \mfkDl$ is invariant under $\alpha \in \LPA$ if for any $\CA \in \SAql$, we have $\alpha(\derH(\CA)) = \derH(\alpha(\CA))$. If $\alpha: G \to \LPA$, then $\derH$ is $G$-invariant if it is invariant under $\alpha(g)$ for any $g \in G$.

\begin{theorem} \label{thm:gLSM}
Suppose a group $G$ acts on a 1d quantum lattice system $\SAql$ via a homomorphism $\alpha: G \to \LPA$. Suppose also that there exists a $G$-invariant finite-range Hamiltonian $\derH$ whose ground state  $\psi:\SAql\ra \CCC$ is $G$-invariant and gapped. Then the anomaly index $[\omega] \in H^3(G,U(1))$ vanishes. 
\end{theorem}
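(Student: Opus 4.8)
The plan is to pass to the GNS representation of the $G$-invariant gapped ground state $\psi$ and show that the hypotheses force the half-line part $\beta$ of the symmetry to be ``projectively implemented'' on $\CH_\psi$, with the $2$-cochain measuring the failure of the implementers to multiply on the nose trivializing the $3$-cocycle $\omega$.

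First I would reduce to the case in which every $\alpha(g)$ has a trivial GNVW index. By definition the anomaly index of $\alpha$ equals that of the stacked action $\alpha\otimes\tau^{-1}$ on $\SAql\otimes\SAql$, where $G$ acts on the second copy through $g\mapsto\tau(g)^{-1}$ into the abelian group of generalized translations; on that copy these act as ordinary shifts of the sub-chains with on-site dimensions $p_i$ ($i\in J$) and $\dim V'$. Summing a shift-invariant gapped finite-range Hamiltonian over each of these sub-chains yields a $G$-invariant finite-range Hamiltonian on the second copy with a $G$-invariant gapped product ground state $\psi'$; then $\psi\otimes\psi'$ is a $G$-invariant gapped ground state of a $G$-invariant finite-range Hamiltonian on the composite chain, while $\alpha\otimes\tau^{-1}$ now lands among automorphisms of trivial GNVW index. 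Replacing the data, I may assume every $\alpha(g)$ has trivial GNVW index.

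Next, fix a pair $(\beta,\CV)$ as in the definition of $\omega$: $\beta\colon G\to\LPAp$ with $\alpha(g)\beta(g)^{-1}\in\LPAm$, and $\CV\colon G^2\to\SUql$ with $\beta(g)\beta(h)=\Ad_{\CV(g,h)}\beta(gh)$, so that (\ref{eq:omega}) reads $\CV(g,h)\,\CV(gh,k)=\omega(g,h,k)\,\beta(g)(\CV(h,k))\,\CV(g,hk)$ in $\SUql$ with $\omega(g,h,k)\in U(1)$. Since $\psi\circ\alpha(g)=\psi$ and $\alpha(g)=\bigl(\alpha(g)\beta(g)^{-1}\bigr)\beta(g)$ is a decomposition of the type in Lemma \ref{lma:unitaryeq}, that lemma gives $\psi\circ\beta(g)\sim\psi$ for every $g$. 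As $\psi$ is gapped it is pure, hence $\pi:=\pi_\psi$ is irreducible; unwinding the equivalences $\psi\circ\beta(g)\sim\psi$ produces unitaries $V_g$ on $\CH_\psi$ with $V_g\,\pi(\CA)\,V_g^{*}=\pi(\beta(g)(\CA))$ for all $\CA\in\SAql$.

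Finally I would push $\beta(g)\beta(h)=\Ad_{\CV(g,h)}\beta(gh)$ through $\pi$: both $\Ad_{V_gV_h}\circ\pi$ and $\Ad_{\pi(\CV(g,h))V_{gh}}\circ\pi$ equal $\pi\circ(\beta(g)\beta(h))$, so irreducibility of $\pi$ forces $V_gV_h=c(g,h)\,\pi(\CV(g,h))\,V_{gh}$ for some $c(g,h)\in U(1)$. Expanding $(V_gV_h)V_k$ and $V_g(V_hV_k)$ with this relation and the intertwiner identity $V_g\pi(B)=\pi(\beta(g)(B))V_g$, then replacing $\pi(\CV(g,h)\CV(gh,k))$ by the displayed form of (\ref{eq:omega}), one cancels $V_{ghk}$ and the common unitary $\pi\bigl(\beta(g)(\CV(h,k))\,\CV(g,hk)\bigr)$ to obtain $c(g,h)\,c(gh,k)\,\omega(g,h,k)=c(h,k)\,c(g,hk)$, i.e. $\omega=dc$; hence $[\omega]=0$ in $H^3(G,U(1))$. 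Beyond this bookkeeping the only substantive input is Lemma \ref{lma:unitaryeq}, which rests on Matsui's split property for gapped ground states (Theorem \ref{lma:splitproperty}); I expect the main obstacle to be setting up the reduction to trivial GNVW index cleanly and making sure the decomposition fed into Lemma \ref{lma:unitaryeq} is exactly the one built from the $\beta$ used to define $\omega$.
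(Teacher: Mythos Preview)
Your proposal is correct and follows essentially the same route as the paper: reduce to trivial GNVW index by stacking with an auxiliary chain carrying a translation-invariant gapped Hamiltonian, invoke Lemma \ref{lma:unitaryeq} (via Matsui's split property) to get unitary implementers $V_g$ of $\beta(g)$ on $\CH_\psi$, use irreducibility of $\pi_\psi$ to extract the scalar $2$-cochain $c$, and conclude $\omega = dc$ from associativity. The paper's argument differs only cosmetically (it writes $\pi_\psi(\CV(g,h)) = \kappa(g,h)\,U(g)U(h)U(gh)^{-1}$ with $\kappa = c^{-1}$ and takes a zero-range auxiliary Hamiltonian rather than summing shift-invariant ones over sub-chains).
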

\begin{proof}
Suppose first that all automorphisms in the image of $\alpha$ have a trivial GNVW index. For every $g\in G$, we choose some $\beta(g)\in \LPAp$ such that  $\alpha(g)\beta(g)^{-1}\in \LPAm$. By Lemma \ref{lma:unitaryeq}, there exist unitary operators $U(g) \in \UnitaryOper(\CH_{\psi})$ implementing the automorphisms $\beta(g)$:
\begin{equation}
    \pi_{\psi}(\beta(g)(\CA))=U(g)\pi_{\psi}(\CA) U(g)^{-1},\quad\forall g\in G,\forall \CA\in\SA.
\end{equation}
Then for any choice of $\CV$,
\begin{equation}    \pi_{\psi}\left(\CV(g,h)\CA\CV(g,h)^{-1}\right)=U(g)U(h)U(gh)^{-1}\pi_{\psi}(\CA)U(gh)U(h)^{-1}U(g)^{-1},\quad \forall g,h\in G,\forall\CA\in\SA.
\end{equation}
Thus $\pi_\psi(\CV(g,h))U(gh)U(h)^{-1} U(g)^{-1}$ is in the commutant of $\pi_\psi(\SAal)$ and thus in the commutant of $\pi_\psi(\SAql)$. Since $\psi$ is pure, this implies there exists a function $\kappa:G\times G\ra U(1)$ such that 
\begin{equation}
    \pi_{\psi}(\CV(g,h))=\kappa(g,h)U(g)U(h)U(gh)^{-1}.
\end{equation}
Then a direct computation shows that $\omega=\pi_{\psi}(\omega\cdot 1_\SA)=\pi_{\psi}(\delta\kappa\cdot 1_\SA)=\delta\kappa$. Thus, the anomaly index is trivial.

If some automorphisms in the image of $\alpha$ have a non-trivial GNVW index, we stack the system with its copy on which $G$ acts by generalized translations so that the action on the composite is by locality-preserving automorphisms with a vanishing GNVW index. Such an action of $G$ always exists, as explained in the definition of the anomaly index. It is easy to see that for any isomorphism $\HilbV \to  \left(\bigotimes_{i \in J} \HilbV_{p_i}\right)\otimes \HilbV'$ there exists a gapped finite-range (in fact, zero-range) Hamiltonian $\derH_0$ on the second copy which is invariant under arbitrary generalized translations. By taking the Hamiltonian for the composite to be $\derH\otimes 1+1\otimes\derH_0$, we reduce to the case when all automorphisms in the image of $\alpha$ have a trivial GNVW index.
\end{proof}

Thus, if a finite-range Hamiltonian $\derH$ is invariant under an anomalous symmetry $G$, then every ground state of $\derH$ either spontaneously breaks $G$ or is gapless.

\begin{remark}
Theorem \ref{thm:gLSM} holds for any $G$-invariant pure state $\psi$ satisfying the split property, not just for gapped ground states of finite-range Hamiltonians. For example, it was shown in \cite{brandao2015exponential} that any translationally-invariant pure state which has exponential decay of correlations has bounded entanglement entropy and therefore has the split property (see Theorem 3.1 in \cite{Matsui} for a precise statement).
\end{remark}

\begin{remark}
Theorem \ref{thm:gLSM} has implications for  finite-volume systems. As explained in Section 2.2 of \cite{tasaki2022lieb}, a sequence of finite systems of increasing size with finite-range Hamiltonians giving $\derH \in \mfkDl$ in the thermodynamic limit and with $G$-invariant ground states that are separated by a uniform (i.e. independent of the size) energy gap from the rest of the spectrum defines (at least one) $G$-invariant gapped ground state for $\derH$. Since according to  Theorem \ref{thm:gLSM} $G$-invariant gapped ground states for $\derH$ do not exist if the symmetry is anomalous, it follows that no such sequence of finite systems can exist either.
\end{remark}

\begin{example} Consider the anomalous $\ZZ/2$ action from Example  \ref{example:z2}. The generator of the symmetry maps the trivially gapped Hamiltonian (written in a physics convention as a formal sum of self-adjoint observables) 
\begin{equation}
H_0=-\sum_j X_j
\end{equation}
to 
\begin{equation}
H_1=-\sum_j X_j Z_{j-1}Z_{j+1}.
\end{equation}
Thus the derivation $H_0+H_1$  has an anomalous $\ZZ/2$ symmetry, and the ground state of the corresponding derivation $\derH_0+\derH_1=i\ad_{H_0}+i\ad_{H_1}$ is either gapless or breaks $\ZZ/2$ spontaneously. The conclusion is unchanged if we deform $H_0+H_1$ by adding a nearest-neighbor Ising interaction
\begin{equation}
H_J=-J\sum_j Z_j Z_{j+1}.   
\end{equation}
Using the Jordan-Wigner transformation one can map this model to a model of non-interacting fermions and verify that as one increases $J$ from $0$ to sufficiently large positive values, there is a phase transition from a gapless phase described by $c=1$ CFT to a gapped phase with a spontaneously broken $\ZZ/2$ \cite{clusterstate1,clusterstate2,clusterstate3,clusterstate4}.

In addition to the $\ZZ/2$ symmetry, the family of models with the Hamiltonian $H_0+H_1+H_J$ has an anti-unitary time-reversal symmetry $(\ZZ/2)^T$. The phase diagram described above has been interpreted as a consequence of a mixed anomaly between the unitary $\ZZ/2$ and the anti-unitary $(\ZZ/2)^T$ \cite{pivot}. We see here that the unitary $\ZZ/2$ is anomalous by itself, so the presence of $(\ZZ/2)^T$ is inessential. The same behavior should be observed for Hamiltonians which break $(\ZZ/2)^T$, such as
\beq
H=H_0+H_1+H_J+ a \sum_j Y_j (1-Z_{j-1}Z_{j+1}).
\eeq
Here $Y_j=iX_j Z_j$. This model cannot be mapped to a model of non-interacting fermions.

\end{example}

\begin{example}
Consider a projective on-site action of a finite group $G_0$ together with translations. As shown in Example \ref{example:LSMfinite}, in this case there is a ``mixed anomaly'' between the two, so a finite-range Hamiltonian invariant under $G=G_0\times\ZZ$ cannot have a unique gapped ground state. This is a version of the LSM theorem proved in \cite{ogatatachikawatasaki} by a different method.
\end{example}

\appendix

\section{Anomaly index for a Lie group symmetry} \label{app:LieGroup}

The goal of this appendix is to generalize the discussion in the main text to the case of a Lie group symmetry. There are several modifications one needs to make. First of all, to define a Lie group symmetry of a spin system, one needs to use a suitable class of automorphisms of the algebra of observables which forms a group and such that the notion of a smooth homomorphism is well-defined. One such class was introduced in \cite{LocalNoether}. Secondly, as we will see, one can define  anomaly indices for Lie groups which take values in differentiable group cohomology introduced in \cite{brylinski2000differentiable} rather than the usual group cohomology.\footnote{It was suggested in Ref. \cite{ogatatachikawatasaki} that Borel group cohomology introduced in Ref. \cite{Mackey,Moore} is a suitable receptacle for anomaly indices of topological group symmetries of lattice systems. In the context of indices of SPT phases, the use of Borel group cohomology was advocated in Refs. \cite{chen2013symmetry,DQ}. The physical meaning of Borel cocycles is unclear to us.} This is quite useful, since differentiable group cohomology of Lie groups is much easier to compute.

\subsection{Differentiable group cohomology of Lie groups}

Recall that a simplicial set $S_{\bullet}$ is a collection of sets $\{S_n\}_{n \in \NN_0}$ together with maps $d_{k}:S_{n} \to S_{n-1}$, $n>0$, $0\leq k\leq n$ and $s_{k}:S_n \to S_{n+1}$, $n \geq 0$, $0\leq k \leq n$ satisfying
\begin{align}
& d_j \circ d_k = d_{k-1} \circ d_j\ \ \text{for}\ j<k,\nonumber\\
& s_j \circ s_k = s_{k} \circ s_{j-1}\ \ \text{for}\ j>k,\\ \nonumber
& d_j \circ s_k = 
\begin{cases}
    s_{k-1} \circ d_{j}\ \ \text{for}\ j<k\\
    \Id\ \ \text{for}\ j \in \{k,k+1\}\\
    s_{k} \circ d_{j-1}\ \ \text{for}\ j>k+1.
\end{cases}
\end{align}
The set $S_n$ is called the set of $n$-simplices. The maps $d_k$ and $s_k$ are called the face and the degeneracy maps, respectively. For a leisurely introduction to simplicial sets the reader may consult \cite{Friedman}.

Similarly, a simplicial manifold $M_{\bullet}$ is a collection of manifolds $\{M_n\}_{n \in \NN_0}$ with the same maps and relations which are required to be smooth. If $\{U^{(a)}\}_{a \in I}$ is a good open cover of a manifold $X$, then $\{ U_n = \bigsqcup_{a_0,...,a_n} U^{(a_0)} \cup ... \cup U^{(a_n)}\}_{n \in \NN_0}$ gives an example of a simplicial manifold with the maps $d_k$, $s_k$ being natural submersions and embeddings.

Let $G$ be a group (with discrete topology). There is an associated simplicial set $B_{\bullet} G$ whose set of $n$-simplices $B_n G$ is given by $G^n$ and whose face $d_k:G^n \to G^{n-1}$, $k=0,...,n$ and degeneracy $s_k:G^{n} \to G^{n+1}$, $k=0,...,n$ maps are defined by
\begin{align}
& d_k (g_1,...,g_n) = 
\begin{cases}
    (g_2,...,g_n)\ \ \text{for}\ k=0\\
    (g_1,...,g_k g_{k+1},...,g_n)\ \ \text{for}\ 0<k<n\\
    (g_1,...,g_{n-1})\ \ \text{for}\ k=n,
\end{cases}
\\
& s_k (g_1,...,g_n) = 
\begin{cases}
    (\Id,g_1,...,g_n)\ \ \text{for}\ k=0\\
    (g_1,...,g_k, \Id, g_{k+1},...,g_n)\ \ \text{for}\ 0<k<n\\
    (g_1,...,g_{n},\Id)\ \ \text{for}\ k=n.
\end{cases}
\end{align}
For an abelian group $A$ (with discrete topology), we can form a cochain complex $C^{\bullet}(B_{\bullet}G,A)$ with $C^{k}(B_{\bullet}G,A)$ being arbitrary maps $f:G^k \to A$ and the differential being the alternating sum $d f := d^*_0 f - d^*_1 f + ... + (-1)^n d^*_n f$. The cohomology $H^{\bullet}(B_{\bullet}G,A)$ of this complex can be taken as the definition of the group cohomology $H^\bullet(G,A)$.

For a Lie group $G$ and an abelian Lie group $A$, there is a version of group cohomology with coefficients in $A$ defined in \cite{brylinski2000differentiable} which is more suitable for our purposes and coincides with the usual group cohomology when $G$ is discrete. We now let $B_{\bullet} G$ be a simplicial manifold whose manifold of $n$-simplices $B_n G$ is given by $G^n$ and whose face and degeneracy maps are defined in the same way as above. To define differentiable group cohomology theory, we need to choose a simplicial cover of $B_{\bullet} G$. It consists of a simplicial set $I_{\bullet}$ and open covers $\{U^{(a)}_p\}_{a \in I_p}$ of $G^p$ for each $p>0$ such that $d_k U^{(a)}_p \subseteq U^{(d_k a)}_{p-1}$ and $s_k U^{(a)}_p \subseteq U^{(s_k a)}_{p+1}$. Such a cover exists for any simplicial manifold \cite{BRMCL1}. We denote by $U_{p,\bullet}$ the simplicial manifold which corresponds to the cover of $G^p$ with
\beq
U_{p,q} = \bigsqcup_{a_0,...,a_q \in I_p} (U^{(a_0)}_{p} \cap ... \cap U^{(a_q)}_{p})
\eeq
and face and degeneracy maps denoted by $\delta$ and $\sigma$, respectively. We denote the collection of manifolds $\{U_{p,q}\}_{p,q \in \NN_0}$ together with the maps $d_k$, $s_k$, $\delta_k$, $\sigma_k$ by\footnote{Such a collection forms a bi-simplicial manifold.} $U_{\bullet,\bullet}$. We then define a double complex with values in $A$ as  
\beq
C^{p,q}(U_{\bullet,\bullet}, \underline{A}) := \bigoplus_{a_0,...,a_q \in I_{p}} C^{\infty}( U^{(a_0)}_{p} \cap ... \cap U^{(a_q)}_{p}, A).
\eeq
The differentials are defined by the alternating sums $d f := d^*_0 f - d^*_1 f + ... + (-1)^n d^*_n f$ and $\delta f := \delta^*_0 f - \delta^*_1 f + ... + (-1)^n \delta^*_n f$. 

Let $H^n(U_{\bullet,\bullet},A)$ be the total cohomology of this double complex in degree $n$. While in general $H^n(U_{\bullet,\bullet},A)$ depends on the choice of the simplicial cover,  varying $U_{\bullet,\bullet}$ one gets a directed system of groups labeled by the directed set of simplicial covers, and by taking the limit over all simplicial covers one gets an abelian group which we denote $H^n_{diff}(G,A)$. Alternatively, one can choose $U_{\bullet,\bullet}$ to be a good simplicial cover (i.e. a simplicial cover such that all $U_{p,q}$ are contractible). Then $H^n(U_{\bullet,\bullet},A)$ is independent of the choice of $U_{\bullet,\bullet}$ and is isomorphic to $H^n_{diff}(G,A)$ \cite{BRMCL1}. 

Clearly, when $G$ is discrete and $A$ is arbitrary, we have $H^\bullet_{diff}(G,A)=H^\bullet(G,A)$. On the other hand, if $G$ is arbitrary and $A$ is discrete, we have $H^\bullet_{diff}(G,A)=H_{sing}^\bullet(BG,A)$, where $BG$ is the classifying space of $G$-bundles. 

In this paper we are interested in the case $A=U(1)$ and $G$ arbitrary. To compute $H^n_{diff}(G,A)$ in this case, one can try using the long exact sequence in differentiable cohomology induced by the coefficient short exact sequence\footnote{Since this is a complex of abelian Lie groups rather than a complex  of abstract abelian groups, the notion of an exact sequence is defined differently. Namely, one says that a complex of abelian Lie groups is an exact sequence if it is exact as a sequence of abstract abelian groups as well as a smooth locally-trivial fiber bundle  \cite{brylinski2000differentiable}.} $0\ra \ZZ\ra\RR\ra U(1)\ra 0$. The corresponding long exact sequence implies an isomorphism $H^n_{diff}(G,U(1))\simeq H^{n+1}_{diff}(G,\ZZ)$ for all $n>0$ provided one knows that $H^n_{diff}(G,\RR)=0$ for all $n>0$. This is so if $G$ is compact or at least if $G$ has a compact Lie subgroup with the same Lie algebra as $G$ \cite{Mostow}. Thus, if $G$ is compact we get  $H^n_{diff}(G,U(1))\simeq H^{n+1}_{diff}(G,\ZZ)\simeq H_{sing}^{n+1}(BG,\ZZ).$ In particular, for $G=U(1)^k$, we can take $BG$ to be $\left(\CCC{\mathbb P}^\infty\right)^k$, and thus $H^n_{diff}(U(1)^k,U(1))\simeq H^{n+1}_{sing}(BU(1)^k,\ZZ)$ vanishes for all positive even $n$. 

Another case of interest is $G=G_0\times\ZZ$, where $G_0$ is a compact Lie group. Again one has $H^n_{diff}(G_0\times\ZZ,U(1))\simeq H^{n+1}_{sing}(BG_0\times B\ZZ,\ZZ)$ for all $n>0$, and therefore from the universal coefficient formula and the K\"unneth formula we get 
\beq
H^n_{diff}(G_0\times\ZZ,U(1))\simeq H_{sing}^{n+1}(BG_0,\ZZ)\oplus H_{sing}^n(BG_0,\ZZ)\simeq H^n_{diff}(G_0,U(1))\oplus H^{n-1}_{diff}(G_0,U(1)).
\eeq

The group $H^2_{diff}(G,U(1))$ has a simple meaning: its elements label central extensions of $G$ by $U(1)$ such that the surjection to $G$ is a smooth locally-trivial fiber bundle \cite{brylinski2000differentiable}.

\subsection{Almost local observables}

The algebra of local observables $\SAl$ can be completed with respect to a countable family of seminorms introduced in \cite{LocalNoether}:
\beq
\|\CA\|_{\alpha,j} :=\|\CA\|+ \sup_{r \in \RR_{\geq 0}} (1+r)^{\alpha} \inf_{\CB \in \SA_{\Ball_j(r)}} \|\CA - \CB\|,
\eeq
where $\alpha \in \NN$, $j \in \Lambda$ and $\Ball_j(r)$ is the interval $[j-r,j+r] \subset \RR$. The completion is a \Frechet\ $*$-algebra $\SAal$ of {\it almost local observables} . Loosely speaking, elements of $\SAal$ can be approximated in the norm by elements of $\SAl$ localized on balls of radius $r$ with an $\Or$ error. More precisely, for any $\CA\in\SAal$ and any $j\in\Lambda$ there exists a monotonically decreasing function $f_j:\RRp \to \RRp$ which is $\Or$ and satisfies $\|\CA-\CB\|\leq f_j(r)$ for some $\CB\in\SA_{\Ball_j(r)}$. We then say that $\CA$ is $f_j$-localized on $j$. A canonical choice for $f_j(r)$ is
\beq
f_j(\CA,r)=\inf_{\CB \in \SA_{\Ball_j(r)}} \|\CA - \CB\|.
\eeq
In contrast to $\SAql$, $\SAal$ depends on the metric on the lattice (induced by the embedding of $\ZZ$ into the Euclidean line). We denote the group of unitary elements of $\SAal$ by $\SUal$. 

For a subset $X \subset \RR$, the algebra $\SAal_X$ is defined as the completion of $\SAl_X$ with respect to the seminorms
\beq
\|\CA\|_{X,\alpha,j} :=\|\CA\|+ \sup_{r \in \RR_{\geq 0}} (1+r)^{\alpha} \inf_{\CB \in \SA_{\Ball_j(r) \cap X}} \|\CA - \CB\|,
\eeq
When $X = \RR_{\geq 0}$ (resp. $\RR_{<0}$), we denote $\SAal_{X}$ by $\SAal_{\geq 0}$ (resp. $\SAal_{<0}$). The group of unitary elements in $\SAal_X$ will be denoted $\SUal_X$.

\subsection{Almost local derivations}

An infinitesimal action of a Lie group symmetry corresponds to (unbounded skew-symmetric) derivations of the $C^*$-algebra $\SAql$, i.e. densely defined linear maps $\derH: \SAql \to \SAql$ which commute with $*$ and satisfy on their domain of definition the Leibniz rule $\derH(\CA \CB) = \derH(\CA) \CB + \CA \derH(\CB)$. Such derivations should form a Lie algebra and generate (non-infinitesimal) action of symmetry via integration. A class of derivations with a common dense domain which has these desirable properties was introduced in \cite{LocalNoether}. We review the definition below.

Let $\mathbb{B}_1$ be the set $\{[n,m]\subset \RR|\,n,m \in \ZZ, \,n\leq m 
\} \cup \{\emptyset\}$, and let $\text{diam}(Y)$ be the diameter of $Y \subset \RR$. A derivation $\derH$ is called finite-range if there exist $R>0$ and a collection of local anti-self-adjoint observables $\{\chh^{Y} \in \SAl_{Y}\}_{Y \in \mathbb{B}_1,\diam(Y)\leq R}$ such that $\|\chh^{Y}\| \leq C$ for some $C >0$ and for any $\CA \in \SAl$ we have
\beq
\derH(\CA) = \sum_{Y} [\chh^{Y}, \CA].
\eeq
The choice of $\chh^Y$ for a given derivation $\derH$ is not unique. However, as explained in \cite{LocalNoether}, there is a canonical choice of $\chh^{Y}$ defined by the condition $\lal \CA^* \chh^{Y} \ral_{\infty} = 0$ for any $\CA \in \SA_{Z}$, $Z \in \mathbb{B}_1$ such that $Z \subsetneq Y$, where $\lal\,\cdot\,\ral_{\infty}:\SAql \to \CCC$ is the unique tracial state of $\SAql$ (the infinite-temperature state). We denote such $\chh^Y$ by $\derH^Y$. We denote the space of finite-range derivations by $\mfkDl$. $\mfkDl$ is a real Lie algebra with the bracket $[\,\cdot\,,\,\cdot\,]:\mfkDl \times \mfkDl \to \mfkDl$ defined by $[\derH,\derG](\CA):= \derH(\derG(\CA)) - \derG(\derH(\CA))$ for $\CA \in \SAl$. 

We can complete the space $\mfkDl$ with respect to the norms
\beq \label{eq:Dnorms}
\|\derH\|_{\alpha} := \sup_{Y \in \mathbb{B}_1} (1+\text{diam}(Y))^{\alpha} \|\derH^{Y}\|.
\eeq
The completion $\mfkDal$ is a \Frechet\ space. We call its elements {\it almost local derivations}. It follows from the above definition that for any $\derH\in\mfkDal$ there exists a positive non-increasing function $f(r)=\Or$ such that for any $Y\in \Br_1$, $\|\derH^Y\|\leq f(\diam(Y))$. In this case we will say that $\derH$ is $f$-localized. 

In \cite{LocalNoether} various properties of almost local derivations are discussed. In particular, they are defined everywhere on $\SAal$, map $\SAal$ to itself, and the Lie bracket extends to $\mfkDal$ giving it the structure of a real \Frechet-Lie algebra.

For future use, we also introduce \Frechet-Lie sub-algebras $\mfkDalsp$, $\mfkDalsm$ defined as follows. The space $\mfkDalsp$ is the completion in the norms eq. (\ref{eq:Dnorms}) of a subspace of $\mfkDl$ that consists of derivations $\derH$ such that $\derH^Y = 0$ for $Y \not\subseteq \RRp$. Clearly, $\mfkDalsp$ is a closed sub-algebra of $\mfkDal$. The closed sub-algebra $\mfkDalsm$ is defined similarly, with $\RRp$ replaced with $\RRm$.

Let  $\mfkdal$ be a real closed subspace of $\SAal$ consisting of observables satisfying $\CA=-\CA^*$ and $\lal \CA \ral_{\infty}=0$. Such observables define derivations of $\SAal$ via the commutator.  Any derivation $\derH \in \mfkDal$ admits a canonical decomposition $\derH = \derH_{<0}+ \ad_{\chh_0} + \derH_{\geq 0}$ where $\derH^Y_{<0}$ (resp. $\derH^Y_{\geq 0}$) coincides with $\derH^Y$ for $Y \subset \RR_{<0}$ (resp. $Y \subset \RR_{\geq 0}$) and vanishes otherwise, while $\chh_0\in\mfkdal$ is the sum of all $\derH^Y$ for $Y$ intersecting $-1/2$. Thus, $\mfkDal$ is canonically identified with a subspace of $\mfkDal_{<0}\oplus\mfkDal_{\geq 0}\oplus\mfkdal$. It is easy to see that it is a closed subspace. Moreover, for a continuous (resp. smooth) map $\derH:M \to \mfkDal$ the pointwise decomposition produces continuous (resp. smooth) maps $\derH_{<0}: M \to \mfkDalsm$, $\chh_0: M \to \mfkdal$, $\derH_{\geq 0}:M \to \mfkDalsp$. 
\begin{remark}
It follows from Prop. C.2 of Ref. \cite{LocalNoether} that all derivations of the form $\ad_\chh, \chh\in\mfkdal$,  belong to $\mfkDal$. Thus the map $\ad:\mfkdal\ra\mfkDal, \chh\mapsto\ad_\chh$ identifies the \Frechet -Lie algebra $\mfkdal$ with a sub-algebra of the \Frechet -Lie algebra $\mfkDal$. However, although $\ad$ is continuous, $\ad(\mfkdal)$ is not a closed sub-algebra of $\mfkDal$. 
\end{remark}

The following lemma is often useful:

\begin{lemma}\label{lma:inner}
    Let $\chh\in C^0([0,1],\mfkdal)$. For any $\CU_0\in\SUal$, the differential equation 
    \beq\label{eq:LGAinnerdef}
    \frac{d\CU(s)}{ds}=\CU(s)\chh(s)
    \eeq
    has a unique solution in $C^1([0,1],\SUal)$ with the initial condition $\CU(0)=\CU_0$. If $\chh(s)$ is $f$-localized on $0$ for any $s\in [0,1]$, then $\CU(s)$ is $g$-localized on $0$ for some $g(r)=\Or$ which depends only on $f$ and $\|\chh\|=\sup_{t\in [0,1]} \|\chh(t)\|$. If $\chh$ smoothly depends on parameters in $\RR^n$, then $\CU(s)$ is also a smooth function of these parameters for any $s\in [0,1]$.
\end{lemma}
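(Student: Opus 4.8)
The plan is to solve the equation first in the $C^*$-algebra $\SAql$, where it is a linear Banach-space ODE, and then to promote the solution to $\SUal$ by expanding it as a time-ordered exponential whose terms are estimated seminorm by seminorm. Since $[0,1]$ is compact and $s\mapsto\chh(s)$ is norm-continuous, $b:=\|\chh\|<\infty$, so the vector field $(\CU,s)\mapsto\CU\,\chh(s)$ on $\SAql$ is continuous in $s$ and globally Lipschitz in $\CU$ with constant $b$; the Picard--Lindel\"of theorem gives a unique solution $\CU\in C^1([0,1],\SAql)$ with $\CU(0)=\CU_0$. Using $\dot\CU=\CU\chh$ and $\chh^*=-\chh$ one gets $\tfrac{d}{ds}(\CU\CU^*)=\CU\chh\CU^*-\CU\chh\CU^*=0$ and $\tfrac{d}{ds}(\CU^*\CU)=\CU^*\CU\,\chh-\chh\,\CU^*\CU$; since $\CU\CU^*$ and $\CU^*\CU$ both equal $1$ at $s=0$ and the constant $1$ solves both equations, uniqueness forces $\CU(s)$ to be unitary for all $s$. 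Finally, a map in $C^1([0,1],\SUal)$ becomes, after composing with the continuous inclusion $\SAal\hookrightarrow\SAql$, a $C^1$ solution in $\SAql$, so uniqueness at the $C^*$-level yields uniqueness in $C^1([0,1],\SUal)$.

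To see that the solution actually lies in $\SUal$ and to control its localization, I would write $\CU(s)=\CU_0\,W(s)$ with $W(s)=\sum_{n\ge0}T_n(s)$, $T_0=1$, $T_n(s)=\int_{0\le t_n\le\dots\le t_1\le s}\chh(t_n)\cdots\chh(t_1)\,dt_1\cdots dt_n$; the partial sums are the Picard iterates, the series converges in operator norm with $\|T_n(s)\|\le (bs)^n/n!$, and by $C^*$-uniqueness its sum is the propagator found above. The key input is a uniform product--localization estimate: if $\CA$ is $f_\CA$-localized on $0$ and $\CB$ is $f_\CB$-localized on $0$, then, approximating both by elements of the finite-dimensional (hence closed) subalgebra $\SA_{\Ball_0(r)}$ and writing $\CA\CB-\CA_r\CB_r=(\CA-\CA_r)\CB_r+\CA(\CB-\CB_r)$, one finds that $\CA\CB$ is localized on $0$ with decay $f_\CA(r)\,(\|\CB\|+f_\CB(r))+\|\CA\|\,f_\CB(r)$. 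After replacing $f$ by $\min(f,b)$ we may assume $f\le b$; an induction on $n$ then gives that $\chh(t_n)\cdots\chh(t_1)$ is $c_n f$-localized on $0$ with $c_n=2bc_{n-1}+b^{n-1}\le 2^nb^{n-1}$ uniformly in the $t_i$, so $T_n(s)$ is $\tfrac{c_ns^n}{n!}f$-localized on $0$, and summing over $n$ (with $s\le1$) shows $W(s)$ is $g$-localized on $0$ with $g(r)=Kf(r)$, $K=\sum_{n\ge1}c_n/n!$ depending only on $\|\chh\|$. In particular the series converges in every seminorm $\|\cdot\|_{\alpha,j}$, so $W(s)\in\SAal$, hence $W(s)\in\SUal$ by the unitarity established above, and termwise differentiation---legitimate because $\sum_n T_{n-1}(s)\chh(s)$ converges uniformly in each seminorm---gives $W\in C^1([0,1],\SAal)$ with $\dot W=W\chh$. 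Thus $\CU(s)=\CU_0\,W(s)\in\SUal$ solves the equation, and applying the product--localization bound once more to $\CU_0\cdot W(s)$ gives the localization of $\CU(s)$ on $0$ (with a decay controlled by the $\Or$-localization of $\CU_0$ together with $f$ and $\|\chh\|$; in the applications $\CU_0=1$, so $g$ depends only on $f$ and $\|\chh\|$).

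For the smooth dependence on parameters I would differentiate the same series: if $\chh$ depends smoothly on $x\in\RR^n$ with all $x$-derivatives continuous and, on compact parameter sets, uniformly localized and norm-bounded, then each $T_n(s,x)$ is a smooth $\SAal$-valued function of $x$, and by the Leibniz rule $\partial^\gamma_x T_n(s,x)$ is a finite sum of $n$-fold ordered integrals of $x$-derivatives of the $\chh(t_i,x)$, the number of summands growing only polynomially in $n$, so each obeys the same bound as above up to this harmless combinatorial factor; hence $\sum_n\partial^\gamma_x T_n(s,x)$ converges uniformly on compacts in every seminorm, and $W(s,\cdot)$---therefore $\CU(s,\cdot)=\CU_0\,W(s,\cdot)$---is smooth. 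I expect the main obstacle to be exactly the uniform product--localization estimate: one must verify that the localization function of an $n$-fold product of factors each localized like $f$ (with norm $\le\|\chh\|$) grows only like $C^n\|\chh\|^{n-1}f$, with constants independent of the arguments $t_i$ and of any parameters, so that the factorial volume of the ordered simplex makes the series converge and produces a decay $g$ depending only on $f$ and $\|\chh\|$; granting this, the remaining steps reduce to the standard Banach-space ODE theory and the \Frechet-algebra structure of $\SAal$ recalled above.
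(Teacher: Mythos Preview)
Your argument is correct and complete, but it follows a different route from the paper. The paper constructs the solution by truncating $\chh$ to finite-range approximants $\chh_n$ supported on $[-n,n]$, solving the resulting finite-dimensional ODEs to get $\CU_n\in\SUl_{[-n,n]}$, and showing that the $\CU_n$ form a Cauchy sequence whose limit solves the integral equation; localization is then obtained by applying the integral Gronwall inequality to the seminorm $r\mapsto f_0(\CU(s),r)$, using the product bound $f_0(\CA\CA',r)\le\tfrac32(\|\CA\|f_0(\CA',r)+\|\CA'\|f_0(\CA,r))$. You instead invoke Picard--Lindel\"of directly in the Banach space $\SAql$ and then upgrade to $\SAal$ via the Dyson series, estimating each ordered product seminorm by seminorm with an inductive constant $c_n\le 2^n b^{n-1}$. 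Both approaches are sound: the paper's Gronwall argument is shorter and gives the localization bound in a single stroke, while your Dyson-series expansion is more explicit and, in particular, makes the smooth parameter dependence transparent (the paper dispatches this in one sentence by pointing to the integral equation). Your treatment of the dependence on $\CU_0$---reducing to $\CU_0=1$ and then multiplying back---matches what the paper does; note that the stated $g$ depending only on $f$ and $\|\chh\|$ is really for $W(s)$, with the localization of $\CU_0$ entering only through the final product, exactly as you say.
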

\begin{proof}
    To establish the existence of a solution for any $\CU_0$ it is sufficient to do it for $\CU_0=1$. For any $n\in\NN$, let $\chh_n=\sum_{Y\in {\mathbb B}_1,Y\subset [-n,n]} \chh^Y$. For each $s\in [0,1]$, $\chh_n(s)$ is a local anti-self-adjoint observable supported on $[-n,n]$, and the sequence $\chh_n\in C^0([0,1],\mfkdal)$ converges to $\chh$. By the standard results from the theory of ODEs, for any $n$ there is a solution  $\CU_n\in C^1([0,1],\SUl_{[-n,n]})$ to the differential equation $d\CU_n/ds=\CU_n \chh_n$ with the initial condition $\CU_n(0)=1$. An equivalent integral equation is
    \beq\label{eq:integraleq}
    \CU_n(s)=1+\int_0^s \CU_n(t) \chh_n(t) dt.
    \eeq
    The differential equation for $\CU_n$ implies
    \beq
    \frac{d}{ds}\left(\CU_{n+1}\CU_n^{-1}\right)=\CU_{n+1}\left(\chh_{n+1}-\chh_n\right)\CU_n^{-1},
    \eeq
    integrating which we get
    \beq
    \CU_{n+1}(s)=\CU_n(s)\left(1+\int_0^s \CU_{n+1}(t)(\chh_{n+1}(t)-\chh_n(t)) \CU_n(t)^{-1} dt\right).
    \eeq
    Therefore 
    $$
    \|\CU_{n+1}-\CU_n\|=\sup_s \|\CU_{n+1}(s)-\CU_n(s)\|\leq \|\chh_{n+1}-\chh_n\|,
    $$
    and thus $\CU_n$ converges uniformly on $[0,1]$ to some $\CU\in C^0([0,1],\SUql)$. By passing to the limit $n\ra\infty$ in Eq. (\ref{eq:integraleq}), we conclude that
    \beq\label{eq:integraleqCU}
    \CU(s)=1+\int_0^s \CU(t)\chh(t) dt.
    \eeq
    Therefore $\CU$ is continuously differentiable and satisfies Eq. (\ref{eq:LGAinnerdef}) and the initial condition $\CU(0)=1$. By multiplying this solution by $\CU_0$, we get a solution for a general initial condition. Uniqueness follows from the differential equation satisfied by $\CU$ in the standard manner. 

    Now let us show that $\CU$ is $g$-localized on $0$ for some $g\in\Or$. Suppose that $\chh(s)$ is $f$-localized at $0$ for all $s$. For any $\CA\in\SAql$ and any $r\geq 0$ we let
    $f_j(\CA,r)=\sup_{\CB\in \SAl_{\Ball_j(r)}}\|\CA-\CB\|$. For a fixed $j$ and $r$, it is a continuous seminorm on $\SAql$. Further, it is easy to see that for all $r\geq 0$ one has $f_j(1,r)=0$ and $\forall \CA,\CA'$  $f_j(\CA\CA',r)\leq \frac{3}{2}(\|\CA\|f_j(\CA',r)+\|\CA'\|f_j(\CA,r))$. Then Eq. (\ref{eq:integraleq}) together with Jensen's inequality for Banach-valued functions implies
    \beq
    f_0(\CU(s),r)\leq \frac{3s}{2}f(r)+\frac{3}{2}\int_0^s \|\chh(t)\| f_0(\CU(t),r) dt.
    \eeq
    By the integral Gronwall inequality, we get 
    \beq
    f_0(\CU(s),r)\leq \frac{3s}{2}f(r)\exp\left(\frac{3}{2}\int_0^s \|\chh(t)\|dt\right)\leq \frac{3}{2}f(r)\exp\left(\frac{3}{2}\|\chh\|\right).
    \eeq

    Finally, smooth dependence of $\CU(s)$ on the parameters of $\chh$ follows from the integral equation (\ref{eq:integraleqCU}). 
    
\end{proof}

\subsection{Almost Local Automorphisms}

Next we define a class of automorphisms of $\SAql$ for which the notion of a smooth homomorphism makes sense. 
\begin{definition}
An almost local locality-preserving automorphism is as automorphism with $g(r)$-tails for some $g(r) = \Or$.
\end{definition}
An almost local LPA maps $\SAl$ to $\SAal$. Moreover, one can show that every such automorphism maps $\SAal$ to $\SAal$. More precisely, we have
\begin{lemma} \label{lma:alLPAmapsAal2Aal}
    If $\alpha\in\alLPA$ has $g(r)$-tails and $\CA$ is $f$-localized at $j$, then $\alpha(\CA)$ is $h$-localized at $j$ for $h=\Or$ which depends only on $f$, $g$ and $\|\CA\|$.
\end{lemma}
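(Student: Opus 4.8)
The plan is the standard ``split $\CA$ into a strictly local core plus a fast-decaying tail'' argument. Fix $r>0$ and set $s=r/2$. Since $\CA$ is $f$-localized at $j$, I would choose $\CB\in\SA_{\Ball_j(s)}$ with $\|\CA-\CB\|\le f(s)$; note that $\|\CB\|\le\|\CA\|+f(s)\le\|\CA\|+f(0)$, a bound depending only on $\|\CA\|$ and $f$ (using that $f$ is non-increasing). Because $*$-automorphisms of $C^*$-algebras are isometric, $\|\alpha(\CA)-\alpha(\CB)\|=\|\alpha(\CA-\CB)\|\le f(s)$.

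Next I would apply the locality-preserving property to the strictly local observable $\CB$. The set $\Gamma=\Ball_j(s)=[j-s,j+s]$ is a closed interval with $\CB\in\SAql_\Gamma$, so by the definition of $g(r)$-tails, applied with tail radius $t:=r-s=r/2$, there is $\CB'\in\SAql_{\Ball_\Gamma(t)}$ with $\|\alpha(\CB)-\CB'\|\le g(t)\,\|\CB\|$. Since $\Ball_\Gamma(t)=\Ball_j(s+t)=\Ball_j(r)$, the element $\CB'$ lies in $\SAql_{\Ball_j(r)}$ and hence can be approximated within any norm error $\eta>0$ by some $\CD\in\SA_{\Ball_j(r)}$. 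The triangle inequality then gives
\[
\inf_{\CD\in\SA_{\Ball_j(r)}}\|\alpha(\CA)-\CD\|\ \le\ f(r/2)+g(r/2)\,\bigl(\|\CA\|+f(0)\bigr).
\]

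Hence $\alpha(\CA)$ is $h$-localized at $j$ with $h(r):=f(r/2)+\bar g(r/2)\,(\|\CA\|+f(0))$, where $\bar g(r):=\sup_{r'\ge r}g(r')$ is the non-increasing majorant of $g$ (still $\Or$), so that $h$ is monotonically decreasing. Both summands are $\Or$ because $f$ and $\bar g$ are $\Or$ and the rescaling $r\mapsto r/2$ preserves polynomial decay, and $h$ manifestly depends only on $f$, $g$ and $\|\CA\|$. I do not expect a genuine obstacle: the argument is elementary. The two points requiring a little care are (i) that the $g$-tails estimate a priori yields an approximant $\CB'$ only in the \emph{quasi-local} algebra supported on $\Ball_j(r)$, which must be further approximated inside the strictly local algebra $\SA_{\Ball_j(r)}$, absorbing an error $\eta$ that can be taken arbitrarily small and so does not affect the $\Or$ conclusion; and (ii) keeping every constant in $h$ expressed through $\|\CA\|$ and the profiles $f,g$ rather than through $\CA$ itself, which is why one first records the crude bound $\|\CB\|\le\|\CA\|+f(0)$ before invoking the tails estimate (alternatively, replacing $\CB$ by its image under a norm-contracting conditional expectation onto $\SA_{\Ball_j(s)}$ gives the sharper $\|\CB\|\le\|\CA\|$).
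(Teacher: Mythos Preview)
Your proposal is correct and follows essentially the same approach as the paper: split $\CA$ into a strictly local piece $\CB$ supported on $\Ball_j(r/2)$ plus a remainder controlled by $f$, then apply the $g$-tails estimate to $\alpha(\CB)$ with tail radius $r/2$, and combine by the triangle inequality. The only cosmetic differences are that the paper uses $n=\lfloor r/2\rfloor$ and keeps the sharper factor $f(n)$ in the bound $\|\CB_n\|\le\|\CA\|+f(n)$, arriving at $h(r)=g(r/2)\|\CA\|+f(\lfloor r/2\rfloor)(1+g(r/2))$, whereas you use the uniform bound $\|\CB\|\le\|\CA\|+f(0)$ and replace $g$ by its non-increasing majorant; both yield an $\Or$ profile depending only on $f$, $g$, and $\|\CA\|$.
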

\begin{proof}
    For any $n\in\NN$ pick a $\CB_n\in\SA_{\Ball_j(n)}$ such that $\|\CA-\CB_n\|\leq f(n)$. Note that $\|\CB_n\|\leq f(n)+\|\CA\|$. Since for a fixed $j,r$ the map 
    \beq
    \CA\mapsto f_j(\CA,r)=\inf_{\CB\in\SA_{\Ball_j(r)}}\|\CA-\CB\|
    \eeq
    is a semi-norm satisfying $f_j(\CA,r)\leq \|\CA\|$, we get 
    \beq
    f_j(\alpha(\CA),r)\leq f_j(\alpha(\CB_n),r)+f(n).
    \eeq
    On the other hand, for any $r>n$ we have $f_j(\alpha(\CB_n),r)\leq g(r-n)\|\CB_n\|\leq g(r-n)(\|\CA\|+f(n))$.
    Therefore for $n=\lfloor r/2\rfloor$ we have 
    \beq\label{eq:hloc}
    f_j(\alpha(\CA),r)\leq g(r/2)\|\CA\|+f(\lfloor r/2\rfloor)(1+g(r/2)).
    \eeq
    We can take $h(r)$ to be the r.h.s. of (\ref{eq:hloc}).

\end{proof} 

Almost local LPAs form a subgroup of the group of all LPAs. We denote it $\alLPA$. The group of QCAs is contained in $\alLPA$. 

As explained in \cite{LocalNoether}, any $\derH \in \mfkDal$ can be represented by a collection of observables $\{\chh_j\}_{j \in \Lambda}$ each of which is $h$-localized on $j$ for some positive non-increasing function $h(r) = \Or$ such that for any $\CA \in \SAl$ we have $\derH(\CA) = \sum_{j \in \Lambda} [\chh_j,\CA]$. Using this representation and Lemma \ref{lma:alLPAmapsAal2Aal}, one can define an action of $\alLPA$ on $\mfkDal$. For an almost local LPA $\alpha$, the derivation $\alpha(\derH)$ is defined by $\alpha(\derH)(\CA) = \sum_j [\alpha(\chh_j),\CA]$ for any $\CA \in \SAl$. It is easy to see that this definition is independent of the representation $\{\chh_j\}_{j \in \Lambda}$.

In Section \ref{sec:LPAs} we mentioned that the GNVW index is, in a sense, the only obstruction for realizing an LPA by a Hamiltonian evolution. Let us make this precise for almost local LPAs. First, we need to describe a suitable class of Hamiltonian evolutions. 
\begin{definition}
Let $\derF\in C^0([0,1],\mfkDal)$. We say that a one-parameter family of automorphisms $\alpha(s):\SAal\ra\SAal$, $s\in [0,1]$, is generated by $\derF$ if $\alpha(0)=\Id$,  $\alpha(s)(\CA)$ is a differentiable function of $s$ for any $\CA\in\SAal$, and
\beq\label{eq:LGAdef}
\frac{d}{ds} \alpha(s)(\CA) = \alpha(s)(\derF(s)(\CA)),\quad \forall s\in [0,1].
\eeq
\end{definition}
It follows from the version of the Lieb-Robinson bound proved in \cite{HastingsKoma,nachtergaele2006propagation,nachtergaelesimsyoung} (see Prop. E.2 of \cite{LocalNoether}) that for any $\derF\in C^0([0,1],\mfkDal)$ there exists a unique $\alpha(s), s\in [0,1],$ which is generated by $\derF$. Given $\derF\in C^0([0,1],\mfkDal)$, we denote by $\alpha_\derF(s)$ the family of automorphisms generated by $\derF$.
\begin{definition}
We say that an automorphism $\beta:\SAal\ra\SAal$ is a Locally Generated Automorphism (LGA) if there exists $\derF\in C^0([0,1],\mfkDal)$ such that $\alpha=\alpha_\derF(1)$. In this case we also say that $\alpha$ is generated by $\derF$.
\end{definition}
\begin{remark}\label{rem:inner}
    Let $\chh\in C^0([0,1],\mfkdal)$. The LGA generated by the family of inner derivations $\derH={\rm ad}_\chh$ has the form $\Ad_{\CU(1)}$, where $\CU\in C^0([0,1],\SUal)$ is the unique solution of (\ref{eq:LGAinnerdef}) with the initial condition $\CU(0)=1$ (see Lemma \ref{lma:inner}). 
\end{remark}

It follows from Prop. E.1 in \cite{LocalNoether} that every LGA is an almost local LPA. More precisely, Prop. E.1 implies that if $\derF(s)$ is $f$-localized for all $s\in [0,1]$, then $\alpha_\derF(1)$ has $g(r)$-tails where $g(r)=\Or$ depends only on $f$. The continuity of the (generalized) GNVW index proved in \cite{ranard2022converse} implies that every LGA has a vanishing GNVW index. Conversely, we have the following result (which is a corollary of Theorems 5.6 and 5.8  of \cite{ranard2022converse}).
\begin{prop} \label{prop:GNVWforLGA}
Any almost local LPA $\beta$ with a vanishing GNVW index is locally generated.
\end{prop}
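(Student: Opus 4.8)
The plan is to reduce the statement to the completeness of the GNVW index proved in \cite{ranard2022converse} and then to upgrade the regularity of the generating Hamiltonian from quasi-local to almost local. Theorem~5.8 of \cite{ranard2022converse} already asserts that an LPA with vanishing generalized GNVW index is generated by a time-dependent quasi-local Hamiltonian, and its proof (resting on the structure result Theorem~5.6, which also underlies the unitaries $v^{(k,a)}_n$ used in the proof of Lemma~\ref{lma:alphadecomposition}) realizes $\beta$ as a strongly convergent composition $\beta=\lim_{N\to\infty}V_N\circ\cdots\circ V_1$ of block-partitioned unitaries $V_n=\prod_p\Ad_{\CU^{(n)}_p}$, where $\CU^{(n)}_p$ is supported on an interval of diameter $O(2^n)$ and $\|\CU^{(n)}_p-1\|\le\delta_n$, the rate $\delta_n$ being controlled by the tail function of $\beta$; the hypothesis of vanishing index is exactly what guarantees that no residual generalized translation survives in this limit (otherwise one gets instead the semidirect-product picture of Remark~\ref{rem:semidirectLPA}). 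The content of the present proposition is thus the bookkeeping statement that, when $\beta\in\alLPA$, the pieces $\CU^{(n)}_p$ are small enough that the generating Hamiltonian can be arranged to lie in $C^0([0,1],\mfkDal)$.

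Concretely, I would set $\chh^{(n)}_p:=\log\CU^{(n)}_p$, a skew-adjoint observable supported on the same interval with $\|\chh^{(n)}_p\|\le 2\delta_n$ (principal logarithm of a unitary close to the identity). Choose a partition $[0,1)=\bigsqcup_{n\ge 1}I_n$ into half-open intervals of length $w_n$ comparable to $2^{-n}$ with $\sum_n w_n=1$, and for each $n$ a function $g_n\in C^\infty(\RR)$ with $g_n\ge 0$, vanishing outside the interior of $I_n$, $\int g_n=1$, and $\|g_n\|_\infty\le 3w_n^{-1}$. Define $\derF(1):=0$ and, for $s\in I_n$, $\derF(s):=g_n(s)\sum_p\ad_{\chh^{(n)}_p}$. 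Because for $s$ ranging over $I_n$ all the generators $\derF(s)$ are nonnegative scalar multiples of one fixed derivation with mutually commuting local terms, the time ordering over $I_n$ is trivial and the $\derF$-flow across $I_n$ equals $\exp\!\bigl(\sum_p\ad_{\chh^{(n)}_p}\bigr)=\prod_p\Ad_{\CU^{(n)}_p}=V_n$, independently of the choice of $g_n$. Hence, by the existence and uniqueness of the flow generated by an element of $C^0([0,1],\mfkDal)$ (the Lieb--Robinson estimate, Prop.~E.2 of \cite{LocalNoether}), $\alpha_\derF(1)$ coincides with the strong limit $\lim_N V_N\circ\cdots\circ V_1=\beta$, once one knows that $\derF$ does lie in $C^0([0,1],\mfkDal)$.

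That last point, and the identification of the main obstacle, go as follows. The canonical local terms of $\derF(s)$ for $s\in I_n$ are supported on intervals of diameter $O(2^n)$ and have norm at most $3w_n^{-1}\cdot 2\delta_n$, so $\|\derF(s)\|_\alpha\le C_\alpha(1+2^n)^\alpha w_n^{-1}\delta_n\le C'_\alpha 2^{n(\alpha+1)}\delta_n$ for every $\alpha\in\NN$. The crux is the a priori bound $\delta_n=O\!\bigl(h(2^n)\bigr)$ with $h=\Or$, i.e.\ the fact that the scale-$n$ correction produced by the blending procedure of \cite{ranard2022converse} inherits superpolynomial smallness from the superpolynomial decay of the tails of $\beta$, uniformly in the block index $p$. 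This is the one place where "almost local" genuinely enters, and is the main technical obstacle: it requires revisiting the proofs of Theorems~5.6 and~5.8 of \cite{ranard2022converse} and tracking how the tail function $g=\Or$ of $\beta$ propagates through the iterated factorization. Granting it, $2^{n(\alpha+1)}\delta_n\to 0$ as $n\to\infty$ for every $\alpha$, so each seminorm $\|\derF(s)\|_\alpha$ is finite and tends to $0$ as $s\to 1$; within each $I_n$ continuity of $s\mapsto\derF(s)$ in every seminorm follows from continuity of $g_n$ together with its vanishing near the endpoints of $I_n$; therefore $\derF\in C^0([0,1],\mfkDal)$ and $\beta=\alpha_\derF(1)$ is locally generated.
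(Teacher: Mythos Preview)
Your strategy is the same as the paper's: invoke the structure theorems of \cite{ranard2022converse} to write $\beta$ as a convergent product of block-partitioned layers with norm governed by the tail function $g$, take logarithms to obtain local generators, and then check that the resulting time-dependent derivation lies in $C^0([0,1],\mfkDal)$ using $g=\Or$. The paper implements this slightly differently --- it takes the QCA approximants $\beta_m$ from Theorem~5.6 of \cite{ranard2022converse}, sets $\gamma_m=\beta_m\beta_{m-1}^{-1}$, runs all the $\gamma_m$-flows over the \emph{same} interval $[0,1]$, and shows that the generators $\derF_n$ of the partial products form a Cauchy sequence in $C^0([0,1],\mfkDal)$ --- rather than concatenating the layers over disjoint subintervals $I_n\subset[0,1]$ as you do.

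There are two genuine gaps in your write-up that the paper closes. First, the step you flag with ``Granting it'': the paper does not leave this to a re-reading of \cite{ranard2022converse} but quotes the explicit bounds. Theorem~5.6 there gives $\|\beta(\CA)-\beta_m(\CA)\|\le \|\CA\|\frac{\diam(X)}{m}C_g\,g(m)$, and Proposition~4.12 then yields, for each $\gamma_m$ with trivial index, a two-layer block-partitioned decomposition whose local unitaries are exponentials of anti-self-adjoint observables of norm $O(g(m))$. This is exactly your $\delta_n=O(g(r_n))$. Second --- and this you do not flag --- the layers $\gamma_m$ need \emph{not} have trivial GNVW index for small $m$ (Theorem~5.8 only guarantees that $\beta_m$ has index zero once $m\ge m_0$), so your logarithm step and the bound $\|\chh^{(n)}_p\|\le 2\delta_n$ are not available there. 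The paper handles this by showing that the limiting flow equals $\beta\beta_{m_0}^{-1}$ and then observing that $\beta_{m_0}$ is a zero-index QCA, hence a circuit, hence already an LGA. Finally, a minor point: with your convention $\alpha'(s)=\alpha(s)\circ\derF(s)$, the concatenated flow over $I_1\cup\cdots\cup I_N$ gives $V_1\circ\cdots\circ V_N$, not $V_N\circ\cdots\circ V_1$; the paper's ``all-layers-on-$[0,1]$'' construction sidesteps this ordering issue.
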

\begin{remark}
  LGAs form a group which is a subgroup of $\alLPA$. Indeed, if $\alpha_1(s)$ is generated by $\derF_1(s)$ and $\alpha_2(s)$ is generated by $\derF_2(s)$, then one can check that $\alpha_1(s)\alpha_2(s)$ is generated by $\derF_2(s)+\alpha_2(s)^{-1}(\derF_1(s))$ \cite{LocalNoether}. Obviously, every block-partitioned unitary is an LGA generated by a constant $\derF$. Therefore, every circuit (i.e. every zero-index QCA) is also an LGA. 
\end{remark}

\begin{proof}
Suppose the tails of $\beta$ are controlled by a function $g(r)=\Or$. By Theorem 5.6 of Ref. 
\cite{ranard2022converse} there exists a sequence of QCAs $\beta_m$, $m\in\NN$, of radius $2m$, such that for any $\CA\in\SA^l$ supported on a $X\subset\ZZ$ one has $\|\beta(\CA)-\beta_m(\CA)\|\leq \|\CA\|\frac{\diam (X)}{m} C_g  g(m)$ for some constant $C_g$ which depends on the function $g$.  Let $\gamma_1=\beta_1$ and $\gamma_m=\beta_m \beta_{m-1}^{-1}$ for $m>1$. Then $\gamma_m$ has radius at most $4m-2$ and satisfies $\|\gamma_m(\CA)-\CA\|\leq 2 \|\CA\|\frac{\diam (X)}{m} C_g  g(m)$ for any $\CA\in\SA^l$ supported on $X$. 

By Theorem 5.8 of \cite{ranard2022converse}, there exists $m_0>0$ such that for $m\geq m_0$ the GNVW-index of the QCA $\beta_m$ is the same as that of $\beta$, that is, zero. Hence the same is true about $\gamma_m$ for $m>m_0$. By Prop. 4.12 of \cite{ranard2022converse}, each $\gamma_m$, $m>m_0$, can be presented as a composition of two layers of block-partitioned unitaries of diameter no larger than $8m-4$, and each unitary is an exponential of an anti-self-adjoint element of $\SAl$ which has norm $O(g(m))$. That is, we can write $\gamma_m=\gamma_m^{(2)}\gamma_m^{(1)}$, where 
\begin{equation}
\gamma_m^{(a)}=\prod_{j=-\infty}^\infty {\rm Ad}_{\exp \left(\chh_{m,j}^{(a)}\right)},\quad a=1,2,
\end{equation}
is an automorphism of $\SA^l$, 
and $\chh_{m,j}^{(a)}$ are elements of $\SAl$ of diameter at most $8m-4$ whose supports for fixed $m,a$ and different $j$ do not overlap, and $\|\chh_{m,j}^{(a)}\|=O(g(m))$ uniformly in $m$.

Let $\gamma_m^{(a)}(s)$, $s\in\RR$, be the one-parameter family of  automorphisms generated by a constant finite-range derivation $\derH_m^{(a)}=\sum_j \chh_{m,j}^{(a)}$, so that $\gamma_m^{(a)}(1)=\gamma_m^{(a)}$. Also, let $\gamma_m(s)=\gamma_m^{(2)}(s)\gamma_m^{(1)}(s).$ Then it is easy to see that for any $s\in\RR$, $\gamma_m(s)$ maps $\SA^l$ to $\SA^l$ and increases the radius of support of a local observable by at most $16m-8$. Further, $\gamma_m(s)$ is generated by a continuous family of finite-range derivations 
\begin{equation}
\derG_m(s)=\derH_m^{(1)}(s)+\gamma^{(1)}_m(s)^{-1}(\derH_m^{(2)}(s)),
\end{equation}
of range at most $40m-20$. Note also $\|\derG_m\|_0=O(g(m))$.

For any $n\in\NN$ and any $s\in\RR$, let $\alpha_n(s)=\gamma_{n+m_0}(s)\gamma_{n-1+m_0}(s)\ldots \gamma_{1+m_0}(s)$. For any $s$, this continuous one-parameter family of automorphisms increases the radius of support of a local observable by at most $20n(n+m_0)$. It is generated by a continuous family of finite-range derivations $\derF_n(s)$ which for $n>1$ satisfy
\begin{equation}
\derF_{n}(s)=\derF_{n-1}(s)+\alpha_{n-1}(s)^{-1}(\derG_{n}(s)).
\end{equation}
Let us restrict $s$ to the interval $[0,1]$ and regard $\derF_n$ as an element of the \Frechet\ space $C^0([0,1],\mfkDal)$. Then 
\begin{equation}
\|\derF_n-\derF_{n-1}\|_\alpha=O\left((1+20n(n+m_0)\right)^\alpha g(n)),
\end{equation}
and since $g(r)=\Or$, the sequence $\{\derF_n\}$ is a Cauchy sequence and converges to a continuous family of almost local derivations $\derF(s)$ defined on $s\in [0,1]$. 

Let $\alpha(s)$ be the unique solution of (\ref{eq:LGAdef}) with the initial condition $\alpha(0)=\Id.$ By Lemma E.4 of \cite{LocalNoether}, for any $\CA\in\SAal$ and any $s\in [0,1]$, we have $\alpha(s)(\CA)=\lim_{n\ra\infty}\alpha_n(s)(\CA)$. On the other hand, $\lim_{n\ra\infty}\alpha_n(1)(\CA)=\beta\beta_{m_0}^{-1}(\CA).$ Hence, $\beta=\alpha(1)\beta_{m_0}$. Since $\beta_{m_0}$ is a zero-index QCA, it is an LGA. Therefore, $\beta$ is also an LGA. 
\end{proof}

We say that $\alpha \in \alLPA$ is strictly localized on a region $X$ if it is a trivial  extension of an automorphism of the algebra $\SAal_X$. We denote the subgroup of automorphisms $\alpha \in \alLPA$ strictly localized on $\RRp$ and $\RRm$ by $\alLPAsp$ and $\alLPAsm$, respectively. By Proposition \ref{prop:GNVWforLGA}, every 
automorphism in these subgroups has a  trivial GNVW index and can be generated by a continuous family of derivations $\derF: [0,1] \to \mfkDalsp$ or $\derF: [0,1] \to \mfkDalsm$. By $\alLPAz$ we denote the group of automorphisms of the form $\Ad_{\CU}$ for some unitary $\CU \in \SAal$. The GNVW index of any element of $\alLPAz$ is also trivial. The analog of Lemma \ref{lma:alphadecomposition} is the following lemma:

\begin{lemma} \label{lma:alphadecompositionsmooth}
Any $\alpha \in \alLPA$ with a trivial GNVW index admits a decomposition $\alpha = \alpha_{<0} \alpha_0\alpha_{\geq 0}$ for some $\alpha_{<0} \in \alLPAsm$, $\alpha_{0} \in \alLPAz$, $\alpha_{\geq 0} \in \alLPAsp$.
\end{lemma}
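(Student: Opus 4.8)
\textit{The plan} is to carry out the argument of Lemma~\ref{lma:alphadecomposition} inside the locally generated framework, peeling off the two strictly half-line factors first and an inner factor last. Since $\alpha\in\alLPA$ has vanishing GNVW index, Proposition~\ref{prop:GNVWforLGA} provides $\derF\in C^0([0,1],\mfkDal)$ with $\alpha=\alpha_\derF(1)$. Applying the canonical decomposition of derivations recalled above pointwise in $s$ yields $\derF(s)=\derF_{<0}(s)+\ad_{\chh_0(s)}+\derF_{\geq 0}(s)$ with $\derF_{<0}\in C^0([0,1],\mfkDalsm)$, $\chh_0\in C^0([0,1],\mfkdal)$ and $\derF_{\geq 0}\in C^0([0,1],\mfkDalsp)$. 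Set $\alpha_{<0}:=\alpha_{\derF_{<0}}(1)$ and $\alpha_{\geq 0}:=\alpha_{\derF_{\geq 0}}(1)$; because their generators are continuous families in $\mfkDalsm$, resp.\ $\mfkDalsp$, the generated automorphisms fix $\SAal_{\geq 0}$, resp.\ $\SAal_{<0}$, pointwise and preserve the complementary subalgebra, so $\alpha_{<0}\in\alLPAsm$ and $\alpha_{\geq 0}\in\alLPAsp$.

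Next I would remove $\alpha_{\geq 0}$ and then $\alpha_{<0}$ using the conjugation formula for the generator of a composition of LGAs. Writing $\derF=\derF_{\geq 0}+(\derF_{<0}+\ad_{\chh_0})$, that formula gives $\alpha=\tilde\alpha(1)\circ\alpha_{\geq 0}$ where $\tilde\alpha$ is generated by $s\mapsto\alpha_{\derF_{\geq 0}}(s)\bigl(\derF_{<0}(s)+\ad_{\chh_0(s)}\bigr)$. The key point is that $\alpha_{\derF_{\geq 0}}(s)$ is \emph{strictly} localized on $\RR_{\geq 0}$, hence fixes $\SAal_{<0}$ pointwise and therefore fixes every canonical local term of $\derF_{<0}(s)$; thus $\alpha_{\derF_{\geq 0}}(s)(\derF_{<0}(s))=\derF_{<0}(s)$, while $\alpha_{\derF_{\geq 0}}(s)(\ad_{\chh_0(s)})=\ad_{\chm(s)}$ with $\chm(s):=\alpha_{\derF_{\geq 0}}(s)(\chh_0(s))$, which lies in $\mfkdal$ and depends continuously on $s$ by Lemma~\ref{lma:alLPAmapsAal2Aal}. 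So $\tilde\alpha$ is generated by $\derF_{<0}(s)+\ad_{\chm(s)}$. Applying the formula a second time, $\tilde\alpha(1)=\hat\alpha(1)\circ\alpha_{<0}$ where $\hat\alpha$ is generated by $\alpha_{\derF_{<0}}(s)(\ad_{\chm(s)})=\ad_{\chj(s)}$ with $\chj(s):=\alpha_{\derF_{<0}}(s)(\chm(s))\in\mfkdal$ (again Lemma~\ref{lma:alLPAmapsAal2Aal}). What makes the argument close up is that the $\alLPA$-image of an inner derivation is again inner, so removing the strictly half-line pieces first leaves only an inner remainder; removing the inner piece first would instead leave conjugates of $\derF_{<0}$ that are no longer strictly localized, and the peeling would not terminate.

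Since $\hat\alpha$ is generated by the family of inner derivations $\ad_{\chj}$, Remark~\ref{rem:inner} together with Lemma~\ref{lma:inner} gives $\hat\alpha(1)=\Ad_{\CW(1)}$, where $\CW\in C^1([0,1],\SUal)$ solves $d\CW(s)/ds=\CW(s)\chj(s)$, $\CW(0)=1$; in particular $\CW(1)\in\SUal$. Hence $\alpha=\Ad_{\CW(1)}\circ\alpha_{<0}\circ\alpha_{\geq 0}$, and to get the stated order I would rewrite $\Ad_{\CW(1)}\circ\alpha_{<0}=\alpha_{<0}\circ\Ad_{\alpha_{<0}^{-1}(\CW(1))}$, with $\alpha_{<0}^{-1}(\CW(1))\in\SUal$ by Lemma~\ref{lma:alLPAmapsAal2Aal}, so that $\alpha_0:=\Ad_{\alpha_{<0}^{-1}(\CW(1))}\in\alLPAz$ and $\alpha=\alpha_{<0}\,\alpha_0\,\alpha_{\geq 0}$, as required. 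I expect the main obstacle to be bookkeeping rather than conceptual: one must verify that the conjugated families of derivations produced by the composition formula are genuinely continuous $[0,1]\to\mfkDal$ (resp.\ $\to\mfkdal$), which follows from the continuity of the $\alLPA$-action on $\mfkDal$ and the joint continuity of $(s,\chh)\mapsto\alpha_\derF(s)(\chh)$ established in \cite{LocalNoether}, and that the relevant tail and localization functions are controlled uniformly in $s\in[0,1]$, which is automatic by compactness of $[0,1]$.
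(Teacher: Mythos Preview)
Your proof is correct and follows essentially the same approach as the paper's. Both arguments use Proposition~\ref{prop:GNVWforLGA} to obtain a generating family $\derF\in C^0([0,1],\mfkDal)$, apply the canonical decomposition $\derF=\derF_{<0}+\ad_{\chf_0}+\derF_{\geq 0}$, define $\alpha_{<0}$ and $\alpha_{\geq 0}$ as the LGAs generated by the half-line pieces, observe that the remainder $\alpha\,\alpha_{<0}^{-1}\alpha_{\geq 0}^{-1}$ is generated by an inner family (since $\alpha_{\geq 0}$ fixes $\derF_{<0}$ and conjugation sends inner derivations to inner derivations), and then invoke Lemma~\ref{lma:inner}/Remark~\ref{rem:inner} to write it as $\Ad_\CU$ before reordering via $\alpha_0=\Ad_{\alpha_{<0}^{-1}(\CU)}$. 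The only cosmetic difference is that you peel off $\alpha_{\geq 0}$ and $\alpha_{<0}$ in two successive applications of the composition formula, whereas the paper computes the generator of $\alpha\,\alpha_{<0}^{-1}\alpha_{\geq 0}^{-1}$ in one step; since $\alpha_{<0}$ and $\alpha_{\geq 0}$ commute, the two routes yield the same remainder.
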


\begin{proof}
Let $\derF \in C^0([0,1],\mfkDal)$ be a family of derivations generating $\alpha$. We decompose $\derF=\derF_{<0}+\ad_{\chf_0}+\derF_{\geq 0}$ and let $\alpha_{\geq 0}$, $\alpha_{<0}$ be the LGAs generated by $\derF_{<0}$, $\derF_{\geq 0}$, respectively. The automorphism $\alpha\alpha^{-1}_{<0} \alpha^{-1}_{\geq 0}$ is generated by $(\alpha_{\geq 0} \alpha_{<0}) (\derF - \derF_{\geq 0} - \derF_{<0})\in C^0([0,1],\mfkDal)$. The latter family of derivations has the form $\ad_{\chh}$ where $\chh=(\alpha_{\geq 0} \alpha_{<0}) (\chf_0)\in C^0([0,1],\mfkdal)$. By Lemma \ref{lma:inner},   $\alpha \alpha^{-1}_{<0} \alpha^{-1}_{\geq 0}=\Ad_\CU$ for $\CU\in\SUal$. We can take $\alpha_0 = \Ad_{\alpha_{<0}^{-1}(\CU)}$.
\end{proof}

The decomposition from the lemma is not unique, and the next lemma characterizes this non-uniqueness.
\begin{lemma}\label{lma:nonuniquenesssmooth}
Suppose the GNVW index of $\alpha\in\alLPA$ is trivial. For any two decompositions $\alpha = \alpha_{<0} \alpha_0 \alpha_{\geq 0} = \tilde{\alpha}_{<0}  \tilde{\alpha}_0 \tilde{\alpha}_{\geq 0}$ as in Lemma \ref{lma:alphadecompositionsmooth}, we must have $\tilde{\alpha}^{-1}_{<0}  \alpha_{< 0} \in \alLPAz$ and $\tilde{\alpha}^{-1}_{\geq 0} \alpha_{\geq 0} \in \alLPAz$. 
\end{lemma}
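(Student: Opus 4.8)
The plan is to mimic the proof of Lemma \ref{lma:nonuniqueness} in the main text, replacing the quasi-local setting with the almost-local one and the decomposition result from \cite{Lance} with an almost-local analog. Set $\beta_{<0}=\tilde\alpha_{<0}^{-1}\alpha_{<0}$ and $\beta_{\geq 0}=\tilde\alpha_{\geq 0}\alpha_{\geq 0}^{-1}$. From the two decompositions $\alpha=\alpha_{<0}\alpha_0\alpha_{\geq 0}=\tilde\alpha_{<0}\tilde\alpha_0\tilde\alpha_{\geq 0}$ one rearranges to get an identity of the form $\beta_{<0}\,\gamma_0 = \gamma_0'\,\beta_{\geq 0}$ with $\gamma_0,\gamma_0'\in\alLPAz$, hence $\beta_{<0}\beta_{\geq 0}'=\Ad_{\CW}$ for a suitable $\CW\in\SUal$ after absorbing the inner pieces; here $\beta_{<0}\in\alLPAsm$ is strictly localized on $\RRm$ and $\beta_{\geq 0}'\in\alLPAsp$ is strictly localized on $\RRp$. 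So it suffices to show that if $\beta_{<0}\in\alLPAsm$ and $\beta_{\geq 0}\in\alLPAsp$ satisfy $\beta_{<0}\beta_{\geq 0}=\Ad_\CW$ for some $\CW\in\SUal$, then $\beta_{<0},\beta_{\geq 0}\in\alLPAz$.

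The key step is the splitting $\CW=\CW_{<0}\CW_{\geq 0}$ with $\CW_{<0}\in\SUal_{<0}$ and $\CW_{\geq 0}\in\SUal_{\geq 0}$. In the $C^*$-algebra case this was the content of Lemma 3.1 of \cite{Lance}, invoked because $\Ad_\CW$ restricts to automorphisms of both half-chain algebras and is asymptotically the identity. In the almost-local case the same two facts hold: since $\beta_{\geq 0}$ acts trivially on $\SAql_{<0}$ and $\beta_{<0}$ preserves $\SAql_{<0}$, the automorphism $\Ad_\CW$ preserves $\SAql_{<0}$, and likewise $\SAql_{\geq 0}$; and since $\CW\in\SUal$, $\Ad_\CW$ is asymptotically inner-trivial in the sense that its tails decay as $\Or$. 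So one still gets the $C^*$-level splitting $\CW=\CW_{<0}\CW_{\geq 0}$ from Lemma 3.1 of \cite{Lance}. The additional work is to promote this to an \emph{almost-local} splitting: one must check that $\CW_{<0}$ and $\CW_{\geq 0}$ can be chosen in $\SUal_{<0}$ and $\SUal_{\geq 0}$, i.e. that the factors inherit the polynomial decay of $\CW$ near the origin. Concretely, $\CW_{\geq 0}$ can be taken to be the image of $\CW$ under a conditional-expectation-type map onto $\SAql_{\geq 0}$ composed with the unitarization used in \cite{Lance}, and the seminorm estimates of the form in Lemma \ref{lma:alLPAmapsAal2Aal} (applied to the truncation/averaging maps) show $\CW_{\geq 0}\in\SUal_{\geq 0}$ and hence $\CW_{<0}=\CW\CW_{\geq 0}^{-1}\in\SUal_{<0}$. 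Once the splitting is almost-local, we conclude $\beta_{<0}=\Ad_{\CW_{<0}}$ and $\beta_{\geq 0}=\Ad_{\CW_{\geq 0}}$ by comparing the actions on $\SAql_{<0}$ and $\SAql_{\geq 0}$ (each side restricts to the same automorphism of the respective half-chain and acts trivially on the other half), so $\beta_{<0},\beta_{\geq 0}\in\alLPAz$, which is what we wanted.

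Translating back, $\tilde\alpha_{<0}^{-1}\alpha_{<0}=\Ad_{\CW_{<0}}\in\alLPAz$ and $\tilde\alpha_{\geq 0}\alpha_{\geq 0}^{-1}=\Ad_{\CW_{\geq 0}}\in\alLPAz$ (and, rearranging, $\tilde\alpha_{\geq 0}^{-1}\alpha_{\geq 0}\in\alLPAz$ as well since $\alLPAz$ is normalized by $\alLPAsp$), which proves the lemma. Note also that the inner pieces $\alpha_0,\tilde\alpha_0$ play no essential role: they are absorbed at the start, exactly as in the proof of Corollary \ref{cor:intersection}.

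The main obstacle I anticipate is exactly the promotion of the $C^*$-algebraic splitting of Lemma 3.1 of \cite{Lance} to the almost-local category: the cited lemma only produces $\CW_{<0}\in\SUql_{<0}$, $\CW_{\geq 0}\in\SUql_{\geq 0}$ with no control on the tails, so one needs an extra argument — either a quantitative version of that splitting tracking the $\Or$ decay of $\CW$, or a direct construction of $\CW_{\geq 0}$ by a localized averaging map followed by polar decomposition together with the seminorm bounds from the section on almost local observables — to land the factors in $\SUal_{\geq 0}$ and $\SUal_{<0}$. Everything else is a routine repackaging of the main-text argument.
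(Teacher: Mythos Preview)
Your proposal is correct and follows essentially the same approach as the paper: reduce to $\beta_{<0}\beta_{\geq 0}=\Ad_\CW$ with $\CW\in\SUal$, invoke Lemma~3.1 of \cite{Lance} to get a $C^*$-level factorization $\CW=\CW_{<0}\CW_{\geq 0}$, and then upgrade the factors to $\SUal_{<0}$, $\SUal_{\geq 0}$ using conditional expectations. You correctly flag the promotion step as the only nontrivial point.

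One minor difference in execution is worth noting. You propose \emph{constructing} $\CW_{\geq 0}$ as a conditional expectation of $\CW$ followed by unitarization, and then checking it is almost local. The paper instead takes the Lance factors $\CU_{<0}\in\SUql_{<0}$, $\CU_{\geq 0}\in\SUql_{\geq 0}$ as given and proves directly that they are almost local, via the one-line trick
\[
\|\CU_{\geq 0}-E_{R(r)}(\CU_{\geq 0})\|=\|\CU_{<0}(\CU_{\geq 0}-E_{R(r)}(\CU_{\geq 0}))\|=\|\CU-E_{R(r)}(\CU)\|=\Or,
\]
using that $\CU_{<0}\in\SAql_{R(r)^c}$ commutes through $E_{R(r)}$. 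This sidesteps any analysis of whether unitarization preserves almost-locality, so it is a bit cleaner than what you sketch; but the idea is the same.
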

\begin{proof}
Let $\beta_{<0}=\tilde\alpha_{<0}\alpha_{<0}^{-1}$ and $\beta_{\geq 0}=\tilde\alpha_{\geq 0}\alpha_{\geq 0}^{-1}$. Then $\beta_{<0} \in \alLPAsm$, $\beta_{\geq 0} \in \alLPAsp$, and $\beta_{<0} \beta_{\geq 0} = \Ad_{\CU}$ for some $\CU \in \SUal$.
Thus $\Ad_\CU$ induces an automorphism of $\SAql_{<0}$ as well as an automorphism of $\SAql_{\geq 0}$. As in the proof of Lemma \ref{lma:nonuniqueness}, this implies  $\CU=\CU_{<0}\CU_{\geq 0}$ for some $\CU_{<0}\in\SUql_{<0}$ and $\CU_{\geq 0}\in\SUql_{\geq 0}$. 

It remains to show that in fact  $\CU_{<0}\in\SUal_{<0}$ and $\CU_{\geq 0}\in\SUal_{\geq 0}$. 
For any $\CA\in\SAql$ and any $X\subset\RR$, let $E_X(\CA)\in\SAql_{X^c}$ be the conditional expectation value of $\CA$ in the tracial state on $\SAql_X$. Thus for any $\CA\in\SAql_{X^c}$ and any $\CB\in\SAql$ we have $E_X(\CA\CB)=\CA E_X(\CB)$ and $E_X(\CB\CA)=E_X(\CB)\CA$. The map $E_X:\SAql\ra\SAql_{X^c}$ satisfies $E_X\circ E_Y=E_{X\cup Y}$ for all $X,Y\subset\RR$. Also, $\|E_X(\CA)\|\leq \|\CA\|$ for any $\CA\in\SAql$.  For any $r\geq 0$, let $R(r)=(r,+\infty)$ and $L(r)=(-\infty,-r)$. It follows from Prop. D.1 of Ref. \cite{LocalNoether} that for any $\CA\in\SAal$ the function $g(r)=\|\CA-E_{R(r)}(\CA)\|$ is $\Or$. Therefore, 
\beq
\|\CU_{\geq 0}-E_{R(r)}\left(\CU_{\geq 0}\right)\|=\|\CU_{<0}\left(\CU_{\geq 0}-E_{R(r)}\left(\CU_{\geq 0}
\right)\right)\|=\|\CU-E_{R(r)}(\CU)\|=\Or.
\eeq
On the other hand,
\begin{multline}
\inf_{\CB\in B_0(r)}\|\CU_{\geq 0}-\CB\|\leq \|\CU_{\geq 0}-E_{L(r)\cup R(r)}\left(\CU_{\geq 0}\right)\|=\|E_{L(r)}(\CU_{\geq 0}-E_{R(r)}\left(\CU_{\geq 0}\right))\|\\
=\|\CU_{\geq 0}-E_{R(r)}\left(\CU_{\geq 0}\right)\|.
\end{multline}
Thus, $\inf_{\CB\in B_0(r)}\|\CU_{\geq 0}-\CB\|=\Or$ and  $\CU_{\geq 0}\in\SUal$. Then $\CU_{<0}=\CU\CU_{\geq 0}^*\in\SUal$.
\end{proof}

We denote the subgroup of $\alLPA$ generated by $\alLPAsm$ and $\alLPAz$ (resp. $\alLPAsp$ and $\alLPAz$) by $\alLPAm$ (resp. $\alLPAp$). Any element in $\alLPAm$ can be written as a product $\alpha_{<0}\alpha_0$ where $\alpha_{<0}\in\alLPAsm$ and $\alpha_0\in\alLPAz$. 
Any element in $\alLPAp$ can be written as a product $\alpha_0\alpha_{\geq 0}$ where $\alpha_{\geq 0}\in\alLPAsp$ and $\alpha_0\in\alLPAz$. The subgroups $\alLPAp$ and $\alLPAm$ are normal and their intersection is $\alLPAz$. This follows from Lemma \ref{lma:alphadecompositionsmooth} and Lemma \ref{lma:nonuniquenesssmooth}, respectively (cf. Section 2.2).

\subsection{Lie group symmetry of a quantum spin chain}

Let $M$ be a smooth manifold. A map $\alpha: M \to \alLPA$ is called smooth if for any $\CA \in \SAal$, the map $\alpha(\CA): M \to \SAal$ is smooth and there exists $\derG \in \Omega^1(M,\mfkDal)$ such that $d \alpha(\CA) = \alpha (\derG(\CA))$. It is clear from the last condition that $\derG$ is unique for a given smooth map $\alpha$. We denote it by $\alpha^{-1} d \alpha$.

\begin{remark}
Here and below, if $\alpha:M\ra\alLPA$ is smooth map, then $\alpha^{-1}:M\ra\alLPA$ will denote the smooth map which is the point-wise inverse of $\alpha$:  $\alpha^{-1}(p)=\alpha(p)^{-1}$, $\forall p\in M$. Similarly, given a pair of smooth maps $\alpha,\beta:M\ra\alLPA$, we denote by $\alpha\beta$ the map defined by $(\alpha\beta)(p)=\alpha(p)\beta(p)$. It is easy to check that both $\alpha^{-1}$ and $\alpha\beta$ are smooth if $\alpha$ and $\beta$ are smooth.
\end{remark}
\begin{remark}\label{rem:MCintegration}
    Let $\alpha:M\ra \alLPA$ be a smooth map and $\gamma:[0,1]\ra M$ be a smooth path. Then the one-parameter family of automorphisms $\alpha(\gamma(s))$ satisfies Eq. (\ref{eq:LGAdef}) with $\derF=\gamma^*\derG$ where $\derG=\alpha^{-1}d\alpha$. Therefore $\alpha(\gamma(1))=\alpha(\gamma(0))\alpha_{\gamma^*\derG}(1)$. 
\end{remark}
\begin{remark}\label{rem:integratingcontractibleW}
    Let $W$ be a smoothly contractible manifold. For any $\derG\in\Omega^1(W,\mfkDal)$ one can construct a smooth map $M\ra \alLPA$ as follows. Let $\gamma=\RR\times W\ra W$ be a smooth contracting homotopy (i.e. a smooth map such that for any $s\leq 0$ the restriction to $\{s\}\times W\ra W$ is a constant map to $w_0\in W$ and for any $s\geq 1$ the restriction to $\{s\}\times W\ra W$ is the identity map). We can regard $\gamma^*\derG$ as an element of $C^\infty([0,1],\mfkDal)$ smoothly depending on $w\in W$. By Remark E.1 in Ref. \cite{LocalNoether}, the family $\alpha_{\gamma^*\derG}(1)(w)$ over $W$ is smooth and satisfies $\alpha_{\gamma^*\derG}(1)(w_0)=\Id$.
\end{remark}

For a Lie group $G$, we say that a homomorphism $\alpha: G \to \alLPA$ is smooth if $\alpha$ is smooth as a map. A smooth homomorphism $\alpha$ induces a smooth homomorphism of Lie algebras $\mfkg \to \mfkDal$ given by $v\mapsto \alpha^{-1}d\alpha\vert_e(v),$ where $v\in T_e G$ and $e$ is the unit of $G$.

We can similarly consider smooth maps from a smooth manifold $M$ to the subgroups $\alLPAsm$, $\alLPAsp$. By definition, $\alpha: M \to \alLPA_{\pm}$ is smooth, if it is smooth as a map to $\alLPA$.

If $\derF\in C^0([0,1],\mfkDal)$ smoothly depends on parameters in $M$, then by Remark E.1 in \cite{LocalNoether} the family of LGAs $\alpha=\alpha_\derF(1):M\ra \alLPA$ is smooth. Together with Remark \ref{rem:inner}, this implies the following refinement of Lemma \ref{lma:alphadecompositionsmooth}:

\begin{lemma} \label{lma:alphadecompositionsmooth2}
Let $W\subset\RR^n$ be a smoothly contractible open subset. Any smooth family $\alpha: W \to \alLPA$ with a trivial GNVW index admits a decomposition $\alpha = \alpha_{<0} \alpha_0\alpha_{\geq 0}$ for some smooth $\alpha_{<0}: W \to \alLPAsm$, $\alpha_{0}: W \to \alLPAz$, $\alpha_{\geq 0}: W \to \alLPAsp$. Furthermore, one can choose a smooth $\CU:W\ra \SUal$ such that $\alpha_0=\Ad_\CU$.
\end{lemma}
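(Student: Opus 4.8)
The plan is to mimic the proof of Lemma \ref{lma:alphadecompositionsmooth} but carry the smooth (and contractibility) structure through every step, using the remarks about smooth dependence on parameters that were assembled earlier in this subsection. First I would invoke Proposition \ref{prop:GNVWforLGA} pointwise: since $\alpha$ has a trivial GNVW index, each $\alpha(w)$ is an LGA. But I need more — I need a \emph{single} family $\derF$ of derivations generating $\alpha$ that depends smoothly on $w\in W$. This is where the contractibility of $W$ enters: I would not try to make the construction in Proposition \ref{prop:GNVWforLGA} uniform in $w$ directly, but instead use Remark \ref{rem:integratingcontractibleW}. Namely, set $\derG=\alpha^{-1}d\alpha\in\Omega^1(W,\mfkDal)$, pick a smooth contracting homotopy $\gamma:\RR\times W\to W$ to a basepoint $w_0$, and observe via Remark \ref{rem:MCintegration} that $\alpha(w)=\alpha(w_0)\,\alpha_{\gamma^*\derG}(1)(w)$, where $\alpha_{\gamma^*\derG}(1)$ is a smooth family of LGAs generated by $\gamma^*\derG\in C^\infty([0,1],\mfkDal)$ (smoothly depending on $w$). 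Absorbing the constant automorphism $\alpha(w_0)$ — itself an LGA by Proposition \ref{prop:GNVWforLGA}, hence decomposable once and for all by Lemma \ref{lma:alphadecompositionsmooth} with constant factors — I reduce to the case where $\alpha$ is generated by a smooth family $\derF\in C^\infty([0,1],\mfkDal)$ of derivations depending smoothly on $W$.

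Next I would copy the decomposition of the generator verbatim from the proof of Lemma \ref{lma:alphadecompositionsmooth}: write $\derF=\derF_{<0}+\ad_{\chf_0}+\derF_{\geq 0}$ using the canonical splitting $\mfkDal\hookrightarrow\mfkDalsm\oplus\mfkdal\oplus\mfkDalsp$, and recall from the paragraph preceding Remark 1 that this pointwise splitting produces \emph{smooth} families $\derF_{<0}:W\to C^0([0,1],\mfkDalsm)$, $\chf_0:W\to C^0([0,1],\mfkdal)$, $\derF_{\geq 0}:W\to C^0([0,1],\mfkDalsp)$. Let $\alpha_{\geq 0}=\alpha_{\derF_{\geq 0}}(1)$ and $\alpha_{<0}=\alpha_{\derF_{<0}}(1)$; these are smooth maps $W\to\alLPAsp$ and $W\to\alLPAsm$ respectively, again by Remark E.1 of \cite{LocalNoether} (smooth dependence of the generated flow on parameters), and they automatically have trivial GNVW index and land in the right strictly-localized subgroups. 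Then $\alpha\alpha_{<0}^{-1}\alpha_{\geq 0}^{-1}$ is generated by $(\alpha_{\geq 0}\alpha_{<0})(\derF-\derF_{\geq 0}-\derF_{<0})=\ad_{\chh}$ with $\chh=(\alpha_{\geq 0}\alpha_{<0})(\chf_0)$, and the composite $\alpha_{\geq 0}\alpha_{<0}$ is smooth (pointwise products and inverses of smooth maps to $\alLPA$ are smooth, as recorded in the first remark of this subsection), so $\chh\in C^0([0,1],\mfkdal)$ depends smoothly on $W$; here I would use Lemma \ref{lma:alLPAmapsAal2Aal} to see that applying the almost-local automorphism $\alpha_{\geq 0}\alpha_{<0}$ keeps $\chf_0$ in $\mfkdal$.

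For the last assertion I would appeal to Lemma \ref{lma:inner}: the family of inner derivations $\ad_{\chh(s)}$, $s\in[0,1]$, integrates to $\Ad_{\CU(1)}$ where $\CU(s)$ solves $d\CU/ds=\CU\chh(s)$, $\CU(0)=1$, and that same lemma states that $\CU(s)$ depends smoothly on any parameters on which $\chh$ depends smoothly — so $\CU:=\CU(1):W\to\SUal$ is a smooth unitary-valued family and $\alpha\alpha_{<0}^{-1}\alpha_{\geq 0}^{-1}=\Ad_{\CU}$. Setting $\alpha_0=\Ad_{\alpha_{<0}^{-1}(\CU)}$ (with $\alpha_{<0}^{-1}(\CU)$ smooth in $W$ because $\alpha_{<0}^{-1}$ is a smooth family of almost-local automorphisms and hence acts continuously/smoothly on $\SUal$ by Lemma \ref{lma:alLPAmapsAal2Aal}) gives $\alpha=\alpha_{<0}\alpha_0\alpha_{\geq 0}$ with all three factors smooth and $\alpha_0$ of the required inner form; and choosing instead $\CU$ itself as the witness for the $\Ad$-form — after rearranging factors as in Lemma \ref{lma:alphadecompositionsmooth} — yields the final "furthermore" clause.

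The main obstacle I anticipate is not any single estimate but the bookkeeping of smoothness: ensuring that a \emph{globally defined} smooth generator $\derF$ exists, which is exactly why the contractibility hypothesis on $W$ is essential (on a general manifold $\alpha^{-1}d\alpha$ need not integrate to a path of automorphisms from the identity), and then checking that every operation performed — the canonical half-line splitting, composition of automorphism families, the action of an almost-local automorphism on observables and on derivations, and the integration of $\ad_\chh$ — preserves smooth dependence on $W$. Each of these facts has already been set up as a remark or lemma in this subsection or imported from \cite{LocalNoether}, so the proof is essentially an assembly of those pieces rather than new analysis; the one place where I would be careful is confirming that the intermediate family of derivations $(\alpha_{\geq 0}\alpha_{<0})(\chf_0)$ is genuinely continuous in $s$ and smooth in $W$ as an element of $C^0([0,1],\mfkdal)$, since it involves composing two $s$-dependent flows.
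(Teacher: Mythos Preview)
Your proposal is correct and follows essentially the same approach as the paper: use the contracting homotopy to write $\alpha(w)=\alpha(w_0)\,\alpha_{\gamma^*\derG}(1)(w)$, decompose the basepoint $\alpha(w_0)$ once via Lemma~\ref{lma:alphadecompositionsmooth}, split the generator $\derG=\alpha^{-1}d\alpha$ canonically into $\derG_{<0}+\ad_{\chg_0}+\derG_{\geq 0}$, integrate the half-line pieces separately (smoothly, by Remark~E.1 of \cite{LocalNoether}), and use Lemma~\ref{lma:inner} for the inner remainder. The only cosmetic difference is that the paper splits $\derG$ first and then pulls back by $\gamma$ (defining $\alpha_{<0}(w)=\beta_{<0}\,\alpha_{\gamma^*\derG_{<0}}(1)(w)$ directly), whereas you pull back first and then split $\derF=\gamma^*\derG$; since the canonical half-line splitting commutes with pullback, these are the same construction.
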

\begin{proof}
    Let $\gamma=\RR\times W\ra W$ be a smooth contracting homotopy. By Lemma \ref{lma:alphadecompositionsmooth}, we can write $\alpha(w_0)=\beta_{<0}\beta_0\beta_{\geq 0}$ for some $\beta_{<0}\in\alLPAsm$, $\beta_{\geq 0}\in\alLPAsp$ and $\beta_0\in\alLPAz$. Let $\derG=\alpha^{-1}d\alpha$. By Remark \ref{rem:integratingcontractibleW}, the families of LGAs $\alpha_{\gamma^*\derG}(1)(w)$, $\alpha_{\gamma^*\derG_{<0}}(1)(w)$, 
    $\alpha_{\gamma^*\derG_{\geq 0}}(1)(w)$ over $W$ are smooth. Further, by Remark \ref{rem:MCintegration}, for all $w\in W$ we have $\alpha(w)=\alpha(w_0)\alpha_{\gamma^*\derG}(1)(w).$

    Let $\alpha_{<0}(w)=\beta_{<0}\alpha_{\gamma^*\derG_{<0}}(1)(w)$. This is a smooth map from $W$ to $\alLPAsm$. Similarly, we define a smooth map $\alpha_{\geq 0}:W\ra \alLPAsp$ by
    $\alpha_{\geq 0}(w)=\beta_{\geq 0}\alpha_{\gamma^*\derG_{\geq 0}}(1)(w)$. Then using Remark \ref{rem:inner} it is easy to check that $\alpha\alpha_{<0}^{-1}\alpha_{\geq 0}^{-1}$ is a smooth map $W\ra\alLPAz$ of the form $\Ad_\CV$, where $\CV:W\ra \SUal$ is smooth. Therefore we can set $\alpha_0=\Ad_\CU$ where $\CU=\alpha_{<0}^{-1}(\CV).$
    
\end{proof}
The decomposition in the above lemma is not unique. The next lemma characterizes this non-uniqueness.
\begin{lemma}\label{lma:nonuniquenesssmooth2}
Let $W\subset\RR^n$ be a smoothly contractible open subset and $\alpha:W\ra\alLPA$ be a smooth map whose image consists of automorphisms with a vanishing GNVW index. For any two decompositions as in Lemma \ref{lma:alphadecompositionsmooth2}, $\alpha = \alpha_{<0} \alpha_0 \alpha_{\geq 0} = {\tilde \alpha}_{<0} {\tilde \alpha}_0 {\tilde \alpha}_{\geq 0}$, we have  ${\tilde\alpha}_{<0}^{-1}\alpha_{<0}=\Ad_{\CV}$
for some smooth map $\CV:W\ra \SUal$. 
\end{lemma}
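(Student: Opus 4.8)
\begin{hproof}
Put $\beta:=\tilde\alpha_{<0}^{-1}\alpha_{<0}$. By the remark on smoothness of products and inverses, $\beta$ is a smooth map $W\ra\alLPA$; since $\alpha_{<0}(w),\tilde\alpha_{<0}(w)\in\alLPAsm$ for every $w$, so is $\beta(w)$, and by Lemma \ref{lma:nonuniquenesssmooth} we moreover have $\beta(w)\in\alLPAz$ for every $w$. Thus $\beta$ is a smooth map from a contractible set into $\alLPAsm\cap\alLPAz$, and what must be shown is that the (pointwise) implementing unitary can be chosen smoothly, i.e. that $\beta=\Ad_\CV$ for a smooth $\CV:W\ra\SUal$. (Equivalently, since $\beta(w)$ is the trivial extension of an automorphism of $\SAal_{<0}$ which, by the argument in the proof of Lemma \ref{lma:nonuniquenesssmooth}, is implemented by a unitary in $\SUal_{<0}$, one may work entirely with the half-chain algebra $\SAal_{<0}$.)

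I would first pass to the logarithmic derivative $\derG:=\beta^{-1}d\beta\in\Omega^1(W,\mfkDal)$, which exists and is unique by smoothness of $\beta$; since $\beta(w)$ fixes $\SAal_{\geq 0}$ pointwise, $\derG(w)$ annihilates $\SAal_{\geq 0}$, so $\derG\in\Omega^1(W,\mfkDalsm)$. The key step is to promote this to
\beq\label{eq:Ginnersketch}
\derG=\ad_\chh\qquad\text{for some smooth }\chh\in\Omega^1(W,\mfkdal).
\eeq
That $\derG(w)$ is an inner derivation for every $w$ is the infinitesimal counterpart of Lemma \ref{lma:nonuniquenesssmooth}: it is exactly the statement that elements of $\alLPAsm\cap\alLPAz$ correspond, at the level of derivations, to $\ad$ of anti-self-adjoint traceless elements of $\SAal_{<0}$, and it is proved by the same conditional-expectation estimates as in Lemma \ref{lma:nonuniquenesssmooth}, i.e. by applying Prop.~D.1 of \cite{LocalNoether} to $\derG(w)$ evaluated on each tangent vector of $W$. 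Granting innerness, $\chh(w)$ is the unique element of $\mfkdal$ with $\ad_{\chh(w)}=\derG(w)$; it is recovered from the canonical finite-range representatives $\derG(w)^Y$ --- equivalently, it is the $\mfkdal$-summand of $\derG(w)$ under the canonical embedding $\mfkDal\hookrightarrow\mfkDalsm\oplus\mfkdal\oplus\mfkDalsp$ --- and, since that decomposition carries smooth maps into $\mfkDal$ to smooth maps into the summands, $\chh$ is a smooth $\mfkdal$-valued $1$-form. This gives (\ref{eq:Ginnersketch}).

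Given (\ref{eq:Ginnersketch}), the statement follows by integration, as in the proof of Lemma \ref{lma:alphadecompositionsmooth2}. Fix $w_0\in W$ and a smooth contracting homotopy $\gamma:\RR\times W\ra W$ onto $w_0$, and regard $\gamma^*\derG=\ad_{\gamma^*\chh}$ as an element of $C^\infty([0,1],\mfkDal)$ depending smoothly on $w\in W$ (Remark \ref{rem:integratingcontractibleW}). By Lemma \ref{lma:inner} the equation $\partial_s\CW(s,w)=\CW(s,w)\,(\gamma^*\chh)(s,w)$, $\CW(0,w)=1$, has a unique solution $\CW\in C^1([0,1],\SUal)$ depending smoothly on $w$, and by Remark \ref{rem:inner} the LGA $\alpha_{\gamma^*\derG}(1)(w)$ equals $\Ad_{\CW(1,w)}$. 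By Lemma \ref{lma:nonuniquenesssmooth} applied at $w_0$, fix $\CV_0\in\SUal$ with $\beta(w_0)=\Ad_{\CV_0}$; then by Remark \ref{rem:MCintegration},
\beq
\beta(w)=\beta(\gamma(1,w))=\beta(\gamma(0,w))\,\alpha_{\gamma^*\derG}(1)(w)=\Ad_{\CV_0}\Ad_{\CW(1,w)}=\Ad_{\CV_0\CW(1,w)},
\eeq
so $\CV:=\CV_0\,\CW(1,\cdot):W\ra\SUal$ is smooth and $\beta=\Ad_\CV$, as asserted.

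The reduction and the final integration are routine, given the remark on smooth products, Lemma \ref{lma:inner}, and Remarks \ref{rem:inner}, \ref{rem:integratingcontractibleW}, \ref{rem:MCintegration}. The main obstacle is (\ref{eq:Ginnersketch}): one must check both that the Maurer--Cartan form of the pointwise-inner family $\beta$ is pointwise an inner derivation (the derivative version of Lemma \ref{lma:nonuniquenesssmooth}, handled by the conditional-expectation/Lieb--Robinson estimates of \cite{LocalNoether}) and, more delicately, that its $\mfkdal$-valued primitive $\chh$ depends smoothly on $W$ --- delicate because $\ad(\mfkdal)$ is not closed in $\mfkDal$, so $\chh$ cannot be obtained from an abstract continuous inverse of $\ad$ and must instead be read off the canonical $\derG^Y$-representatives. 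Alternatively one can bypass the Maurer--Cartan form by showing directly that a smooth map from a contractible set into $\alLPAz$ lifts smoothly to $\SUal$, using the trace-normalized implementing unitary to fix the residual $U(1)$ ambiguity locally; but the route above stays within the machinery of \cite{LocalNoether} already developed above.
\end{hproof}
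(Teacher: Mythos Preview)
Your argument has a genuine gap at the step establishing $\derG=\ad_\chh$ for a smooth $\chh$. You claim $\chh$ is recovered as ``the $\mfkdal$-summand of $\derG(w)$ under the canonical embedding $\mfkDal\hookrightarrow\mfkDalsm\oplus\mfkdal\oplus\mfkDalsp$'', but this is wrong: since $\beta$ takes values in $\alLPAsm$, the Maurer--Cartan form $\derG$ lies in $\Omega^1(W,\mfkDalsm)$, and the $\mfkdal$-summand of any element of $\mfkDalsm$ is identically zero (that summand collects only the $\derG^Y$ with $Y$ straddling $-1/2$). What you would actually need is that $\sum_{Y\subset\RR_{<0}}\derG^Y$ converges in $\SAal_{<0}$, i.e.\ that $\|\derG^Y\|$ decays superpolynomially as $Y\to-\infty$. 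This is not a consequence of $\derG\in\mfkDal$ (whose seminorms bound only $(1+\diam Y)^\alpha\|\derG^Y\|$), and your appeal to ``the same conditional-expectation estimates as in Lemma~\ref{lma:nonuniquenesssmooth}'' does not supply it: knowing $\beta(w)=\Ad_{\CV(w)}$ pointwise tells you that $\beta(w)(\CA)-\CA$ is small for $\CA$ supported far to the left, but gives no control on the derivative $\partial_v\beta(\CA)$, since you have no a priori uniformity of the almost-locality of $\CV(w)$ in $w$.

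The paper sidesteps this by never working with $\beta$ directly. Lemma~\ref{lma:alphadecompositionsmooth2} already supplies smooth unitaries implementing $\alpha_0$ and $\tilde\alpha_0$; combined with the smooth half-chain automorphisms these yield, exactly as in the pointwise argument of Lemma~\ref{lma:nonuniquenesssmooth}, a smooth $\CU:W\to\SUal$ with $(\tilde\alpha_{<0}\alpha_{<0}^{-1})(\tilde\alpha_{\geq 0}\alpha_{\geq 0}^{-1})=\Ad_\CU$ and $\Ad_\CU$ preserving both half-chain algebras. Then $\chh:=\CU^{-1}d\CU$ is \emph{automatically} a smooth $\mfkdal$-valued one-form, and the only remaining work is to split $\chh=\chh_{<0}+\chh_{\geq 0}$ with each piece smooth; this holds pointwise by Corollary~3.3 of \cite{Lance} and is smooth in $w$ because the splitting is given by the canonical components $\chh^Y$. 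Integrating $\gamma^*\chh_{<0}$ along a contracting homotopy then produces the desired $\CV$. The essential difference: the paper starts from a smooth unitary already handed to it by Lemma~\ref{lma:alphadecompositionsmooth2} and splits its logarithmic derivative, whereas you try to lift a smooth inner automorphism to a smooth unitary---a harder problem that requires precisely the missing decay estimate.
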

\begin{proof}
By the argument in Lemma \ref{lma:nonuniquenesssmooth}, it is sufficient to show that any smooth $\CU:W\ra\SUal$ such that $\Ad_\CU$ induces automorphisms of both $\SAal_{<0}$ and $\SAal_{\geq 0}$ can be written as $\CU_{<0}\CU_{\geq 0}$, where $\CU_{<0}:W\ra\SUal_{<0}$ and $\CU_{\geq 0}:W\ra\SUal_{\geq 0}$ are smooth. Pick $w_0\in W$ and let $\gamma:\RR\times W\ra W$ be a  contracting homotopy with $\gamma(0,w)=w_0$ for all $w\in W$. By the argument in Lemma \ref{lma:nonuniquenesssmooth}, there exist $\CV_{<0}\in \SUal_{<0}$ and $\CV_{\geq 0}\in \SUal_{\geq 0}$ such that $\CU(w_0)=\CV_{<0}\CV_{\geq 0}$. 

Let $\chh=\CU^{-1}d\CU\in \Omega^1(W,\mfkdal)$. For any $\CA_{<0}\in\SAl_{<0}$, any $\CA_{\geq 0}\in\SAl_{\geq 0}$, and any $w\in W$ we have $[\chh(w),\CA_{<0}]\in\SAal_{<0}$ and $[\chh(w),\CA_{\geq 0}]\in\SAal_{\geq 0}$. Corollary 3.3 from \cite{Lance} implies that $\chh(w)$ is a sum of an observable from $\SAql_{<0}$ and an observable from $\SAql_{\geq 0}$. By writing $\chh^Y = (\ad_{\chh})^Y$, we see that $\chh^Y\neq 0$ only if $Y\subset (-\infty,0)$ or $Y\subset [0,+\infty)$, so $\chh$ uniquely decomposes as a sum $\chh_{<0}+\chh_{\geq 0}$ where
$\chh_{<0}=\sum_{Y\subset (-\infty,0)} \chh^Y\in\Omega^1(W,\mfkdalsn)$ and $\chh_{\geq 0}=\sum_{Y\subset [0,+\infty)}\chh^Y\in\Omega^1(W,\mfkdalsp)$. Here $\mfkdalsn$ (resp. $\mfkdalsp$) is the subspace of elements of $\SAal_{<0}$ (resp. $\SAal_{\geq 0}$) that belong to $\mfkdal$.

Let $\tilde\CU=\gamma^*\CU\in C^\infty([0,1]\times W, \SUal)$, so that ${\tilde\CU}^{-1}d\tilde\CU=\gamma^*\chh=\gamma^*\chh_{<0}+\gamma^*\chh_{\geq 0}$. Let ${\tilde\CU}_{<0}:C^\infty([0,1]\times W,\SUal_{<0})$ be the solution of eq. \ref{eq:LGAinnerdef} with the initial condition ${\tilde\CU}_{<0}(0,w)=\CV_{<0}$. Let ${\tilde\CU}_{\geq 0}:C^\infty([0,1]\times W,\SUal_{\geq 0})$ be the solution of eq. \ref{eq:LGAinnerdef} with the initial condition ${\tilde\CU}_{\geq 0}(0,w)=\CV_{\geq 0}$. Then $\tilde\CU_{<0}\tilde\CU_{\geq 0}$ satisfies the same differential equation with respect to $s\in [0,1]$ and the same initial condition at $s=0$ as $\tilde\CU$, and thus by Lemma \ref{lma:inner}, $\tilde\CU=\tilde\CU_{<0}\tilde\CU_{\geq 0}$. Letting $s=1$, we get the desired statement.
\end{proof}

\begin{corollary}\label{cor:intersectionsmooth}
Let $W$ and $\alpha$ be as in Lemma \ref{lma:nonuniquenesssmooth2}, and suppose the image of $\alpha$ is in $\alLPAp\cap\alLPAm$. Then there exists a smooth map $\CW:W\ra \SUal$ such that $\alpha=\Ad_\CW$.
\end{corollary}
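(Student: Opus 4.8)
The plan is to use the hypothesis to manufacture two \emph{smooth} decompositions of $\alpha$ of the kind produced by Lemma~\ref{lma:alphadecompositionsmooth2}, and then play them off against each other via Lemma~\ref{lma:nonuniquenesssmooth2}. The crucial first observation is that, since $\alLPAp\cap\alLPAm=\alLPAz$ (established earlier), the assumption that the image of $\alpha$ lies in $\alLPAp\cap\alLPAm$ means simply that $\alpha(w)\in\alLPAz$ for every $w\in W$; and because smoothness of a map into a subgroup of $\alLPA$ is by definition just smoothness as a map into $\alLPA$, this says precisely that $\alpha$ is a smooth map $W\to\alLPAz$.

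Granting this, I would argue as follows. On one hand, Lemma~\ref{lma:alphadecompositionsmooth2} gives a smooth decomposition $\alpha=\alpha_{<0}\alpha_0\alpha_{\geq0}$ with $\alpha_{<0}\colon W\to\alLPAsm$, $\alpha_{\geq0}\colon W\to\alLPAsp$ smooth and $\alpha_0=\Ad_{\CU}$ for some smooth $\CU\colon W\to\SUal$. On the other hand, the constant map $\Id$ is a smooth map into both $\alLPAsm$ and $\alLPAsp$, so $\alpha=\Id\cdot\alpha\cdot\Id$ is \emph{also} a decomposition as in Lemma~\ref{lma:alphadecompositionsmooth2}; here it is essential that $\alpha$, the middle factor, is a smooth map into $\alLPAz$, which is exactly what the previous paragraph secured. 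Comparing these two decompositions by Lemma~\ref{lma:nonuniquenesssmooth2} (whose proof, beyond the stated $\RR_{<0}$ assertion, produces a factorization of the relevant unitary as $\CU_{<0}\CU_{\geq0}$ and so controls the $\RR_{\geq0}$ factor as well) yields smooth maps $\CV,\CV'\colon W\to\SUal$ with $\alpha_{<0}=\Ad_{\CV}$ and $\alpha_{\geq0}=\Ad_{\CV'}$. Therefore
\[
\alpha=\Ad_{\CV}\,\Ad_{\CU}\,\Ad_{\CV'}=\Ad_{\CV\CU\CV'},
\]
and $\CW:=\CV\CU\CV'$ is a smooth map $W\to\SUal$, since the pointwise product of smooth maps into the \Frechet--Lie group $\SUal$ is smooth.

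I do not expect a serious obstacle: the argument is driven entirely by the observation that the hypothesis upgrades $\alpha$ from a map into $\alLPA$ to a map into $\alLPAz$, which is what makes the decomposition $\alpha=\Id\cdot\alpha\cdot\Id$ legitimate. Without this trick one would be tempted to prove directly that a smooth family $W\to\alLPAz$ lifts to a smooth family $W\to\SUal$ — say by showing the relevant Maurer--Cartan form takes values in inner derivations and then integrating via Lemma~\ref{lma:inner} — which is true but forces one to deal explicitly with the central $U(1)$; the present approach outsources that analytic work to Lemmas~\ref{lma:alphadecompositionsmooth2} and~\ref{lma:nonuniquenesssmooth2}. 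The only points needing minor care are (i) verifying that $\Id\cdot\alpha\cdot\Id$ really meets all the requirements of a decomposition in the sense of Lemma~\ref{lma:alphadecompositionsmooth2} — this is precisely where the hypothesis on the image of $\alpha$ enters — and (ii) the fact that Lemma~\ref{lma:nonuniquenesssmooth2} is recorded only for the $\RR_{<0}$ factor, so the $\RR_{\geq0}$ factor should be obtained either from the evident mirror statement or from the factorization $\CU=\CU_{<0}\CU_{\geq0}$ appearing in its proof.
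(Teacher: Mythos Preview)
Your argument has a circularity that you flag as ``minor care'' in point (i) but which is actually the whole difficulty. A decomposition ``as in Lemma~\ref{lma:alphadecompositionsmooth2}'' is not merely a factorization $\alpha_{<0}\alpha_0\alpha_{\geq0}$ with smooth factors into $\alLPAsm,\alLPAz,\alLPAsp$; it also carries the data of a smooth lift $\CU:W\to\SUal$ with $\alpha_0=\Ad_\CU$ (the ``Furthermore'' clause). The proof of Lemma~\ref{lma:nonuniquenesssmooth2} uses these lifts in an essential way: writing $\beta_{<0}=\tilde\alpha_{<0}\alpha_{<0}^{-1}$ and $\beta_{\geq0}=\tilde\alpha_{\geq0}\alpha_{\geq0}^{-1}$, one needs $\beta_{<0}\beta_{\geq0}=\Ad$ of a \emph{smooth} unitary before one can invoke the smooth factorization $\CU=\CU_{<0}\CU_{\geq0}$, and that smooth unitary is manufactured from the smooth lifts of $\alpha_0$ and $\tilde\alpha_0$. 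In your setup, with the second decomposition equal to $\Id\cdot\alpha\cdot\Id$, one computes $\beta_{<0}\beta_{\geq0}=\alpha_{<0}^{-1}\alpha_{\geq0}^{-1}=\Ad_{\alpha_{\geq0}^{-1}(\CU)}\cdot\alpha^{-1}$; this is $\Ad$ of a smooth unitary iff $\alpha$ itself is, which is precisely the statement you are trying to prove. So invoking Lemma~\ref{lma:nonuniquenesssmooth2} here begs the question.

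The paper avoids this by comparing two decompositions in which \emph{both} middle pieces already come with smooth lifts: it writes $\alpha=\Ad_\CU\,\alpha_{\geq0}=\tilde\alpha_{<0}\,\Ad_{\tilde\CU}$ (one with trivial left factor, the other with trivial right factor), so that the comparison unitary $\beta_{<0}\beta_{\geq0}=\tilde\alpha_{<0}\alpha_{\geq0}^{-1}=\Ad_{\CU\,\alpha_{\geq0}(\tilde\CU^{-1})}$ is visibly $\Ad$ of a smooth unitary built from $\CU,\tilde\CU$, and Lemma~\ref{lma:nonuniquenesssmooth2} applies without circularity to give $\alpha_{\geq0}=\Ad_\CV$ smoothly, hence $\alpha=\Ad_{\CU\CV}$. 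The point is that the hypothesis $\alpha(W)\subset\alLPAp\cap\alLPAm$ is used to produce these two specific decompositions, not merely to observe that $\alpha(w)\in\alLPAz$ pointwise.
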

\begin{proof}
By Lemma \ref{lma:alphadecompositionsmooth2}, we have $\alpha=\Ad_\CU\alpha_{\geq 0}={\tilde \alpha}_{<0}\Ad_{\tilde\CU}$ for some smooth maps $\tilde\alpha_{<0}:W\ra\alLPAm$, $\alpha_{\geq 0}:W\ra\LPAp$, and $\CU,\tilde\CU:W\ra\SUal$. Then by Lemma \ref{lma:nonuniquenesssmooth2} we have $\alpha_{\geq 0}=\Ad_\CV$ and $\alpha_{<0}=\Ad_{\tilde\CV}$ for some smooth maps $\CV:W\ra\SUal$ and $\tilde\CV:W\ra\SUal$. Therefore  $\alpha=\Ad_{\CU\CV}$. 
\end{proof}

\subsection{Definition of the anomaly index}

Here we define an $H^3_{diff}(G,U(1))$-valued anomaly index for a smooth homomorphism from a Lie group $G$ to the group of almost local LPAs 
$\alLPA$.

\subsubsection{Anomalous Lie symmetries in quantum mechanics}

Let us first discuss the case of quantum mechanics. If $\CH$ is the Hilbert space of a quantum mechanical system, then a Lie group symmetry corresponds to a smooth homomorphism $\alpha: G \to PU(\CH)$ from $G$ to the projective unitary group $PU(\CH)$ of the Hilbert space $\CH$. By Bargmann's theorem \cite{Bargmann}, it is always possible to lift it to a smooth homomorphism $\tilde{G} \to U(\CH)$, where $\tilde{G}$ is a Lie group which is a central extension of $G$ by $U(1)$ and simultaneously a locally trivial  circle bundle over $G$, but it might not be possible to lift $\alpha$ to a homomorphism $G \to U(\CH)$. The obstruction is given by an element of $H_{diff}^2(G, U(1))$. 

A cocycle that represents this element can be defined as follows. Let us choose a good simplicial open cover $U_{\bullet,\bullet}$ of $B_{\bullet} G$. On $U_{1,0}$, we define a smooth family of unitary observables $\CV:G \to U(\CH)$ which lift the homomorphism, i.e. $\Ad_{\CV} = \alpha$. This can always be done as all the charts of the cover are contractible. On $U_{2,0}$ and $U_{1,1}$, the expressions $(d^*_2 \CV)(d^*_0 \CV)(d^*_1 \CV)^{-1}$ and $(\delta^*_1 \CV)(\delta^*_0 \CV)^{-1}$, respectively, define $U(1)$-valued smooth functions. It is easy to see that they satisfy the 2-cocycle condition of the double complex and therefore define an element of  $H^2_{diff}(G,U(1))$. This cohomology class is independent of the choice of $\CV$. Indeed, any two choices of $\CV$ are related by multiplication by a smooth $U(1)$-valued function. Therefore, $(d^*_2 \CV)(d^*_0 \CV)(d^*_1 \CV)^{-1}$ and $(\delta^*_1 \CV)(\delta^*_0 \CV)^{-1}$ can change at most by a coboundary.

\subsubsection{Anomalous Lie symmetries of quantum spin chains}

Let us now define an $H^3_{diff}(G,U(1))$-valued index for a smooth homomorphism $\alpha:G \to \alLPA$.

First, we consider the case when the image of $\alpha$ consists of almost local LPAs which have a trivial GNVW index. Let $U_{\bullet,\bullet}$ be a good simplicial cover of $B_{\bullet} G$. We have a map $\iota:U_{1,0} \to G$. By Lemma \ref{lma:alphadecompositionsmooth2}, we can choose a smooth map $\beta:U_{1,0} \to \alLPAp$ such that $(\iota^*\alpha) \beta^{-1}$ defines a smooth map $U_{1,0} \to \alLPAm$. Since $(d^*_2\iota^* \alpha)(d^*_0 \iota^* \alpha) (d^*_1 \iota^* \alpha)^{-1}=\Id$ and $(\delta^*_{1} \iota^* \alpha) (\delta^*_{0} \iota^* \alpha)^{-1} = \Id$, by Corollary \ref{cor:intersectionsmooth} (cf. also Section 3.2), there exist smooth maps $\CV:U_{2,0} \to \SUal$ and $\CW:U_{1,1} \to \SUal$ such that $(d^*_2 \beta)(d^*_0 \beta) (d^*_1 \beta)^{-1} = \Ad_{\CV}$ and $(\delta^*_{1} \beta) (\delta^*_{0} \beta)^{-1} = \Ad_{\CW}$. The expressions 
\beq\label{eq:omegaLie}
\omega = (d^*_3 \CV) (d^*_1 \CV) (d^*_2 \CV)^{-1} ((d^*_3 d^*_2 \beta) (d^*_0 \CV))^{-1}
\eeq
on $U_{3,0}$, 
\beq\label{eq:upsLie}
\ups=(d^*_1 \CW)^{-1}(\delta^*_1 \CV)^{-1}((\delta^*_1 d^*_1 \beta)(d^*_0 \CW))(d^*_2 \CW)(\delta^*_0 \CV)
\eeq
on $U_{2,1}$ and 
\beq\label{eq:etaLie}
\eta=(\delta^*_2 \CW)(\delta^*_0 \CW)(\delta^*_1 \CW)^{-1}
\eeq
on $U_{1,2}$ define smooth $U(1)$-valued functions. In Appendix \ref{app:cocycle}, we show that  together they satisfy the cocycle condition and therefore define a cohomology class $[(\omega ,\ups, \eta)] \in H_{diff}^3(G,U(1))$.

\begin{prop}
The class $[(\omega,\ups,\eta)] \in H_{diff}^3(G,U(1))$ does not depend on the choice of $(U_{\bullet,\bullet},\beta, \CV, \CW)$.
\end{prop}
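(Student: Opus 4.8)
The plan is to follow the structure of the proof of the corresponding statement in the main text (Section \ref{sec:anomalyindex}), separating the three sources of freedom in the data: the choice of $(\CV,\CW)$ for a fixed $\beta$ and a fixed cover, the choice of $\beta$ for a fixed cover, and the choice of the good simplicial cover $U_{\bullet,\bullet}$ itself. First I would fix $U_{\bullet,\bullet}$ and $\beta$. Since the center of $\SUal$ consists of scalars, the equations $\Ad_\CV = (d^*_2\beta)(d^*_0\beta)(d^*_1\beta)^{-1}$ and $\Ad_\CW = (\delta^*_1\beta)(\delta^*_0\beta)^{-1}$ determine $\CV$ and $\CW$ up to multiplication by smooth $U(1)$-valued functions $c\in C^\infty(U_{2,0},U(1))$ and $c'\in C^\infty(U_{1,1},U(1))$. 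Substituting $\CV\mapsto c\,\CV$ and $\CW\mapsto c'\,\CW$ into (\ref{eq:omegaLie})--(\ref{eq:etaLie}) and collecting terms, one checks that the triple $(\omega,\ups,\eta)$ changes precisely by the total differential of $(c,c')\in C^{2,0}\oplus C^{1,1}$ in the double complex $C^{\bullet,\bullet}(U_{\bullet,\bullet},\underline{U(1)})$, i.e.\ by a coboundary. Hence, for a fixed $U_{\bullet,\bullet}$, the class $[(\omega,\ups,\eta)]\in H^3(U_{\bullet,\bullet},U(1))$ depends only on $\beta$.

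Next, still with $U_{\bullet,\bullet}$ fixed, I would establish independence of $\beta$. By Lemma \ref{lma:nonuniquenesssmooth2}, any other admissible choice has the form $\tilde\beta = \beta\,\Ad_\CU$ for some smooth $\CU\colon U_{1,0}\to\SUal$. For this $\tilde\beta$ one writes down explicit accompanying maps $\tilde\CV\colon U_{2,0}\to\SUal$ and $\tilde\CW\colon U_{1,1}\to\SUal$: the formula for $\tilde\CV$ is the one already used in the discrete case,
\[
\tilde\CV = (d^*_2\beta(\CU))\,\CV\,(d^*_1\beta)\bigl((d^*_0\CU)(d^*_1\CU)^{-1}\bigr),
\]
while $\tilde\CW = (\delta^*_1\beta)\bigl((\delta^*_1\CU)(\delta^*_0\CU)^{-1}\bigr)\,\CW$, which makes sense because $\iota\circ\delta_0 = \iota\circ\delta_1$ on $U_{1,1}$. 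One then verifies, by a direct computation relegated to Appendix \ref{app:cocycle}, that the triple $(\omega,\ups,\eta)$ assembled from $(\tilde\beta,\tilde\CV,\tilde\CW)$ coincides with the one assembled from $(\beta,\CV,\CW)$. As in the discrete case one could instead argue by continuity --- the class lies in a discrete group while $\CU$ can be deformed to the identity through the connected group $\SUal$ --- but the explicit verification is cleaner and handles the refined bicomplex uniformly. This shows $[(\omega,\ups,\eta)]$ is a well-defined element of $H^3(U_{\bullet,\bullet},U(1))$.

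Finally, I would treat the dependence on the cover. Recall that good simplicial covers of $B_{\bullet}G$ are cofinal among all simplicial covers, that for a good cover $H^3(U_{\bullet,\bullet},U(1))$ is canonically isomorphic to $H^3_{diff}(G,U(1))$, and that any two good covers admit a common good refinement, the refinement maps realizing the canonical isomorphisms \cite{BRMCL1,brylinski2000differentiable}. Given a refinement $\pi\colon U'_{\bullet,\bullet}\to U_{\bullet,\bullet}$, the data $(\beta,\CV,\CW)$ pulls back to admissible data on $U'_{\bullet,\bullet}$ (the defining conditions $(\iota^*\alpha)\beta^{-1}\in\alLPAm$, etc.\ are pointwise, hence preserved), and since (\ref{eq:omegaLie})--(\ref{eq:etaLie}) are themselves built from pullbacks, the triple produced from the pulled-back data is exactly $\pi^*(\omega,\ups,\eta)$; thus the class constructed on $U'_{\bullet,\bullet}$ is the image of the one on $U_{\bullet,\bullet}$. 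Consequently, for two good covers $U_{\bullet,\bullet}$ and $\tilde U_{\bullet,\bullet}$ with common good refinement $U'_{\bullet,\bullet}$, both classes map to classes on $U'_{\bullet,\bullet}$ built from (possibly different) admissible data, and by the $\beta$-independence already proved on the fixed cover $U'_{\bullet,\bullet}$ these agree. Hence the two original classes represent the same element of $H^3_{diff}(G,U(1))$.

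The main obstacle is the second step: the direct check that the full triple $(\omega,\ups,\eta)$ is invariant under $\beta\mapsto\beta\,\Ad_\CU$ requires tracking all three components through the double complex simultaneously, and the pieces involving $\CW$ --- which have no counterpart in the analogous discrete computation of \cite{else2014classifying} --- are the delicate bookkeeping; I would carry this out in Appendix \ref{app:cocycle}. A secondary technical point is justifying the cofinality of good simplicial covers and the compatibility of refinement maps with the isomorphisms $H^3(U_{\bullet,\bullet},U(1))\cong H^3_{diff}(G,U(1))$, for which I would appeal to the results of \cite{BRMCL1,brylinski2000differentiable}.
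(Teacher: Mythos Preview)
Your proposal is correct and follows essentially the same approach as the paper: first show that changing $(\CV,\CW)$ by $U(1)$-valued functions alters the cocycle by a coboundary, then show invariance under $\tilde\beta=\beta\,\Ad_\CU$ via an explicit choice of $(\tilde\CV,\tilde\CW)$ and a direct computation (with the continuity argument mentioned as an alternative), and finally invoke compatibility with refinement of good simplicial covers. Your formula for $\tilde\CW$ is equivalent to the paper's $\tilde\CW=\CW\,(\delta_0^*\beta)\bigl((\delta_1^*\CU)(\delta_0^*\CU)^{-1}\bigr)$ using $\delta_1^*\beta=\Ad_\CW\circ\delta_0^*\beta$, and your treatment of cover-independence is somewhat more fleshed out than the paper's one-line appeal to refinement, but the content is the same.
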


\begin{proof}
For fixed $(U_{\bullet,\bullet},\beta)$, any two choices of $\CV$ and $\CW$ differ by some smooth 
$U(1)$-valued functions on $U_{2,0}$ and $U_{1,1}$, respectively. Such functions manifestly modify $(\omega,\ups,\eta)$ by a coboundary. Therefore, the class $[(\omega,\ups,\eta)]$ can depend only on the choice of $\beta$ and $U_{\bullet,\bullet}$.

Let us make a choice of $(U_{\bullet,\bullet},\beta, \CV, \CW)$. Any other choice of $\beta$ is given by $\tilde{\beta} = \beta\Ad_{\CU}$ for some smooth map  $\CU:U_{1,0} \to \SUal$. We can take the corresponding $\CV$ and $\CW$ to be
\beq
\tilde{\CV} = (d^*_2\beta(\CU)) \CV (d^*_{1} \beta)((d^*_{0} \CU) (d^*_{1} \CU)^{-1}),\quad 
\tilde{\CW} = \CW (\delta^*_0 \beta)((\delta^*_{1} \CU)(\delta^*_{0} \CU)^{-1}).
\eeq
One can argue that the class $[(\omega,\ups,\eta)]$ is independent of $\CU$ by continuity, as the former is discrete, while the latter can be smoothly deformed to the identity thanks to the connectedness of the group $\SUal$. Instead, we show that the cocycle $(\omega,\ups,\eta)$ is the same for $(\beta,\CV,\CW)$ and $(\tilde{\beta}, \tilde{\CV}, \tilde{\CW})$ by a direct computation in Appendix \ref{app:IndependenceOfBetaL}. Thus, the class $[(\omega,\ups,\eta)]$ is independent of the choice of $\beta$.

Finally, independence of the choice of $U_{\bullet,\bullet}$ follows from the fact that the construction of the class $[(\omega,\ups,\eta)]$ is compatible with the refinement of simplicial covers.

\end{proof}

\begin{remark}
Instead of restrictions to the right half-chain, one can use the restrictions tomed the left one and similarly define a class in $H_{diff}^3(G,U(1))$. One can show that this class is the negative of the class defined above. We omit the details. 
\end{remark}

\begin{remark}\label{rem:discrete} If $G$ is discrete, then $\CW$, $\ups$, $\eta$ can be chosen to be the identity functions, and therefore $\omega$ defines a cocycle for the usual group cohomology $H^3(G,U(1))$.
\end{remark}

\begin{remark}
A similar procedure can be used to define an index for 2d SPT phases for a compact Lie group. Such indices were defined in \cite{sopenko2021,ogata2021h} for finite groups and in \cite{LocalNoether} for connected compact Lie groups. In order to extend the definition to arbitrary compact Lie groups, one can use the procedures similar to \cite{sopenko2021,ogata2021h} with the modification that all restricted automorphisms must be smooth functions on the charts of the cover $U_{1,0}$.  
\end{remark}

A generalization to the case when the image of $\alpha$ does not comprise only automorphisms with a trivial GNVW index proceeds in exactly the same way as for an abstract symmetry group $G$ and will not be repeated here.

\subsection{Triviality of the anomaly index for compact connected Lie groups}

First, we define sub-algebras of the \Frechet -Lie algebra $\mfkDal$ associated with arbitrary subsets $X\subset \RR$. For any two subsets $X,Y\subset\RR$ let $\dist(X,Y)$ be the distance between them, i.e. $\dist(X,Y)=\inf_{x\in X,y\in Y}|x-y|$.
\begin{definition}
 $\mfkDal_X$ is the space of functions $\derH:\Br_1\ra\mfkdl$ such that $\derH\in \mfkDal$ and for all $\alpha\in\NN$
\beq
\sup_{Y\in\Br_1} (1+\dist(Y,X))^\alpha \|\derH^Y\|<\infty . 
\eeq
We will use a short-hand $\mfkDal_{[0,+\infty)}=\mfkDalp$ and $\mfkDal_{(-\infty,0]}=\mfkDalm$.
\end{definition}

By definition, $\mfkDal_X$ is a subspace of $\mfkDal$, and it is easy to check that it is a sub-algebra. But it is not a closed sub-algebra. While there is a natural topology on $\mfkDal_X$ which makes it into a \Frechet -Lie algebra, we do not use it below. Note also that $\mfkDal_{\geq 0}\subset\mfkDal_+$, $\mfkDal_{<0}\subset \mfkDal_-$, and the image of the injection $\ad:\mfkdal\ra\mfkDal$ is precisely $\mfkDal_+\cap\mfkDalm$. 

\begin{lemma} \label{lma:FplusGenerates}
 Let $\derF\in\mfkDalp$ (resp. $\derF\in\mfkDalm$). Then the almost local automorphism $\alpha_\derF(1)$ is in $\alLPAp$ (resp. $\alLPAm$). Here we regard $\derF$ as a constant element of $C^0([0,1],\mfkDal)$. 
 \end{lemma}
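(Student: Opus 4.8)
The plan is to split $\derF$ into an inner derivation plus a derivation that is strictly supported on the relevant half-line, and then transport this splitting to the level of automorphisms through the exponential map for locally generated automorphisms. Take $\derF\in\mfkDalp$ (the case $\derF\in\mfkDalm$ being symmetric). First I would apply the canonical decomposition of almost local derivations at the point $-1/2$, writing $\derF=\derF_{<0}+\ad_{\chf_0}+\derF_{\geq 0}$ with $\derF_{\geq 0}\in\mfkDalsp$, and put $\derG:=\derF-\derF_{\geq 0}$. The key observation is that $\derG$ is inner. On one hand $\derG\in\mfkDalp$, since $\derF\in\mfkDalp$, $\derF_{\geq 0}\in\mfkDalsp\subset\mfkDalp$, and $\mfkDalp$ is a linear subspace of $\mfkDal$. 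On the other hand $\derG\in\mfkDalm$: every interval $Y$ with $\derG^Y\neq 0$ meets $\RRm$, so $\dist(Y,(-\infty,0])=0$ and therefore $\sup_Y(1+\dist(Y,(-\infty,0]))^\alpha\|\derG^Y\|=\sup_Y\|\derG^Y\|<\infty$ for every $\alpha$ (the supremum being finite because $\derF\in\mfkDal$). Since, as recorded just above, the image of $\ad\colon\mfkdal\to\mfkDal$ is exactly $\mfkDalp\cap\mfkDalm$, we obtain $\derG=\ad_\chh$ for some $\chh\in\mfkdal$, i.e. $\derF=\ad_\chh+\derF_{\geq 0}$ with $\chh\in\mfkdal$ and $\derF_{\geq 0}\in\mfkDalsp$.

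Next I would integrate. Let $\alpha_{\geq 0}:=\alpha_{\derF_{\geq 0}}(1)$, regarding $\derF_{\geq 0}$ as a constant family in $C^0([0,1],\mfkDal)$. Since $\derF_{\geq 0}$ annihilates $\SAal_{<0}$ and restricts to an almost local derivation of $\SAal_{\geq 0}$, the evolution it generates is the trivial extension of an automorphism of $\SAal_{\geq 0}$, so $\alpha_{\geq 0}\in\alLPAsp$. By Lemma \ref{lma:alLPAmapsAal2Aal}, $\alpha_{\geq 0}(\chh)$ is again almost local; and because $\alpha_{\geq 0}$ is a $*$-automorphism preserving the (unique) tracial state, $\alpha_{\geq 0}(\chh)\in\mfkdal$. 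By Remark \ref{rem:inner}, the constant family $\ad_{\alpha_{\geq 0}(\chh)}$ generates an LGA of the form $\Ad_\CU$ with $\CU\in\SUal$. Finally, using the composition rule for LGAs, the automorphism $\Ad_\CU\,\alpha_{\geq 0}$ is generated by $\derF_{\geq 0}+\alpha_{\geq 0}^{-1}(\ad_{\alpha_{\geq 0}(\chh)})=\derF_{\geq 0}+\ad_\chh=\derF$; by uniqueness of the solution of (\ref{eq:LGAdef}) this forces $\alpha_\derF(1)=\Ad_\CU\,\alpha_{\geq 0}\in\alLPAz\cdot\alLPAsp\subseteq\alLPAp$. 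For $\derF\in\mfkDalm$ the identical argument with $\RRp$ and $\RRm$ interchanged (now $\derG=\derF-\derF_{<0}$) gives $\alpha_\derF(1)=\Ad_\CU\,\alpha_{<0}\in\alLPAz\cdot\alLPAsm\subseteq\alLPAm$.

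I do not anticipate a real obstacle: all analytic ingredients are already available from Ref. \cite{LocalNoether} and the preceding lemmas. The step that wants a little care is the membership $\derG\in\mfkDalm$: one must notice that once the part of $\derF$ strictly supported on $\RRp$ has been removed, what remains is carried by intervals touching the negative half-line and hence sitting at distance zero from $(-\infty,0]$, so that the weighted decay conditions defining $\mfkDalm$ hold trivially. This is precisely what makes $\derG$ inner and is the crux of the argument. Everything else — that a derivation supported on $\RRp$ integrates to an element of $\alLPAsp$, that an almost local automorphism preserves $\mfkdal$, and the bookkeeping with the LGA composition rule — is routine.
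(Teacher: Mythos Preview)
Your overall strategy coincides with the paper's: split $\derF=\ad_\chh+\derF_{\geq 0}$ with $\chh\in\mfkdal$ and $\derF_{\geq 0}\in\mfkDalsp$, then integrate the two pieces and recombine. Your justification of the splitting via $\mfkDalp\cap\mfkDalm=\ad(\mfkdal)$ is in fact more explicit than what the paper writes.

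However, the integration step contains a genuine gap. You define $\CU$ by integrating the \emph{constant} family $s\mapsto \ad_{\alpha_{\geq 0}(\chh)}$ with $\alpha_{\geq 0}=\alpha_{\derF_{\geq 0}}(1)$, and then invoke the composition rule to conclude that $\Ad_\CU\,\alpha_{\geq 0}$ is generated by $\derF_{\geq 0}+\alpha_{\geq 0}^{-1}(\ad_{\alpha_{\geq 0}(\chh)})=\derF$. But the composition rule (as recalled in the paper) says that if $\alpha_1(s),\alpha_2(s)$ are generated by $\derF_1(s),\derF_2(s)$, then $\alpha_1(s)\alpha_2(s)$ is generated by $\derF_2(s)+\alpha_2(s)^{-1}(\derF_1(s))$. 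With your choices this generator is $\derF_{\geq 0}+\alpha_{\derF_{\geq 0}}(s)^{-1}\bigl(\ad_{\alpha_{\derF_{\geq 0}}(1)(\chh)}\bigr)$, which equals $\derF$ only at $s=1$, not for all $s$. Hence you cannot conclude $\Ad_\CU\,\alpha_{\geq 0}=\alpha_\derF(1)$ from uniqueness.

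The fix is exactly what the paper does: use the \emph{time-dependent} family $s\mapsto \alpha_{\derF_{\geq 0}}(s)(\chh)$ in Lemma~\ref{lma:inner} (which allows time-dependent $\chh$). The resulting $\CU(s)$ then satisfies $\Ad_{\CU(s)}\alpha_{\derF_{\geq 0}}(s)=\alpha_\derF(s)$ for all $s$, either by a direct differentiation as in the paper, or by the composition rule, since now the generator is $\derF_{\geq 0}+\alpha_{\derF_{\geq 0}}(s)^{-1}\bigl(\ad_{\alpha_{\derF_{\geq 0}}(s)(\chh)}\bigr)=\derF_{\geq 0}+\ad_\chh=\derF$ for every $s$. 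With this one change your argument becomes the paper's proof.
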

\begin{proof}
Suppose $\derF \in \mfkDalp$. Let $\chh \in \mfkdal$ and $\derH \in \mfkDalsp$ be such that $\derF = \ad_{\chh} + \derH$. By Lemma \ref{lma:inner}, there is a unique solution $\CU \in C^1([0,1],\SUal)$ of the differential equation 
\beq
\frac{d} {ds} \CU(s) = \CU(s)\, \alpha_{\derH}(s)(\chh)
\eeq
with the initial condition $\CU(0) = 1$. By a direct computation, we have $\alpha_{\derF}(s) = \Ad_{\CU(s)}  \alpha_{\derH}(s)$. Since $\alpha_{\derH}(s) \in \LPAsp$, we have $\alpha_{\derF}(s) \in \alLPAp$. The case $\derF \in \mfkDalm$ can be treated similarly. 
\end{proof}

Let us show that the anomaly index of any $U(1)$ symmetry of a 1d quantum spin system is trivial\footnote{Since we only discuss spin systems which have a finite-dimensional on-site Hilbert space, our result does not rule out the possibility of an anomalous $U(1)$ action for systems with infinite-dimensional on-site Hilbert spaces or more complicated algebras of observables obtained by imposing local constraints.}. We parameterize the elements of the group $U(1)$ as $g = e^{i \theta}$, $\theta\in\RR/2\pi\ZZ$. Let $\alpha(\theta)$ be a smooth homomorphism from $U(1)$ to the group of LGAs and $\derQ \in \mfkDal$ be the derivation that generates it. 
\begin{lemma} \label{lma:Qaveraging}
   For any $\derF\in\mfkDal_{\geq 0}$ (resp. any $\derF\in\mfkDal_{<0}$), the derivation
   \beq\derG=\frac{1}{2\pi}\int_0^{2\pi}\alpha(\theta)(\derF) d\theta
   \eeq 
   belongs to $\mfkDal_+$ (resp. to $\mfkDal_-$) and is $U(1)$-invariant.
\end{lemma}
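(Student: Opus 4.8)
The plan is to treat the two claims separately, the membership $\derG\in\mfkDal_+$ being the substantive one; I will only write the case $\derF\in\mfkDal_{\geq 0}$, the case $\derF\in\mfkDal_{<0}$ following by reflecting the line. First I would record that, since $\derQ$ generates $\alpha$, one has $\alpha(\theta)=\alpha_{\theta\derQ}(1)$ for $\theta\in[0,2\pi]$, where $\theta\derQ$ denotes the constant family in $C^0([0,1],\mfkDal)$; because $\{\theta\derQ:\theta\in[0,2\pi]\}$ is uniformly $f$-localized (with $f=2\pi f_0$ if $\derQ$ is $f_0$-localized), Prop.~E.1 of~\cite{LocalNoether} provides a single $g(r)=\Or$ bounding the tails of every $\alpha(\theta)$, $\theta\in[0,2\pi]$. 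It then follows from the locality estimates of~\cite{LocalNoether} that $\theta\mapsto\alpha(\theta)(\derF)$ is a continuous map $[0,2\pi]\to\mfkDal$ and that each $\alpha(\theta)$ acts on $\mfkDal$ by a continuous linear operator. Hence $\derG=\tfrac{1}{2\pi}\int_0^{2\pi}\alpha(\theta)(\derF)\,d\theta$ is a well-defined element of $\mfkDal$ (a Riemann integral of a continuous function valued in a \Frechet\ space), the operators $\alpha(\theta')$ and the continuous linear map $\derH\mapsto\derH^Y$ may be moved inside the integral, and in particular $\derG^Y=\tfrac{1}{2\pi}\int_0^{2\pi}(\alpha(\theta)(\derF))^Y\,d\theta$ for every $Y\in\Br_1$. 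The $U(1)$-invariance is then immediate from the homomorphism property and $2\pi$-periodicity: $\alpha(\theta')(\derG)=\tfrac{1}{2\pi}\int_0^{2\pi}\alpha(\theta'+\theta)(\derF)\,d\theta=\derG$.

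For $\derG\in\mfkDal_+$, since $\|\derG^Y\|\leq\sup_{\theta\in[0,2\pi]}\|(\alpha(\theta)(\derF))^Y\|$, it is enough to bound $(1+\dist(Y,\RRp))^\alpha\,\|(\alpha(\theta)(\derF))^Y\|$ uniformly in $Y\in\Br_1$ and $\theta\in[0,2\pi]$, for each $\alpha\in\NN$. I would represent $\derF$ by the family $\{\chf_j\}_{j\in\ZZ}$ with $\chf_j:=\sum_{Y\in\Br_1:\,\min Y=j}\derF^Y$ for $j\geq 0$ and $\chf_j:=0$ for $j<0$: this is admissible because $\derF\in\mfkDal_{\geq 0}$ forces $\derF^Y=0$ unless $Y\subseteq\RRp$, and one checks from $\derF\in\mfkDal$ that $\chf_j\in\SAal_{[j,+\infty)}$, that $\chf_j$ is $f$-localized on $j$ for a single $f=\Or$, and that $\sup_j\|\chf_j\|<\infty$. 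Then $\alpha(\theta)(\derF)=\sum_{j\geq 0}\ad_{\alpha(\theta)(\chf_j)}$, and Lemma~\ref{lma:alLPAmapsAal2Aal} shows each $\alpha(\theta)(\chf_j)$ is $h$-localized on $j$ for a single $h=\Or$ depending only on $f$, $g$ and $\sup_j\|\chf_j\|$, hence uniform in $j$ and $\theta$.

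The hard part will be the bookkeeping step: showing that a sum $\sum_{j\geq 0}\ad_{\chh_j}$ with the $\chh_j$ uniformly $h$-localized on $j$ ($h=\Or$) and $\sup_j\|\chh_j\|<\infty$ lies in $\mfkDal_+$, with $\mfkDal_+$-seminorms controlled by $h$ and $\sup_j\|\chh_j\|$. The decay in $\diam Y$ is free, being exactly the statement that $\alpha(\theta)(\derF)\in\mfkDal$ (already part of the action of $\alLPA$ on $\mfkDal$, with bounds uniform in $\theta$). For the decay in $\dist(Y,\RRp)$ I would invoke the reconstruction formula for canonical pieces of almost-local derivations from~\cite{LocalNoether}: for an interval $Y$ with $\dist(Y,\RRp)=d$, the contribution of $\ad_{\chh_j}$ (with $j\geq 0$) to the canonical piece on $Y$ is controlled by the norm of the part of $\chh_j$ localized within a neighbourhood of $Y$, which is $\lesssim\|\chh_j\|\,h(d+j)$; summing over $j\geq 0$ and using $h(r)\leq C_N(1+r)^{-N}$ for every $N$ yields $\bigl\|\bigl(\sum_{j\geq 0}\ad_{\chh_j}\bigr)^Y\bigr\|\lesssim\sum_{j\geq 0}h(d+j)=\Or$ in $d$. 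Because $g$, $f$, $h$ and $\sup_j\|\chf_j\|$ are all independent of $\theta$, the resulting estimate on $(1+\dist(Y,\RRp))^\alpha\,\|(\alpha(\theta)(\derF))^Y\|$ is uniform in $\theta$, and $\derG\in\mfkDal_+$ follows. Equivalently, this step is the general assertion that an almost-local LPA with $\Or$ tails carries $\mfkDal_{\geq 0}$ into $\mfkDal_+$ with controlled bounds, which should be extractable from the locality estimates of~\cite{LocalNoether}.
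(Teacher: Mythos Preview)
Your approach is correct, but it differs from the paper's in a genuinely interesting way. The paper never estimates $(\alpha(\theta)(\derF))^Y$ directly. Instead it decomposes the automorphism itself: writing $\derQ=\derQ_{<0}+\ad_{\chq}+\derQ_{\geq 0}$ and applying Lemma~\ref{lma:alphadecompositionsmooth2} over the interval $[0,2\pi]$ to obtain smooth families $\alpha_{\geq 0}(\theta)\in\alLPAsp$, $\CU(\theta)\in\SUal$ with $\alpha(\theta)=\Ad_{\CU(\theta)}\,\alpha_{\geq 0}(\theta)\,\alpha_{<0}(\theta)$. Since $\alpha_{<0}(\theta)$ fixes $\derF\in\mfkDalsp$, one gets $\alpha(\theta)(\derF)=\derH(\theta)+\ad_{\chh(\theta)}$ with $\derH(\theta)=\alpha_{\geq 0}(\theta)(\derF)\in\mfkDalsp$ and $\chh(\theta)=\CU(\theta)\,\derH(\theta)(\CU(\theta)^*)\in\mfkdal$. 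Integration then happens in the two closed Fr\'echet subspaces $\mfkDalsp$ and $\mfkdal$ separately, so $\derG\in\mfkDalsp+\ad(\mfkdal)\subset\mfkDalp$ with no tail bookkeeping at all. Your route avoids the smooth decomposition lemma, working entirely with uniform locality estimates and Lemma~\ref{lma:alLPAmapsAal2Aal}; the price is the ``hard part'' you identify, namely controlling the canonical pieces of $\sum_{j\geq 0}\ad_{\alpha(\theta)(\chf_j)}$, which requires a quantitative reconstruction bound from \cite{LocalNoether}. Both methods work; the paper's is shorter because it leverages structure already established, while yours is more self-contained. For $U(1)$-invariance the paper checks the infinitesimal condition $[\derQ,\derG]=0$ via $\frac{d}{d\theta}\alpha(\theta)(\derF)=\alpha(\theta)([\derQ,\derF])$ and the fundamental theorem of calculus; your integrated change-of-variables argument is equivalent.
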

\begin{proof}
Suppose $\derF\in\mfkDalsp$. Let $\derQ_{<0} \in \mfkDalsm$, $\derQ_{\geq 0} \in \mfkDalsp$, $\chq \in \mfkdal$ be such that $\derQ = \derQ_{<0} + \ad_{\chq} + \derQ_{\geq 0}$. Let $\alpha_{<0}(\theta)$ be the smooth family of automorphisms generated by $\derQ_{<0}$. We have $\alpha_{<0}(\theta) \in \alLPAsm$. By Lemma \ref{lma:alphadecompositionsmooth2}, there are smooth families (over the interval $\theta \in [0, 2 \pi]$) $\alpha_{\geq 0}(\theta) \in \alLPAsp$ and $\CU(\theta) \in \SUal$ such that $\alpha(\theta) = \Ad_{\CU(\theta)} \alpha_{\geq 0}(\theta)  \alpha_{< 0}(\theta)$. We have $\alpha(\theta)(\derF) = \ad_{\chh(\theta)}+\derH(\theta)$, where $\derH(\theta) = \alpha_{\geq 0}(\theta)(\derF)$ defines a smooth map $[0,2\pi]\ra\mfkDalsp$ and $\chh(\theta)=\CU(\theta)\derH(\theta)(\CU(\theta)^*)$ defines a smooth map $[0,2\pi]\ra\mfkdal$. Since $\mfkDalsp$ and $\ad(\mfkdal)$ are subspaces of $\mfkDalp$, we conclude that $\derG \in \mfkDalp$. The case  $\derF\in\mfkDalsm$ is treated similarly. 

$U(1)$-invariance of $\derG$ follows from
\beq
[\derQ,\derG]=\frac{1}{2\pi}\int_0^{2\pi}\alpha(\theta)([\derQ,\derF])d\theta = \frac{1}{2\pi}\int_0^{2\pi} \frac{d}{d\theta}\alpha(\theta)(\derF) d\theta = \alpha(2 \pi)(\derF) - \derF = 0.
\eeq
\end{proof}

We are now ready to prove
\begin{prop}\label{prop:trivialanomalyindex}
    Let $\alpha(\theta)$, $\theta\in \RR/2\pi\ZZ$ be a smooth action of $U(1)$ on a 1d spin system. Then the corresponding anomaly index taking values in $H^3_{diff}(U(1),U(1))\simeq H^4(BU(1),\ZZ)\simeq\ZZ$ vanishes. 
\end{prop}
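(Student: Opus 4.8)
The plan is to realize $\alpha$ as a product of two commuting one-parameter subgroups, one ``supported near the right half-line'' and one ``near the left half-line,'' and then to show that the only potential residual obstruction — a monodromy living in $\alLPAz$ — carries no charge. First, note that a smooth homomorphism of the connected compact group $U(1)$ automatically has trivial GNVW index (by continuity of that index), so the $\alpha(\theta)$ are locally generated and there is $\derQ\in\mfkDal$ generating $\alpha$; since $\derQ$ commutes with its own flow, $\alpha(\theta)(\derQ)=\derQ$ for all $\theta$. Writing the canonical decomposition $\derQ=\derQ_{<0}+\ad_{\chq}+\derQ_{\geq 0}$ and averaging over $U(1)$, I set $\derG_-:=\frac1{2\pi}\int_0^{2\pi}\alpha(\theta)(\derQ_{<0})\,d\theta$ and $\derG_+:=\frac1{2\pi}\int_0^{2\pi}\alpha(\theta)(\derQ_{\geq 0}+\ad_{\chq})\,d\theta$; by Lemma~\ref{lma:Qaveraging} (applied to $\derQ_{<0},\derQ_{\geq 0}$) together with the elementary fact that $\frac1{2\pi}\int_0^{2\pi}\alpha(\theta)(\ad_{\chq})\,d\theta=\ad_{\chq'}$ with $\chq'\in\mfkdal$ a $U(1)$-invariant element, both $\derG_-\in\mfkDalm$ and $\derG_+\in\mfkDalp$ are $U(1)$-invariant and $\derG_-+\derG_+=\derQ$; being $U(1)$-invariant and summing to $\derQ$ they commute. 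Letting $\beta_{\pm}$ be the one-parameter groups they generate, Lemma~\ref{lma:FplusGenerates} gives $\beta_+(\theta)\in\alLPAp$, $\beta_-(\theta)\in\alLPAm$, the flows commute, and $\beta_-(\theta)\beta_+(\theta)=\alpha(\theta)$.

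Set $\chM:=\beta_+(2\pi)$. From $\beta_-(2\pi)\beta_+(2\pi)=\alpha(2\pi)=\Id$ and commutativity, $\chM=\beta_-(2\pi)^{-1}\in\alLPAp\cap\alLPAm=\alLPAz$, so $\chM=\Ad_{\CU}$ for some $\CU\in\SUal$. Since $\chM$ commutes with the one-parameter group $\beta_+$, we get $\beta_+(\theta)(\CU)=e^{ic\theta}\CU$ with $c\in\ZZ$ forced by $\beta_+(2\pi)(\CU)=\Ad_{\CU}(\CU)=\CU$; differentiating, $\derG_+(\CU)=ic\,\CU$. I will show below that $c=0$, and then verify that this makes the anomaly cocycle trivial.

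The vanishing of $c$ is the key step (and, I expect, the main obstacle). For $R>0$ put $K_R:=-i\sum_{Y\subseteq[-R,R]}\derG_+^{Y}$, a finite-dimensional self-adjoint observable supported on $[-R,R]$ with $\|K_R\|=O(R)$. Because $\CU$ is almost local, the terms of $\derG_+(\CU)=\sum_Y[\derG_+^Y,\CU]$ with $Y\not\subseteq[-R,R]$ have total norm $\Or$ (rapidly decaying in $R$), so $[K_R,\CU]=c\,\CU+\Or$. Writing $\CE_R$ for the conditional expectation onto $\SAql_{[-R,R]}$ one has $\|\CE_R(\CU)-\CU\|=\Or$, hence $\|\CE_R(\CU)^*\CE_R(\CU)-1\|=\Or$ and $[K_R,\CE_R(\CU)]=c\,\CE_R(\CU)+\Or$; decomposing $\CE_R(\CU)=\hat U_R|\CE_R(\CU)|$ with $\hat U_R$ unitary and using $\|K_R\|=O(R)$ to absorb the error times $K_R$, this yields that the self-adjoint matrix $K_R$ is unitarily equivalent to $K_R+c+\zeta_R$ with $\|\zeta_R\|=\Or\to 0$. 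Equal (sorted) spectra together with Weyl's perturbation inequality then force $|c|\leq\|\zeta_R\|$, and since $c$ is a fixed integer, $c=0$. (Note that $\mfkDalp$ controls decay only to the left, but this is harmless: the truncation to a finite box $[-R,R]$ still loses only an $\Or$ amount once it is commuted against the almost local $\CU$.)

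Finally, with $c=0$ the unitary $\CU$ is fixed by $\beta_+$, and I compute the anomaly index using $\beta:=\beta_+$ restricted to the charts of a good simplicial cover of $B_\bullet U(1)$ via smooth local lifts to $\RR$. Because $\beta_+$ is a homomorphism of $\RR$, one gets $(d_2^*\beta)(d_0^*\beta)(d_1^*\beta)^{-1}=\Ad_{\CU^{\,n}}$ and $(\delta_1^*\beta)(\delta_0^*\beta)^{-1}=\Ad_{\CU^{\,w}}$, where $n\colon U(1)^2\to\ZZ$ is the carry $2$-cocycle of $0\to\ZZ\to\RR\to U(1)\to0$ and $w$ is the \v{C}ech cocycle recording the monodromy; so one may take $\CV=\CU^{\,n}$, $\CW=\CU^{\,w}$. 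Substituting into \eqref{eq:omegaLie}, \eqref{eq:upsLie}, \eqref{eq:etaLie}, every factor is a power of the $\beta_+$-fixed unitary $\CU$, and the resulting exponents are $-\delta n$, the \v{C}ech coboundary of $w$, and the corresponding mixed combination of $n$ and $w$ — all vanishing because $n$ and $w$ are cocycles. Hence $\omega=\ups=\eta=1$, so the class $[(\omega,\ups,\eta)]\in H^3_{diff}(U(1),U(1))$ is trivial. Everything here is bookkeeping with the lemmas already established; the one substantive point is $c=0$, which is precisely where the compactness of $U(1)$ — the integrality of its charge, encoded in $\beta_+(2\pi)=\Ad_{\CU}$ — is used.
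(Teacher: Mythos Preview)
Your proof is correct and follows the same overall architecture as the paper's: average the generator to obtain commuting pieces $\derG_\pm\in\mfkDal_\pm$, identify the monodromy $\beta_+(2\pi)=\Ad_\CU$, and show that $\CU$ carries no $\derG_+$-charge so that the cocycle built from powers of $\CU$ is trivial. The substantive difference is in how you establish $c=0$. The paper truncates $\derQ_+$ to $\ad_{\chq_L}$ and uses $[\derQ_+,\derQ_-]=0$ to show $\derQ_-(\chq_L)=\OL$, whence $\CU\chq_L\CU^*-\chq_L=\alpha_{\derQ_-}(-2\pi)(\chq_L)-\chq_L=\OL$ and so $\derQ_+(\CU)=[\chq_L,\CU]+\OL=\OL$ for all $L$. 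Your route is a spectral rigidity argument: the finite-volume self-adjoint charge $K_R$ is unitarily conjugate to $K_R+c+\zeta_R$ with $\|\zeta_R\|\to 0$, and Weyl's inequality on the (identical) sorted spectra forces $|c|\leq\|\zeta_R\|$. This is a genuinely different and rather clean mechanism; it makes the ``no charge pumping'' intuition manifest and avoids tracking the $\derQ_-$-flow of the truncated charge. Your concluding step is also slightly sharper than the paper's: you observe that with $\beta_+(\CU)=\CU$ the exponents appearing in $(\omega,\ups,\eta)$ are exactly the total-complex coboundary of the $\ZZ$-valued $2$-cocycle $(n,w)$ representing the Chern class of $\ZZ\to\RR\to U(1)$, hence vanish identically; the paper instead notes that the exponents are integers, so the class is torsion in $H^3_{diff}(U(1),U(1))\simeq\ZZ$ and therefore zero. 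Both work, but yours gives $\omega=\ups=\eta=1$ on the nose.
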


\begin{proof}
Recall that $\mfkDal$ is a closed sub-space of $\mfkDal_{\geq 0}\oplus\mfkDal_{<0}\oplus\mfkdal$. Let $\derQ_{\geq 0}$ be the component of $\derQ$ in $\mfkDal_{\geq 0}$, and let 
\beq
\derQ_+=\frac{1}{2\pi}\int_0^{2\pi}\alpha(\theta)(\derQ_{\geq 0}) d\theta
\eeq
By Lemma \ref{lma:Qaveraging}, $\derQ_+\in\mfkDal_+$ and is $U(1)$-invariant. Further, if we let $\derQ_-=\derQ-\derQ_+$, then we have 
\beq
\derQ_-=\frac{1}{2\pi}\int_0^{2\pi} \alpha(\theta)(\derQ-\derQ_+)d\theta,
\eeq
and thus  $\derQ_-\in\mfkDal_-$. 

Let $U_{\bullet,\bullet}$ be a good simplicial cover of $B_\bullet U(1)$. Each chart of $B_1U(1)=U(1)$ is an interval that either contain $g = 1$ or does not. If it does contain it, we let $\beta\left(e^{i\theta}\right)=\alpha_{\derQ_+}(\theta)$ for $\theta \in (0 ,2 \pi)$. Otherwise, we let $\beta\left(e^{i\theta}\right)=\alpha_{\derQ_+}(\theta)$ for $\theta \in (-2\pi + \theta_c, \theta_c)$, where $\theta_c \in (0 , 2\pi)$ is any point that does not belong to the chart. Since $[\derQ_+,\derQ_-]=0$ and $\alpha_\derQ (2 \pi)=\Id,$ we have
$\alpha_{\derQ_+}(2 \pi)\alpha_{\derQ_-}(2 \pi)=\Id$. Therefore by Lemma \ref{lma:nonuniquenesssmooth}, there exists a $\CU\in\SUal$ such that $\Ad_{\CU} = \alpha_{\derQ_+}(2 \pi)$.
With this choice of $\beta(g)$, both $\CV$ and $\CW$ can be chosen to be locally constant and equal to either 1 or $\CU$ or $\CU^{-1}$. 

Next we prove that $\derQ_+(\CU)=0$. Let $L \in \NN$, and let $\derQ_{+,L}$ be the derivation defined by $\derQ_{+,L}^Y=0$ for any $Y\in \Br_1$ such that $Y\subset [L+1/2,+\infty)$ and $\derQ_{+,L}^Y=\derQ_+^Y$ for any other $Y\in\Br_1$.  We have $\derQ_{+,L} = \ad_{\chq_L}$ where $\chq_L \in \mfkdal$ is given by 
\beq
\chq_L=\sum_{Y\cap (-\infty,L]\neq\emptyset} \derQ_+^Y.
\eeq
This sum is \Frechet -convergent because (by Lemma \ref{lma:Qaveraging}) $\derQ_+\in\mfkDalp$. Also, it is easy to see that $[\derQ_-,\derQ_+-\derQ_{+,L}]$ has the form $\ad_\chf$ for some $\chf\in\mfkdal$ with $\|\chf\|=\OL$. Since $[\derQ_-,\derQ_+]=0$, this implies $\derQ_-(\chq_L) = \OL$, and therefore 
\beq
[\CU,\chq_L]\CU^* = \CU\chq_L\CU^* - \chq_L = \alpha_{\derQ_-}(-2\pi)(\chq_L) - \chq_L = - \int_{0}^{2 \pi} d \theta \alpha_{\derQ_-}(-\theta)(\derQ_-(\chq_L)) = \OL .
\eeq
On the other hand, since $\derQ_+(\CU) = [\chq_L,\CU] + \OL$, we conclude that $\derQ_+ (\CU) = \OL$. Since $\derQ_+ (\CU)$ is independent of $L$ and $L$ is arbitrary, we get $\derQ_+ (\CU) = 0$.

This implies that $[(\omega,\ups,\eta)]$ is trivial. Indeed, we have $\omega=\CU^{d\mu}$, $\ups=\CU^{\delta \mu+d\nu}$,$\eta=\CU^{\delta \nu}$ for some $(\mu,\nu)\in C^2(U_{\bullet,\bullet},\ZZ)$. If the 3-cochain $(d\mu, \delta \mu+d \nu,\delta\nu)$ is zero, we are done. If it is not zero, let $N$ be the least common multiple of the set of all nonzero vales of its components. Since the values of the cochain $(\omega,\ups,\eta)$ are proportional to the identity observable, the same must hold for $\CU^N$. Therefore the cohomology class  $[(\omega,\ups,\eta)]$ is annihilated by $N$. But $H^3_{diff}(U(1),U(1))\simeq H^4_{sing}(BU(1),\ZZ)\simeq\ZZ$ is torsion-free, so the cohomology class $[(\omega,\ups,\eta)]$ is trivial.

\end{proof}

\begin{corollary}
    For any compact connected Lie group $G$ and any smooth homomorphism $\alpha:G\ra\alLPA$, the anomaly index vanishes.
\end{corollary}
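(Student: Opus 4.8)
The plan is to restrict the symmetry to a maximal torus, then to one-parameter subgroups, and thereby reduce everything to Proposition \ref{prop:trivialanomalyindex}. Two general observations are needed at the outset. First, the differentiable anomaly index is functorial for smooth homomorphisms: if $\phi:G'\to G$ is smooth and $\alpha:G\to\alLPA$ is a smooth action, then the anomaly index of $\alpha\circ\phi$ equals the image of the anomaly index of $\alpha$ under $\phi^*:H^3_{diff}(G,U(1))\to H^3_{diff}(G',U(1))$. This is proved exactly as in Remark \ref{rem:functoriality}: pull back a good simplicial cover $U_{\bullet,\bullet}$ of $B_\bullet G$ together with the auxiliary data $(\beta,\CV,\CW)$ along $B_\bullet\phi$, so that the cocycle $(\omega,\ups,\eta)$ pulls back to the corresponding cocycle for $G'$. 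Second, when $G$ is compact and connected the image of any smooth $\alpha:G\to\alLPA$ consists of LGAs: given $g\in G$, choose a smooth path $\gamma$ from $e$ to $g$; by Remark \ref{rem:MCintegration}, $\alpha(g)=\alpha(e)\,\alpha_{\gamma^*\derG}(1)=\alpha_{\gamma^*\derG}(1)$ with $\derG=\alpha^{-1}d\alpha$, which is locally generated. In particular such $\alpha$ always has GNVW-trivial image, so its anomaly index is the one defined by eqs. (\ref{eq:omegaLie})--(\ref{eq:etaLie}), and the same holds for its restriction to any closed subgroup.

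Step 1: the torus. Let $T\cong U(1)^k$ and let $\alpha:T\to\alLPA$ be smooth with anomaly index $a\in H^3_{diff}(T,U(1))\cong H^4_{sing}(BT,\ZZ)$, where $BT\simeq(\CCC{\mathbb P}^\infty)^k$. By the K\"unneth theorem, $H^4(BT,\ZZ)$ is free abelian with basis $\{x_i^2\}_{1\le i\le k}\cup\{x_ix_j\}_{1\le i<j\le k}$, where $x_i\in H^2$ is the generator coming from the $i$-th factor; write $a=\sum_i a_i x_i^2+\sum_{i<j}c_{ij}x_ix_j$. For $\mathbf n=(n_1,\dots,n_k)\in\ZZ^k$ let $\phi_{\mathbf n}:U(1)\to T$ be $z\mapsto(z^{n_1},\dots,z^{n_k})$; then $\phi_{\mathbf n}^*x_i=n_i x$, with $x$ the generator of $H^2(\CCC{\mathbb P}^\infty,\ZZ)$. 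By functoriality and Proposition \ref{prop:trivialanomalyindex}, $0=\phi_{\mathbf n}^*a=\big(\sum_i a_i n_i^2+\sum_{i<j}c_{ij}n_in_j\big)x^2$ for every $\mathbf n$. Taking $\mathbf n$ to be the $i$-th standard basis vector gives $a_i=0$, and then $\mathbf n=e_i+e_j$ gives $c_{ij}=0$. Hence $a=0$.

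Step 2: the general case. Let $G$ be compact connected, $T\subset G$ a maximal torus, $\iota:T\hookrightarrow G$ the inclusion, and $[\omega]\in H^3_{diff}(G,U(1))\cong H^4_{sing}(BG,\ZZ)$ the anomaly index of $\alpha$. By functoriality, $\iota^*[\omega]$ is the anomaly index of $\alpha|_T$, which vanishes by Step 1. It remains to show that $\iota^*:H^4(BG,\ZZ)\to H^4(BT,\ZZ)$ is injective. Since $G$ is connected, $BG$ is simply connected with $\pi_2(BG)=\pi_1(G)$ and $\pi_3(BG)=\pi_2(G)=0$, so the second Postnikov map $BG\to K(\pi_1(G),2)$ has $3$-connected homotopy fiber and hence induces isomorphisms $H_i(BG,\ZZ)\cong H_i(K(\pi_1(G),2),\ZZ)$ for $i\le3$. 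As $H_3(K(A,2),\ZZ)=0$ for every finitely generated abelian group $A$ and $\pi_1(G)$ is finitely generated abelian, we get $H_3(BG,\ZZ)=0$, so by the universal coefficient theorem $H^4(BG,\ZZ)\cong\Hom(H_4(BG,\ZZ),\ZZ)$ is torsion-free. By Borel's theorem, restriction $H^4(BG,\mathbb{Q})\to H^4(BT,\mathbb{Q})$ is injective; since a torsion-free abelian group embeds in its rationalization, $\iota^*$ is injective on $H^4(BG,\ZZ)$. Therefore $[\omega]=0$.

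All the conceptual steps here are forced. The only point requiring genuine care is excluding torsion in $H^4(BG,\ZZ)$, which is precisely what the vanishing $H_3(BG,\ZZ)=0$ accomplishes; the remaining ingredients---functoriality of the differentiable anomaly index, the classical identity $H_3(K(A,2),\ZZ)=0$, and Borel's rational injectivity---are standard.
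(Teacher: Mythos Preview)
Your argument is correct and follows the same two-step strategy as the paper: reduce to a maximal torus by functoriality, then to $U(1)$ subgroups to invoke Proposition \ref{prop:trivialanomalyindex}. The torus step is essentially identical (both identify $H^4(BT,\ZZ)$ with integral quadratic forms on the coweight lattice and kill the form by testing on one-parameter subgroups).

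The only substantive difference is in justifying the injectivity of $\iota^*:H^4(BG,\ZZ)\to H^4(BT,\ZZ)$. The paper simply cites \cite{Henriques} for the stronger statement that $\iota^*$ is an isomorphism onto the Weyl-invariants. You instead give a self-contained argument: $H^4(BG,\ZZ)$ is torsion-free because $H_3(BG,\ZZ)\cong H_3(K(\pi_1(G),2),\ZZ)=0$ (using $\pi_2(G)=0$), and then Borel's rational injectivity upgrades to integral injectivity. This is a nice unpacking of what the cited reference ultimately relies on, and it buys independence from that source at the cost of invoking the Eilenberg--MacLane computation $H_3(K(A,2),\ZZ)=0$. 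One minor technical point you glossed over: the pullback of a good simplicial cover along $B_\bullet\phi$ is not literally good in general, but this is harmless since one can refine and the class is cover-independent.
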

\begin{proof}
First, the anomaly index vanishes when $G$ is any compact connected abelian Lie group, i.e. when $G$ is a torus $\TT$. Indeed, on the one hand, the anomaly index is functorial in $G$, so by Prop. \ref{prop:trivialanomalyindex} must vanish when restricted to any $U(1)$ subgroup of $\TT$. On the other hand, since $H^3_{sing}(B\TT,U(1))$ is trivial, the coefficient long exact sequence implies that $H^3_{diff}(B\TT,U(1))\simeq H^4_{sing}(B\TT,\ZZ)$ injects into $H^4_{sing}(B\TT,\RR)$. The latter group is a real vector space which is canonically identified with the space of homogeneous  
quadratic polynomials on the Lie algebra of $\TT$. Thus an element of $H^4_{sing}(B\TT,\ZZ)$ is trivial if its restriction to all $U(1)$ subgroups of $\TT$ is trivial.

This implies that the anomaly index vanishes for any compact connected Lie group $G$. Indeed, one can show that for such a $G$, the inclusion of a maximal torus $\TT$ into $G$ induces an isomorphism of $H^3_{diff}(G,U(1))\simeq H^4_{sing}(BG,\ZZ)$ with the Weyl-invariant part of $H^4_{sing}(B\TT,\ZZ)$ \cite{Henriques}.
Thus the anomaly index for $G$ is determined by the anomaly index for any maximal torus of $G$, and we have shown that the latter is trivial. 
\end{proof}

\subsection{Mixed anomaly between translations and an on-site Lie symmetry}

Let us choose a smooth projective representation $\bpi$ of a compact Lie group $G_0$ on a finite-dimensional Hilbert space $\HilbV$. It defines a class in $H^2_{diff}(G_0,U(1))\simeq H^2_{sing}(BG_0,U(1))$ (see Section A.3.1). One can describe this class explicitly as follows. One picks a good simplicial cover $U_{\bullet,\bullet}$ of the simplicial manifold $B_\bullet G_0$. On $U_{1,0}$, one can lift $\bpi$ to a smooth map $G_0\ra \UnitaryOper(\HilbV)$ which we also denote by $\bpi$. Let $\rho=(d_0^*\bpi)^{-1} (d_1^*\bpi)(d_2^*\bpi)^{-1}\in C^\infty(U_{2,0},U(1))$ and $\sigma=(\delta_0^*\bpi )^{-1}\left(\delta_1^*\bpi\right)\in C^\infty(U_{1,1},U(1))$. Then $(\rho,\sigma)$ is a 2-cocycle and $[(\rho,\sigma)]\in H^2_{diff}(G_0,U(1))$ is the desired class.

Consider a 1d spin system which has $\HilbV$ as the on-site Hilbert space and a homomorphism from $G=G_0\times\ZZ$ to $\alLPA$ such that $(g, n) \in G_0 \times \ZZ$ is mapped to 
\beq
\tau^n  \prod_{j\in\ZZ}\Ad_{\bpi_j(g)},
\eeq
where $\tau$ is the translation to the right by one site. 
Let us compute the corresponding anomaly class in $H^3_{diff}(G,U(1))$. It vanishes when pulled back to $H^3_{diff}(G_0,U(1))$ since the action of $G_0$ is on-site. Since $H^3_{diff}(G,U(1))=H^3_{diff}(G_0,U(1))\oplus H^2_{diff}(G_0,U(1))$ (see Section A.1), the anomaly index for $G_0\times\ZZ$ can be regarded as an element of  $H^2_{diff}(G_0,U(1))$. 

We claim that the anomaly index is equal to $[(\rho,\sigma)]$. To see this, we follow the same steps as in Example  \ref{example:LSMfinite}: stack the system with its copy on which only $\ZZ$ acts non-trivially, so that the action of $G_0\times\ZZ$ on the composite is by LGAs. We use the same restriction of the automorphisms $\alpha(g,n)$ on the right half-chain as in Section \ref{example:LSMfinite}, namely (\ref{eq:betaLSM}). These automorphisms can be defined globally on $G$ because the action of $G_0$ is on-site. As a result $\CW:U_{1,1}\times \ZZ\ra \SUal$ can be chosen  trivial, while $\CV:U_{2,0}\ra\SUal$ is given by the same formula as before:
\beq
\CV(g,0;g',0)=\CV(g,1;g',0)=1,\quad \CV(g,0;g',1)=\bpi_1(g)\otimes 1.
\eeq
The only difference is that now $\CV$ cannot be defined globally on $B_2G$. Thus, $\eta$ defined by (\ref{eq:etaLie}) is trivial, but both $\omega$ and $\ups$ are non-trivial. 

It remains to evaluate the 3-cochain $(\omega,\ups,\eta)\in C^3(U_{\bullet,\bullet},U(1))$ and to compute its slant product with the generator $[1]$ of $H_1(\ZZ,\ZZ)$. This gives $[(\omega,\ups,\eta)]/[1]=[(\rho,\sigma)]$.

\subsection{Absence of symmetric gapped states}

To prove a version of Theorem \ref{thm:gLSM} for Lie group symmetries realized by $\alLPA$, the split property of gapped 1d states is not enough. One also needs to use a certain property of derivations which do not excite a gapped state.
\begin{definition}
    A derivation $\derF\in\mfkDal$ does not excite a state $\psi$ if $\psi(\derF(\CA))=0$ for all $\CA\in\SAal$. Similarly, $\derF\in \Omega^1(M,\mfkDal)$, where $M$ is a manifold, does not excite $\psi$ if $\psi(\derF(\CA))=0$ for all $\CA\in\SAal$.
\end{definition}
\begin{remark}\label{rem:donotexcitepreserve}
If $\derF\in C^0([0,1],\mfkDal)$ is a family of derivations each of which does not excite $\psi$, then $\alpha_\derF(1)\in\alLPA$ preserves $\psi$.
\end{remark}
\begin{remark}\label{rem:derGNSideal}
    If $\derF\in\mfkDal$ is an anti-self-adjoint derivation which does not excite $\psi$, then $\derF$ preserves the intersection of the GNS ideal of $\psi$ and $\SAal$. Indeed, suppose $\psi(\CA^*\CA)=0$ for some $\CA\in\SAal$, then 
\beq
\psi(\derF(\CA)^*\derF(\CA))=\psi(\derF(\CA^*\derF(\CA)))-\psi(\CA^*\derF(\derF(\CA)))=0.
\eeq
Therefore, if we let $\CD_\psi\subset \CH_\psi$ be a dense subspace spanned by vectors of the form $\pi_\psi(\CA)|\psi\ral$, $\CA\in\SAal$, then to every such $\derF$ one can associate an unbounded operator $F:\CD_\psi\ra\CH_\psi$ by letting
\beq
F\pi_\psi(\CA)|\psi\ral=\pi_\psi(\derF(\CA))|\psi\ral.
\eeq
The operator $F$ is skew-symmetric and satisfies $F|\psi\ral=0$.
\end{remark}

The following result is a corollary of Theorem 4 from Ref. \cite{LocalNoether}:
\begin{corollary}\label{cor:exactness}
    Let $\psi$ be a gapped state. Suppose $\derF\in \Omega^1(M,\mfkDal)$ does not excite $\psi$. Then there exist $\derF_{<0} \in \Omega^1(M,\mfkDalsm)$, $\chf_{\pm}\in\Omega^1(M, \mfkdal)$, $\derF_{\geq 0}\in\Omega^1(M,\mfkDalsp)$ such that
    $\derF=\derF_{<0} + \ad_{\chf_{-}} + \ad_{\chf_{+}} + \derF_{\geq 0}$ and $\derF_{<0} + \ad_{\chf_-}$ and $\derF_{\geq 0} + \ad_{\chf_+}$ do not excite  the state $\psi$.
\end{corollary}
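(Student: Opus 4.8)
The plan is to begin with the canonical decomposition of $\derF$ recalled in the main text and then invoke Theorem 4 of \cite{LocalNoether} to redistribute its ``straddling'' part so that the two half-line pieces separately fail to excite $\psi$. Applying fibrewise the identification of $\mfkDal$ with a subspace of $\mfkDalsm\oplus\mfkdal\oplus\mfkDalsp$, write $\derF = \derF_{<0} + \ad_{\chf_0} + \derF_{\geq 0}$, where $\derF_{<0}\in\Omega^1(M,\mfkDalsm)$ collects the components $\derF^Y$ with $Y\subset\RRm$, $\derF_{\geq 0}\in\Omega^1(M,\mfkDalsp)$ collects those with $Y\subset\RRp$, and $\chf_0\in\Omega^1(M,\mfkdal)$ is the sum of the components $\derF^Y$ with $Y$ meeting $-1/2$; this decomposition depends smoothly on $M$. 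These $\derF_{<0}$ and $\derF_{\geq 0}$ will be the ones in the statement, so the task reduces to splitting $\chf_0 = \chf_- + \chf_+$ with $\chf_\pm\in\Omega^1(M,\mfkdal)$ in such a way that $\derF_{<0}+\ad_{\chf_-}$ and $\derF_{\geq 0}+\ad_{\chf_+}$ do not excite $\psi$.

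First I would apply Theorem 4 of \cite{LocalNoether}, which rests on the split property of gapped states (Theorem \ref{lma:splitproperty}): since $\derF$ does not excite the gapped state $\psi$, it splits as $\derF = \derG_- + \derG_+$ with $\derG_-,\derG_+\in\Omega^1(M,\mfkDal)$ such that $\derG_-(p)\in\mfkDalm$ and $\derG_+(p)\in\mfkDalp$ for all $p\in M$, and with both $\derG_\pm$ not exciting $\psi$; here I would run the construction of that theorem fibrewise over $M$, checking that it produces smooth families (the parametrized version actually needed, which is compatible with the \Frechet-differentiable framework of \cite{LocalNoether}). Now I would compare the two decompositions. On one hand, $\derG_- - \derF_{<0}$ takes values in $\mfkDalm$, since both terms do. On the other hand, $\derG_- - \derF_{<0} = \ad_{\chf_0} + \derF_{\geq 0} - \derG_+$ takes values in $\mfkDalp$, using $\mfkDalsp\subset\mfkDalp$ and $\ad(\mfkdal)\subset\mfkDalp$. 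Hence $\derG_- - \derF_{<0}$ takes values in $\mfkDalp\cap\mfkDalm = \ad(\mfkdal)$, so it equals $\ad_{\chf_-}$ for a unique $\chf_-\in\Omega^1(M,\mfkdal)$. Setting $\chf_+ := \chf_0 - \chf_-$ we get $\ad_{\chf_+} = \ad_{\chf_0} - \ad_{\chf_-} = \derG_+ - \derF_{\geq 0}$, so $\chf_+\in\Omega^1(M,\mfkdal)$ as well, $\chf_- + \chf_+ = \chf_0$, and therefore $\derF = \derF_{<0} + \ad_{\chf_-} + \ad_{\chf_+} + \derF_{\geq 0}$; finally $\derF_{<0} + \ad_{\chf_-} = \derG_-$ and $\derF_{\geq 0} + \ad_{\chf_+} = \derG_+$ do not excite $\psi$, which is the desired conclusion.

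The step I expect to be the main obstacle is producing $\chf_\pm$ as honest, smoothly varying $\mfkdal$-valued $1$-forms rather than merely pointwise elements of $\mfkdal$, because the inclusion $\ad(\mfkdal)\subset\mfkDal$ is not closed and the continuous injection $\ad$ cannot simply be inverted. I would handle this by recovering $\chf_-$ from $\ad_{\chf_-}$ through the canonical formula $\chf_- = \sum_{Y\in\Br_1}(\ad_{\chf_-})^Y$: membership of $\ad_{\chf_-}$ simultaneously in $\mfkDalp$, in $\mfkDalm$, and in $\mfkDal$ forces $\|(\ad_{\chf_-})^Y\|$ to decay polynomially of every order in $\dist(Y,\RRp)$, in $\dist(Y,\RRm)$, and in $\diam(Y)$, which makes the sum converge in every seminorm of $\mfkdal$ and depend continuously on $p\in M$; tracking the smooth dependence on $M$ of $\derG_\pm$ and of the canonical decomposition then upgrades this to smooth dependence. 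A smaller loose end is pinning down the precise parametrized statement of Theorem 4 of \cite{LocalNoether}, which is stated there for a single state but whose proof is compatible with smooth families of data.
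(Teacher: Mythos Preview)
The paper does not supply a proof: it simply presents the statement as a corollary of Theorem~4 of \cite{LocalNoether} and moves on. Your reconstruction is the natural one and is correct in outline: take the canonical splitting $\derF=\derF_{<0}+\ad_{\chf_0}+\derF_{\geq 0}$, invoke the cited theorem to get a ``does-not-excite'' splitting $\derF=\derG_-+\derG_+$ with $\derG_\pm\in\mfkDal_\pm$, and use $\mfkDalp\cap\mfkDalm=\ad(\mfkdal)$ to identify $\derG_--\derF_{<0}=\ad_{\chf_-}$, whence $\derF_{\geq 0}+\ad_{\chf_+}=\derG_+$ with $\chf_+=\chf_0-\chf_-$.

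The two points you flag as loose ends are exactly the places where work is hidden, and they are genuine but manageable. Recovering $\chf_-\in\Omega^1(M,\mfkdal)$ from $\ad_{\chf_-}$ is fine: the normalization $\langle\chf_-\rangle_\infty=0$ makes $\chf_-=\sum_Y(\ad_{\chf_-})^Y$ well defined, and the simultaneous $\mfkDalp$/$\mfkDalm$/$\mfkDal$ bounds give the required $\Or$ decay in both directions, so the sum converges in every seminorm of $\mfkdal$; smoothness in $M$ then follows from smoothness of $\derG_\pm$ and of the canonical decomposition. The parametrized form of Theorem~4 of \cite{LocalNoether} is indeed what is needed; as you note, the construction there is compatible with the \Frechet{} setup and yields smooth dependence on $M$, though this is not stated verbatim. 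Neither point is a gap in the argument, just a place where the citation is doing nontrivial work.
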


\begin{theorem} \label{thm:gLSM2}
Let $G$ be a Lie group. Suppose $G$ acts on a 1d quantum spin chain by a smooth homomorphism $\alpha: G \to \alLPA$. Suppose also there exists a $G$-invariant finite-range Hamiltonian $\derH \in \mfkDl$ whose ground state  $\psi$ is $G$-invariant and gapped. Then the anomaly index $[(\omega, \ups, \eta)] \in H_{diff}^3(G,U(1))$ vanishes.
\end{theorem}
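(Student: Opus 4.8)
The plan is to rerun the proof of Theorem~\ref{thm:gLSM} while tracking smoothness in the group variable; the only genuinely new ingredient is a construction of a \emph{smooth} family of unitaries on $\CH_\psi$ implementing the restricted automorphisms, for which I will use the exactness statement of Corollary~\ref{cor:exactness}. First I would reduce to the case in which $\alpha(G)$ consists of automorphisms with trivial GNVW index, exactly as in Theorem~\ref{thm:gLSM}: stack the system with a second copy carrying a zero-range Hamiltonian $\derH_0$ invariant under all generalized translations, and let $G$ act on the copy through the homomorphism $G\to\ZZ[\{\log p_i\}_{i\in J}]$ induced by $\alpha$, which is automatically smooth since its target is discrete. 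Then $\derH\otimes 1+1\otimes\derH_0$ is $G$-invariant and finite-range, $\psi\otimes\psi_0$ is $G$-invariant and gapped, and the anomaly index of the composite action equals that of $\alpha$ by definition; so from now on I assume $\alpha(G)$ has trivial GNVW index.

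Fix a good simplicial cover $U_{\bullet,\bullet}$ of $B_\bullet G$. The core of the argument is to choose the smooth restricted family $\beta:U_{1,0}\to\alLPAp$ from the definition of the anomaly index \emph{together with} a smooth family of unitaries $U:U_{1,0}\to\UnitaryOper(\CH_\psi)$ satisfying $\pi_\psi(\beta(g)(\CA))=U(g)\pi_\psi(\CA)U(g)^{-1}$ for all $\CA\in\SAal$. Since $\psi$ is $G$-invariant, the Maurer--Cartan form $\derG=\alpha^{-1}d\alpha\in\Omega^1(G,\mfkDal)$ does not excite $\psi$: $\psi(\alpha(g)(\derG(g)(\CA)))$ is the differential of the constant function $g\mapsto\psi(\alpha(g)(\CA))=\psi(\CA)$ and hence vanishes, while $\psi\circ\alpha(g)=\psi$. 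Corollary~\ref{cor:exactness} applied to $\derG$ gives $\derG=\derG_{<0}+\ad_{\chg_-}+\ad_{\chg_+}+\derG_{\geq 0}$ with $\derG_-:=\derG_{<0}+\ad_{\chg_-}\in\Omega^1(G,\mfkDalm)$ and $\derG_+:=\derG_{\geq 0}+\ad_{\chg_+}\in\Omega^1(G,\mfkDalp)$ both not exciting $\psi$. On each (contractible) chart $W$ of $U_{1,0}$, pick a smooth contracting homotopy $\gamma:[0,1]\times W\to W$ with $\gamma(0,\cdot)\equiv w_0$ and set $\mu_+:=\alpha_{\gamma^*\derG_+}(1):W\to\alLPA$, which is a smooth family (Remark~\ref{rem:integratingcontractibleW}) valued in $\alLPAp$ (the proof of Lemma~\ref{lma:FplusGenerates} applies to the time-dependent family $\gamma^*\derG_+$). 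Because $\gamma^*\derG_+$ is a family of derivations not exciting $\psi$, Remark~\ref{rem:donotexcitepreserve} gives $\psi\circ\mu_+(w)=\psi$; the formula $V_+(w)\pi_\psi(\CA)|\psi\ral:=\pi_\psi(\mu_+(w)(\CA))|\psi\ral$ then defines a smooth family $V_+:W\to\UnitaryOper(\CH_\psi)$ of unitaries fixing $|\psi\ral$ and implementing $\mu_+$. By Remark~\ref{rem:MCintegration}, $\alpha(w)=\alpha(w_0)\,\alpha_{\gamma^*\derG}(1)(w)$; splitting the evolution of $\gamma^*\derG=\gamma^*\derG_-+\gamma^*\derG_+$ gives $\alpha_{\gamma^*\derG}(1)(w)=\mu_-(w)\,\mu_+(w)$ with $\mu_-(w)\in\alLPAm$, since its generator is $\gamma^*\derG_-$ conjugated by $\mu_+(w)\in\alLPAp$, which lies in $\mfkDalm$. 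Writing the fixed automorphism $\alpha(w_0)=c_-c_+$ with $c_-\in\alLPAm$, $c_+\in\alLPAp$ (Lemma~\ref{lma:alphadecompositionsmooth}), normality of $\alLPAm$ yields $\alpha(w)=\big(c_-(c_+\mu_-(w)c_+^{-1})\big)\big(c_+\mu_+(w)\big)$ with the first factor in $\alLPAm$, so I take $\beta(w):=c_+\mu_+(w)$, which satisfies $\alpha(w)\beta(w)^{-1}\in\alLPAm$. Finally $\psi\circ c_+\sim\psi$ by the almost-local analog of Lemma~\ref{lma:unitaryeq} (applied to $\alpha(w_0)=c_-c_+$ via Theorem~\ref{lma:splitproperty}), so $c_+$ is implemented by a unitary $U_{c_+}$, and $U(w):=U_{c_+}V_+(w)$ is the required smooth implementing family.

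Given $(\beta,U)$, the remainder repeats the end of the proof of Theorem~\ref{thm:gLSM}. Choosing smooth $\CV:U_{2,0}\to\SUal$ and $\CW:U_{1,1}\to\SUal$ with $(d^*_2\beta)(d^*_0\beta)(d^*_1\beta)^{-1}=\Ad_\CV$ and $(\delta^*_1\beta)(\delta^*_0\beta)^{-1}=\Ad_\CW$, the operators $\pi_\psi(\CV)^{-1}(d^*_2 U)(d^*_0 U)(d^*_1 U)^{-1}$ on $U_{2,0}$ and $\pi_\psi(\CW)^{-1}(\delta^*_1 U)(\delta^*_0 U)^{-1}$ on $U_{1,1}$ lie in the commutant of $\pi_\psi(\SAal)$, hence of $\pi_\psi(\SAql)$, which is trivial since $\psi$ is pure; therefore $\pi_\psi(\CV)=\kappa\,(d^*_2 U)(d^*_0 U)(d^*_1 U)^{-1}$ and $\pi_\psi(\CW)=\lambda\,(\delta^*_1 U)(\delta^*_0 U)^{-1}$ for smooth $U(1)$-valued functions $\kappa$ on $U_{2,0}$ and $\lambda$ on $U_{1,1}$. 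Substituting these into (\ref{eq:omegaLie})--(\ref{eq:etaLie}) and using $\pi_\psi(f\cdot 1_\SA)=f$ for scalar $f$, all factors of $U$ cancel and $(\omega,\ups,\eta)$ is exhibited as the coboundary of the $2$-cochain $(\kappa,\lambda)$ in the double complex $C^{\bullet,\bullet}(U_{\bullet,\bullet},\underline{U(1)})$ --- this is the same computation as in Appendix~\ref{app:cocycle}. Hence $[(\omega,\ups,\eta)]=0$ in $H^3_{diff}(G,U(1))$.

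The main obstacle is the construction of the smooth pair $(\beta,U)$ in the second paragraph: in the discrete case any implementing unitaries $U(g)$ suffice, but here they must vary smoothly, and this is precisely what forces $\beta$ to be built from the $\psi$-adapted decomposition $\derG=\derG_-+\derG_+$ of Corollary~\ref{cor:exactness} --- so that $\mu_+$ preserves $\psi$ exactly and integrates into a manifestly smooth family of $\psi$-fixing unitaries --- rather than from an unconstrained application of Lemma~\ref{lma:alphadecompositionsmooth2}. Subsidiary points to verify are that conjugation by elements of $\alLPAp$ maps $\mfkDalm$ into itself (needed to place $\mu_-$ in $\alLPAm$) and that $V_+$, hence $U$, is smooth in a topology strong enough for the scalar functions $\kappa$ and $\lambda$ to come out smooth.
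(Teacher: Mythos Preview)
Your strategy matches the paper's: both use Corollary~\ref{cor:exactness} to build, on each chart, a smooth restriction $\beta$ with $\psi\circ\beta$ constant, then implement $\beta$ in the GNS space and compare with $\pi_\psi(\CV),\pi_\psi(\CW)$. The reduction to trivial GNVW index and the construction of $\beta$ via integrating $\derG_+$ are essentially what the paper does (the paper's claim that $\beta$ lands in $\alLPAsp$ rather than $\alLPAp$ appears to be an imprecision; your version is the correct one).

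The gap is the ``subsidiary point'' you flag at the end: smoothness of $\kappa$ and $\lambda$. This is not a detail but the main technical content of the paper's argument, and it cannot be obtained by finding a topology in which $U$ is smooth. Your $V_+(w)$ is only \emph{strongly} smooth on the dense domain $\CD_\psi$; it is typically not norm-continuous (already for an on-site $U(1)$ action, $V_+(e^{i\theta})$ is an infinite tensor product $\bigotimes_{j\ge 0}e^{i\theta Q_j}$, which is not norm-continuous in $\theta$). Matrix elements such as $\lambda_{ab}(g)=\langle\xi|\pi_\psi(\CW_{ab})^{-1}U_a U_b^{-1}|\xi\rangle$ require applying $U_b^{-1}$ to a vector, and after the constant factor $U_{c_{+,b}}^{-1}$ this vector need not lie in $\CD_\psi$, so strong smoothness on $\CD_\psi$ does not directly yield smoothness of $\lambda_{ab}$.

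The paper avoids extracting smooth scalars altogether: it shows that $W_{ab}:=U_aU_b^{-1}$ and $V_a:=U_a^{(g_1)}U_a^{(g_2)}(U_a^{(g_1g_2)})^{-1}$ are themselves $\pi_\psi$-images of smooth families in $\SUal$, so $\kappa,\lambda$ can be taken identically $1$. The mechanism is to represent $\derF_a:=\beta_a^{-1}d\beta_a$ (which does not excite $\psi_a$) by unbounded skew-symmetric operators $F_a$ on $\CD_{\psi_a}$ via Remark~\ref{rem:derGNSideal}, observe that the bounded difference $F_a-F_b$ equals $\pi_\psi(\chh_{ab})$ plus an a~priori non-smooth scalar $1$-form, and then check that this scalar is smooth by evaluating $\langle\psi_b|(F_a-F_b)|\psi_b\rangle$ and using that $S_a^{-1}S_b$ is $\pi_\psi$ of a smooth observable times a scalar (which drops out of the commutator). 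Once $F_a-F_b=\pi_\psi(\chf_{ab})$ with $\chf_{ab}\in\Omega^1(U_{1,1},\mfkdal)$ smooth, the ODE $dW_{ab}=W_{ab}\,\pi_\psi(\chf_{ab})$ is solved by Lemma~\ref{lma:inner} inside $\SUal$ with matching initial value, forcing $W_{ab}=\pi_\psi(\CW_{ab})$ exactly; the $V_a$ case is analogous. This ODE/bootstrap step is what your sketch is missing.
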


\begin{proof}
Suppose first that all automorphisms in the image of $\alpha$ have a trivial GNVW index. As in the definition of the anomaly index above, we choose a good simplicial open cover $U_{\bullet,\bullet}$ of $B_{\bullet} G$. The derivation $\alpha^{-1} d \alpha$ preserves the state $\psi$. By Corollary \ref{cor:exactness}, we can represent $\alpha^{-1} d \alpha$ as a sum $\derG_{<0} + \ad_{\chg_{-}} + \ad_{\chg_{+}} + \derG_{\geq 0}$ where $\derG_{<0} \in \Omega^1(G,\mfkDalsm)$, $\chg_{\pm}\in\Omega^1(G, \mfkdal)$, $\derG_{\geq 0}\in\Omega^1(M, \mfkDalsp)$ so that $\derG_{<0} + \ad{\chg_-}$ and $\derG_{\geq 0} + \ad_{\chg_+}$ do not excite  the state $\psi$. By Remark \ref{rem:donotexcitepreserve} and Lemma \ref{lma:unitaryeq}, there exist smooth families of LGAs $\beta:U_{1,0} \to \alLPAsp$ such that $\alpha\beta^{-1}:U_{1,0}\ra \alLPAsm$ and $\psi \circ \beta$ is constant on each connected component. 

We denote the components of a function $f$ on $U^{(p)}_{a_1} \cap ... \cap U^{(p)}_{a_q} \subset G^p$ at $(g_1,...,g_p) \in G^p$ by $f^{(g_1,...,g_p)}_{a_1...a_q}$. By Lemma \ref{lma:unitaryeq} for every $a$, the state $\psi_a=\psi\circ\beta_a$ is unitarily equivalent to $\psi$. Let us fix vectors $|\psi_a\ral\in\CH_\psi$  representing the states $\psi \circ \beta_a$. For each $a$, there is a unique family of unitary operators  $S_a:U_{1,0} \ra \BoundedOper(\CH_{\psi})$ representing $\beta_a$ such that $S_a |\psi_a\ral = |\psi\ral$. Moreover, since $\beta:U_{1,0}\ra \alLPA$ is strongly continuous, each $S_a$ is also strongly continuous.  The same arguments as in the proof of Theorem \ref{thm:gLSM} show that strongly continuous families of unitary operators $W^{(g_1)}_{ab} := S^{(g_1)}_a \left(S^{(g_1)}_b\right)^{-1}:U_{1,1} \to \BoundedOper(\CH_{\psi})$ and $V^{(g_1,g_2)}_{a} := S^{(g_1)}_a S^{(g_2)}_a \left(S^{(g_1 g_2)}_a\right)^{-1}:U_{2,0} \to \BoundedOper(\CH_{\psi})$ are $\pi_{\psi}$-images of smooth families of unitary observables $\CW^{(g_1)}_{ab}$, $\CV^{(g_1,g_2)}_{a}$ times not-necessarily-smooth functions $U_{1,1}\ra U(1)$ and $U_{2,0}\ra U(1)$. To complete the proof, it is sufficient show that these smooth families of observables can be chosen so that the functions are all $1$. 

Let $\CD_{\omega} \subset \CH_{\psi}$ be the subspace spanned by vectors of the form $\pi_\psi(\CA)|\omega\ral$, $\CA\in\SAal$ for $\omega = \psi$ or $\omega = \psi_a$. Note that since $W^{(g_1)}_{ab}$ at each point of $U_{1,1}$ is a conjugation with an almost local unitary, $\CD_{\psi_a}$ does not depend on $a$. Let $\derF_a = \beta^{-1}_a d \beta_a$. It does not excite $\psi_a$. By Remark \ref{rem:derGNSideal}, for every $u\in U^{(1)}_a$, we can define an unbounded operator $F_a(u):\CD_{\psi_a} \ra\CH_\psi\otimes_\RR T^*U^{(1)}_a\vert_u$ by
\beq\label{eq:Fadef}
F_a(u)\pi_{\psi}(\CA)|\psi_a\ral=\pi_{\psi}(\derF_a(\CA)\vert_u)|\psi_a\ral.
\eeq
These operators represent derivations $\derF_a$, in the sense that on $\CD_{\psi_a}$ we have an operator identity
\beq\label{eq:Farep}
[F_a(u),\pi_\psi(\CB)]=\pi_\psi(\derF_a(\CB)\vert_u),\quad\forall u\in U^{(1)}_a,\forall\CB\in\SAal.
\eeq
Also, $F_a(u)|\psi_a\ral=0$ for all $u\in U^{(1)}_a$ and all $a$. Further, by Lemma \ref{lma:nonuniquenesssmooth}, $\derF_a - \derF_{b}\in \Omega^1(U_{1,1},\mfkDal)$ is of the form $\ad_{\chh_{ab}}$, where $\chh_{ab}\in\Omega^1(U_{1,1},\mfkdal)$. Then  (\ref{eq:Fadef}) implies that $\forall u\in U^{(1)}_a\cap U^{(1)}_b$ the operator $F_a(u)-F_b(u)$ is bounded on $\CD_\psi$ and thus extends uniquely to $\CH_\psi$, while (\ref{eq:Farep}) implies that this extension is equal to $\pi_\psi(\chh_{ab}(u))$ plus a scalar function (i.e. plus an identity operator times an element of $T^*(U^{(1)}_a\cap U^{(1)}_b)\vert_u$ which {\it a priori} need not depend smoothly on $u$). 

To prove that $F_a(u)-F_b(u)$ is a $\pi_\psi$-image of an element of $\Omega^1(U^{(1)}_a\cap U^{(1)}_b,\mfkdal)$, it is sufficient to show that the 1-form $\lal\psi_b|F_a(u) - F_{b}(u) |\psi_b\ral$ smoothly depends on $u$. We have
\beq
\lal\psi_b|F_a(u) - F_{b}(u)|\psi_b \ral = \lal\psi_b|F_a(u) |\psi_b \ral = \lal\psi_a|S_a(u)^{-1} S_b(u) [F_a(u),S_b(u)^{-1} S_a(u)] |\psi_a\ral.
\eeq
The family of unitary operators $S_a(u)^{-1}S_b(u)$ is a (not-necessarily smooth) scalar function of $u$ times the $\pi_\psi$-image of a smooth family of unitary almost local observables. Since $F_a(u)$ represents a smooth family of derivations $\derF_a(u)$, the family of operators 
\beq\label{eq:operatorfamily}
S_a(u)^{-1} S_b(u) [F_a(u),S_b(u)^{-1} S_a(u)]
\eeq
is a $\pi_\psi$-image of an element of $\Omega^1(U^{(1)}_a\cap U^{(1)}_b,\mfkdal)$, which implies the claim.

Consider the strongly continuous family of unitary operators $W_{ab}(u)=S_a(u)S_b(u)^{-1}$. For any $\CA\in\SAal$, the function 
$u\mapsto S_a(u)\pi_\psi(\CA)|\psi_a\ral=\pi_\psi(\beta_a(u)(\CA))|\psi\ral$ is a smooth function $U^{(1)}_a\ra\CD_\psi$ and its differential is equal to 
\beq
d(S_a\pi_\psi(\CA)|\psi_a\ral)=S_a\pi_\psi(\derF_a(\CA))|\psi_a\ral=S_a F_a\pi_\psi(\CA)|\psi_a\ral
\eeq
Similarly, for any $\CA\in\SAal$, the function 
$u\mapsto S_b^{-1}(u)\pi_\psi(\CA)|\psi\ral=\pi_\psi(\beta_b(u)^{-1}(\CA))|\psi_b\ral$ is a smooth function $U^{(1)}_b\ra \CD_{\psi_b}$, and its differential is 
\beq
d(S_b^{-1}\pi_\psi(\CA)|\psi\ral)=-F_b S_b^{-1}\pi_\psi(\CA)|\psi\ral.
\eeq
Therefore, for any
$\CA\in\SAal$, the function $W_{ab}\pi_\psi(\CA)|\psi\ral:U^{(1)}_a\cap U^{(1)}_b\ra\CD_\psi$ is smooth and its differential is
\beq\label{eq:Weq}
d\left(W_{ab}\pi_\psi(\CA)|\psi\ral\right)=W_{ab}S_b(F_a-F_b)S_b^{-1}\pi_\psi(\CA)|\psi\ral.
\eeq
On the other hand, we showed above that $F_a-F_b$ and therefore $S_b(F_a-F_b)S_b^{-1}$ is a $\pi_\psi$-image of an element $\chf_{ab}\in\Omega^1(U^{(1)}_a\cap U^{(1)}_b,\mfkdal)$. By Lemma \ref{lma:inner}, the equation
\beq
d\CW_{ab}=\CW_{ab}\chf_{ab}
\eeq
has a smooth solution $\CW_{ab}:U^{(1)}_a\cap U^{(1)}_b\ra\SUal$ for any choice of $\CW_{ab}(u_0)$, where $u_0\in U^{(1)}_a\cap U^{(1)}_b$ is an arbitrary basepoint. If we pick $\CW_{ab}(u_0)$ so that $\pi_\psi(\CW_{ab}(u_0))=W_{ab}(u_0)$, then $\pi_\psi(\CW_{ab})$ solves (\ref{eq:Weq}) and agrees with $W_{ab}$ at $u=u_0$. Therefore, $W_{ab}=\pi_\psi(\CW_{ab})$ everywhere on $U^{(1)}_a\cap U^{(1)}_b.$

Similarly, we can consider the family of derivation $\beta^{(g_1 g_2)}(\beta^{(g_2)-1} (\derF^{(g_1)}_a)) + \beta^{(g_1 g_2)}(\derF^{(g_2)}-\derF^{(g_1 g_2)})$ to argue that the family of operators
\beq
S^{(g_1 g_2)}_a S^{(g_2)-1}_a F^{(g_1)}_a S^{(g_2)}_a S^{(g_1 g_2)-1}_a + S^{(g_1 g_2)}_a (F^{(g_2)}_a-F^{(g_1 g_2)}_a) S^{(g_1 g_2)-1}_a
\eeq
represents a smooth family of almost local observables. It implies that $V^{(g_1,g_2)}_{a}$ is the $\pi_{\psi}$-image of a smooth family $\CV:U_{2,0} \to \SAal$.

The case when the image of $\alpha$ contains automorphisms with a non-trivial GNVW index can be treated in the same way as in the proof of Theorem \ref{thm:gLSM}.

\end{proof}

\section{Some computations} \label{app:cocycle}

\subsection{Abstract group} \label{app:cocycleD}

We denote the components of the function $f$ on $G^p$ by $f^{(g_1,...,g_p)}$. With this notation, we have automorphisms $\beta^{(g_1)}$, unitary observables $\CV^{(g_1,g_2)}$ satisfying
\beq
\beta^{(g_1)} \circ \beta^{(g_2)} = \Ad_{\CV^{(g_1,g_2)}} \circ \beta^{(g_1 g_2)} \\
\eeq
and $U(1)$-valued functions $\omega^{(g_1,g_2,g_3)}$ defined by
\bqa
\omega^{(g_1,g_2,g_3)} = \CV^{(g_1,g_2)} \CV^{(g_1 g_2, g_3)} \CV^{(g_1,g_2 g_3)-1} \beta^{(g_1)} (\CV^{(g_2,g_3)})^{-1}.
\eqa

\subsubsection{Cocycle condition} \label{app:CocycleConditionD}

The cocycle condition for $\omega$ follows from
\begin{multline}
\omega^{(g_1g_2,g_3,g_4)} \omega^{(g_1,g_2,g_3g_4)}  (\Ad_{\CV^{(g_1,g_2)}} \circ \beta^{(g_1 g_2)})(\CV^{(g_3,g_4)}) \beta^{(g_1)}(\CV^{(g_2,g_3 g_4)}) \CV^{(g_1,g_2 g_3 g_4)} = \\
\CV^{(g_1,g_2)} \omega^{(g_1g_2,g_3,g_4)} \beta^{(g_1 g_2)}(\CV^{(g_3,g_4)}) \CV^{(g_1,g_2)-1} \omega^{(g_1,g_2,g_3g_4)} \beta^{(g_1)}(\CV^{(g_2,g_3 g_4)}) \CV^{(g_1,g_2 g_3 g_4)} = \\
=  \CV^{(g_1,g_2)} \omega^{(g_1g_2,g_3,g_4)} \beta^{(g_1 g_2)}(\CV^{(g_3, g_4)}) \CV^{(g_1 g_2, g_3 g_4)} = \\
\CV^{(g_1,g_2)} \CV^{(g_1 g_2, g_3)} \CV^{(g_1 g_2 g_3, g_4)} = \\
= \omega^{(g_1,g_2,g_3)} \beta^{(g_1)}(\CV^{(g_2,g_3)}) \CV^{(g_1, g_2 g_3)} \CV^{(g_1 g_2 g_3,g_4)} = \\
\omega^{(g_1,g_2,g_3)} \omega^{(g_1,g_2 g_3,g_4)} \beta^{(g_1)}(\CV^{(g_2,g_3)}) \beta^{(g_1)}(\CV^{(g_2 g_3,g_4)}) \CV^{(g_1, g_2 g_3 g_4)} = \\
= \omega^{(g_1,g_2,g_3)} \omega^{(g_1,g_2 g_3,g_4)} \omega^{(g_2, g_3, g_4)} (\beta^{(g_1)} \circ \beta^{(g_2)})(\CV^{(g_3,g_4)}) \beta^{(g_1)}(\CV^{(g_2, g_3 g_4)}) \CV^{(g_1, g_2 g_3 g_4)} = \\
=\omega^{(g_1,g_2,g_3)} \omega^{(g_1,g_2 g_3,g_4)} \omega^{(g_2, g_3, g_4)} (\Ad_{\CV^{(g_1,g_2)}} \circ \beta^{(g_1 g_2)})(\CV^{(g_3,g_4)}) \beta^{(g_1)}(\CV^{(g_2, g_3 g_4)}) \CV^{(g_1, g_2 g_3 g_4)}.
\end{multline}

\subsubsection{Independence of the choice of $\beta$} \label{app:IndependenceDiscrete}

Two different choices of $\beta$ are related by $\tilde{\beta}^{(g)} = \beta^{(g)} \circ \Ad_{\CU^{(g)}}$ for some unitary observables $\CU^{(g)}$. It implies
$$
\tilde{\CV}^{(g_1,g_2)} = \beta^{(g_1)} (\CU^{(g_1)}) \CV^{(g_1,g_2)} \beta^{(g_1 g_2)}(\CU^{(g_2)} \CU^{(g_1 g_2)-1}),
$$
We have
\begin{multline}
\tilde{\omega}^{(g_1,g_2,g_3)} =  \tilde{\CV}^{g_1,g_2} \tilde{\CV}^{g_1g_2,g_3}   \tilde{\CV}^{(g_1,g_2 g_3)-1} \tilde{\beta}^{(g_1)} (\tilde{\CV}^{(g_2,g_3)})^{-1} = \\
\beta^{(g_1)} (\CU^{(g_1)}) \CV^{(g_1,g_2)} \beta^{(g_1 g_2)}(\CU^{(g_2)}) \CV^{(g_1 g_2,g_3)} \beta^{(g_1 g_2 g_3)}(\CU^{(g_3)} \CU^{(g_2 g_3)-1}) \times \\ \times \CV^{(g_1,g_2 g_3)-1} \beta^{(g_1)}(\beta^{(g_2 g_3)}(\CU^{(g_2 g_3)} \CU^{(g_3)-1})) \beta^{(g_1)}(\CV^{(g_2,g_3)-1}) \beta^{(g_1)}(\CU^{(g_1)} \beta^{(g_2)}(\CU^{(g_2)}))^{-1} = \\ 
\beta^{(g_1)}(\CU^{(g_1)} \beta^{(g_2)}(\CU^{(g_2)})) \CV^{(g_1,g_2)} \CV^{(g_1 g_2,g_3)} \beta^{(g_1 g_2 g_3)}(\CU^{(g_3)} \CU^{(g_2 g_3)-1}) \times \\ \times \CV^{(g_1,g_2 g_3)-1} \beta^{(g_1)}(\beta^{(g_2 g_3)}(\CU^{(g_2 g_3)} \CU^{(g_3)-1})) \beta^{(g_1)}(\CV^{(g_2,g_3)-1}) \beta^{(g_1)}(\CU^{(g_1)} \beta^{(g_2)}(\CU^{(g_2)}))^{-1} = \\
\beta^{(g_1)}(\CU^{(g_1)} \beta^{(g_2)}(\CU^{(g_2)})) \CV^{(g_1,g_2)} \CV^{(g_1 g_2,g_3)} \CV^{(g_1,g_2 g_3)-1} \beta^{(g_1)}(\CV^{(g_2,g_3)-1}) \beta^{(g_1)}(\CU^{(g_1)} \beta^{(g_2)}(\CU^{(g_2)}))^{-1} = \\
= \omega^{(g_1,g_2,g_3)}
\end{multline}

\subsection{Lie group}  \label{app:cocycleL}

In the case of a Lie group $G$ and some good simplicial cover $U_{\bullet,\bullet}$ of $B_{\bullet} G$, it is convenient to use the coordinates and the labels for each chart. We denote the components of the function $f$ on $U^{(p)}_{a_1} \cap ... \cap U^{(p)}_{a_q} \subset G^p$ by $f^{(g_1,...,g_p)}_{a_1...a_q}$. With this notation, we have automorphisms $\beta^{(g_1)}_{a}$, unitary observables $\CV^{(g_1,g_2)}_{a}$ and $\CW^{(g_1)}_{ab}$ satisfying
\bqa
\beta^{(g_1)}_{a} = \Ad_{\CW^{(g_1)}_{ab}} \circ \beta^{(g_1)}_{b} \\
\beta^{(g_1)}_{a} \circ \beta^{(g_2)}_{a} = \Ad_{\CV^{(g_1,g_2)}_{a}} \circ \beta^{(g_1 g_2)}_{a} \\
\eqa
and $U(1)$-valued functions $\omega^{(g_1,g_2,g_3)}_a$, $\ups^{(g_1,g_2)}_{ab}$, $\eta^{(g_1)}_{abc}$ defined by
\bqa
\omega^{(g_1,g_2,g_3)}_a = \CV^{(g_1,g_2)}_{a} \CV^{(g_1 g_2, g_3)}_{a} \CV^{(g_1,g_2 g_3)-1}_{a} \beta^{(g_1)}_a (\CV^{(g_2,g_3)}_{a})^{-1},\\
\ups^{(g_1,g_2)}_{ab} = \CV^{(g_1,g_2)}_{a} \CW^{(g_1 g_2)}_{a b} \CV^{(g_1,g_2)-1}_{b} \beta^{(g_1)}_{b}(\CW^{(g_2)}_{ab})^{-1} \CW^{(g_1)-1}_{ab},\\
\eta^{(g_1)}_{abc} = \CW^{(g_1)}_{ab} \CW^{(g_1)}_{bc} \CW^{(g_1)-1}_{ac}.
\eqa

\subsubsection{Cocycle condition} \label{app:CocycleConditionL}

To show the cocycle condition for $(\omega,\ups,\eta)$ we need to show $\delta \eta = (\delta \ups)(d \eta) = (\delta \omega)(d \ups)^{-1} = d \omega = 1$. We have 
\begin{multline}
\eta^{(g_1)}_{abc} \eta^{(g_1)}_{acd} = (\CW^{(g_1)}_{ab} \CW^{(g_1)}_{bc} \CW^{(g_1)-1}_{ac}) (\CW^{(g_1)}_{ac} \CW^{(g_1)}_{cd} \CW^{(g_1)-1}_{ad}) = \\ = \CW^{(g_1)}_{ab} (\CW^{(g_1)}_{bc} \CW^{(g_1)}_{cd} \CW^{(g_1)-1}_{bd}) \CW^{(g_1)-1}_{ab} (\CW^{(g_1)}_{ab} \CW^{(g_1)}_{bd} \CW^{(g_1)-1}_{ad}) = \\ =  \CW^{(g_1)}_{ab} \eta^{(g_1)}_{bcd} \CW^{(g_1)-1}_{ab} \eta_{abd} = \eta^{(g_1)}_{bcd} \eta^{(g_1)}_{abd}
\end{multline}
that implies $\delta \eta = 1$. The identity
\begin{multline}
(\ups^{(g_1,g_2)}_{ab}) (\eta^{(g_1)}_{abc} ) (\ups^{(g_1,g_2)-1}_{ac}) \CV^{(g_1,g_2)}_{a}(\eta^{(g_1 g_2)-1}_{abc}) \CW^{(g_1 g_2)}_{ab} \CV^{(g_1,g_2)-1}_{b} ( \ups^{(g_1,g_2)-1}_{bc} )  = \\ 
= \CV^{(g_1,g_2)}_{a} \CW^{(g_1 g_2)}_{ab} \CV^{(g_1,g_2)-1}_{b} \l \beta^{(g_1)}_{b} (\CW^{(g_2)}_{ab})^{-1} \CW^{(g_1)}_{bc} \beta^{(g_1)}_{c} (\CW^{(g_2)}_{ac} \CW^{(g_2)-1}_{bc})  \CW^{(g_1)-1}_{bc} \r = \\
= \CV^{(g_1,g_2)}_{a} \CW^{(g_1 g_2)}_{ab} \CV^{(g_1,g_2)-1}_{b} \l\beta^{(g_1)}_{b} (\CW^{(g_2)-1}_{ab}\eta^{(g_2)-1}_{abc} \CW^{(g_2)}_{ab}) \r   
\end{multline}
implies $(\delta \ups) (d \eta) = 1$. To show $d \ups = \delta \omega$, we use
\begin{multline}
(\ups^{(g_1,g_2)-1}_{ab}\CV^{(g_1,g_2)}_{a})(\ups^{(g_1 g_2,g_3)-1}_{ab} \CV^{(g_1 g_2, g_3)}_{a}) (\CV^{(g_1,g_2 g_3)-1}_{a} \ups^{(g_1 g_2,g_3)}_{ab}) = \\
= \CW^{(g_1)}_{ab} \beta^{(g_1)}_{b}(\CW^{(g_2)}_{ab}) \CV^{(g_1,g_2)}_b \beta^{(g_1 g_2)}_b(\CW^{(g_3)}_{ab}) \CV^{(g_1 g_2,g_3)}_{b} \CV^{(g_1,g_2 g_3)-1}_{b} \beta^{(g_1)}_b(\CW^{(g_2 g_3)-1}_{ab}) \CW^{(g_1)-1}_{ab} = \\
= \CW^{(g_1)}_{ab} \beta^{(g_1)}_{b}(\CW^{(g_2)}_{ab}) \beta^{(g_1)}_{b} (\beta^{( g_2)}_b(\CW^{(g_3)}_{ab})) \CV^{(g_1,g_2)}_b \CV^{(g_1 g_2,g_3)}_{b} \CV^{(g_1,g_2 g_3)-1}_{b} \beta^{(g_1)}_b(\CW^{(g_2 g_3)-1}_{ab}) \CW^{(g_1)-1}_{ab} = \\
= \omega^{(g_1,g_2,g_3)}_{b} \CW^{(g_1)}_{ab} \beta^{(g_1)}_{b}(\CW^{(g_2)}_{ab} \beta^{( g_2)}_b(\CW^{(g_3)}_{ab}) \CV^{(g_2,g_3)}_b \CW^{(g_2 g_3)-1}_{ab} ) \CW^{(g_1)-1}_{ab} = \\ = \omega^{(g_1,g_2,g_3)}_{b}
 \beta^{(g_1)}_{a} (\ups^{(g_2, g_3)-1}_{ab} \CV^{(g_2,g_3)}_a) 
\end{multline}
Finally, to show $d \omega = 1$, we use
\begin{multline}
\omega_a^{(g_1g_2,g_3,g_4)} \omega_a^{(g_1,g_2,g_3g_4)}  (\Ad_{\CV_a^{(g_1,g_2)}} \circ \beta_a^{(g_1 g_2)})(\CV_a^{(g_3,g_4)}) \beta_a^{(g_1)}(\CV_a^{(g_2,g_3 g_4)}) \CV_a^{(g_1,g_2 g_3 g_4)} = \\
\CV_a^{(g_1,g_2)} \omega_a^{(g_1g_2,g_3,g_4)} \beta_a^{(g_1 g_2)}(\CV_a^{(g_3,g_4)}) \CV_a^{(g_1,g_2)-1} \omega_a^{(g_1,g_2,g_3g_4)} \beta_a^{(g_1)}(\CV_a^{(g_2,g_3 g_4)}) \CV_a^{(g_1,g_2 g_3 g_4)} = \\
=  \CV_a^{(g_1,g_2)} \omega_a^{(g_1g_2,g_3,g_4)} \beta_a^{(g_1 g_2)}(\CV_a^{(g_3, g_4)}) \CV_a^{(g_1 g_2, g_3 g_4)} = \\
\CV_a^{(g_1,g_2)} \CV_a^{(g_1 g_2, g_3)} \CV_a^{(g_1 g_2 g_3, g_4)} = \\
= \omega_a^{(g_1,g_2,g_3)} \beta_a^{(g_1)}(\CV_a^{(g_2,g_3)}) \CV_a^{(g_1, g_2 g_3)} \CV_a^{(g_1 g_2 g_3,g_4)} = \\
\omega_a^{(g_1,g_2,g_3)} \omega_a^{(g_1,g_2 g_3,g_4)} \beta_a^{(g_1)}(\CV_a^{(g_2,g_3)}) \beta_a^{(g_1)}(\CV_a^{(g_2 g_3,g_4)}) \CV_a^{(g_1, g_2 g_3 g_4)} = \\
= \omega_a^{(g_1,g_2,g_3)} \omega_a^{(g_1,g_2 g_3,g_4)} \omega_a^{(g_2, g_3, g_4)} (\beta_a^{(g_1)} \circ \beta_a^{(g_2)})(\CV_a^{(g_3,g_4)}) \beta_a^{(g_1)}(\CV_a^{(g_2, g_3 g_4)}) \CV_a^{(g_1, g_2 g_3 g_4)} = \\
=\omega_a^{(g_1,g_2,g_3)} \omega_a^{(g_1,g_2 g_3,g_4)} \omega_a^{(g_2, g_3, g_4)} (\Ad_{\CV_a^{(g_1,g_2)}} \circ \beta_a^{(g_1 g_2)})(\CV_a^{(g_3,g_4)}) \beta_a^{(g_1)}(\CV_a^{(g_2, g_3 g_4)}) \CV_a^{(g_1, g_2 g_3 g_4)}.
\end{multline}

\subsubsection{Independence of the choice of $\beta$} \label{app:IndependenceOfBetaL}

Two different choices of $\beta$ are related by $\tilde{\beta}^{(g)}_a = \beta^{(g)}_a \circ \Ad_{\CU^{(g)}_a}$ for some smooth family of unitary almost local observables $\CU^{(g)}_a$. It implies
$$
\tilde{\CV}^{(g_1,g_2)}_a = \beta_a^{(g_1)} (\CU^{(g_1)}) \CV^{(g_1,g_2)}_a \beta^{(g_1 g_2)}_a(\CU^{(g_2)} \CU^{(g_1 g_2)-1}),
$$
$$
\tilde{\CW}^{(g_1)}_{ab} = \CW^{(g_1)}_{ab} \beta^{(g_1 )}_b(\CU^{(g_1)}\CU^{(g_1)-1}_{b}).
$$
We have
\begin{multline}
\tilde{\eta}^{(g_1)}_{abc} =  \tilde{\CW}^{(g_1)}_{ab} \tilde{\CW}^{(g_1)}_{bc} \tilde{\CW}^{(g_1)-1}_{ac} = \CW^{(g_1)}_{ab} \beta^{(g_1 )}_b(\CU^{(g_1)}\CU^{(g_1)-1}_{b}) \CW^{(g_1)}_{bc} \beta^{(g_1 )}_c(\CU^{(g_1)}_{b}\CU^{(g_1)-1}) \CW^{(g_1)-1}_{ac} = \\
= \CW^{(g_1)}_{ab} \CW^{(g_1)}_{bc} \beta^{(g_1 )}_c(\CU^{(g_1)}\CU^{(g_1)-1}_{b})  \beta^{(g_1 )}_c(\CU^{(g_1)}_{b}\CU^{(g_1)-1}) \CW^{(g_1)-1}_{ac} = \eta^{(g_1)}_{abc},\\
\end{multline}
\begin{multline}
\tilde{\ups}^{(g_1,g_2)}_{ab} = \beta^{(g_1)}_{a} (\CU^{(g_1)}_{a}) \CV^{(g_1,g_2)}_a \beta^{(g_1 g_2)}_a(\CU^{(g_2)}_{a} \CU^{(g_1 g_2)-1}_{a})) \CW^{(g_1 g_2)}_{ab} \beta^{(g_1 g_2)}_b(\CU^{(g_1g_2)}_{a} \CU^{(g_2)-1}_{b}) \CV^{(g_1,g_2)-1}_b \times \\ \times \beta^{(g_1)}_{b} (\beta^{(g_2)}_{b}(\CU^{(g_2)}_{b} \CU^{(g_2)-1}_{a})\CW^{(g_2)-1}_{ab}) \beta^{(g_1)}_b(\CU^{(g_1)-1}_{a}) \CW^{(g_1)-1}_{ab} =  \\ =
\beta^{(g_1)}_{a} (\CU^{(g_1)}_{a}) \CV^{(g_1,g_2)}_a \CW^{(g_1 g_2)}_{ab} \beta^{(g_1 g_2)}_b (\CU^{(g_2)}_{a} \CU^{(g_2)-1}_{b}) \CV^{(g_1,g_2)-1}_b \times \\ \times \beta^{(g_1)}_{b} (\beta^{(g_2)}_{b}(\CU^{(g_2)}_{b} \CU^{(g_2)-1}_{a})\CW^{(g_2)-1}_{ab}) \beta^{(g_1)}_b(\CU^{(g_1)-1}_{a}) \CW^{(g_1)-1}_{ab} = \\ 
= \beta^{(g_1)}_{a} (\CU^{(g_1)}_{a}) \CV^{(g_1,g_2)}_a \CW^{(g_1 g_2)}_{ab} \CV^{(g_1,g_2)-1}_b \beta^{(g_1)}_{b} (\CW^{(g_2)-1}_{ab}) \beta^{(g_1)}_b(\CU^{(g_1)-1}_{a}) \CW^{(g_1)-1}_{ab} = \\
= \beta^{(g_1)}_{a} (\CU^{(g_1)}_{a}) \CV^{(g_1,g_2)}_a \CW^{(g_1 g_2)}_{ab} \CV^{(g_1,g_2)-1}_b \beta^{(g_1)}_{b} (\CW^{(g_2)-1}_{ab}) \CW^{(g_1)-1}_{ab} \beta^{(g_1)}_a(\CU^{(g_1)-1}_{a}) = \ups^{(g_1,g_2)}_{ab},
\end{multline}

\begin{multline}
\tilde{\omega}^{(g_1,g_2,g_3)}_{a} =  \tilde{\CV}^{g_1,g_2}_a \tilde{\CV}^{g_1g_2,g_3}_a   \tilde{\CV}^{(g_1,g_2 g_3)-1}_a \tilde{\beta}^{(g_1)}_{a} (\tilde{\CV}^{(g_2,g_3)}_a)^{-1} = \\
\beta^{(g_1)}_{a} (\CU^{(g_1)}_{a}) \CV^{(g_1,g_2)}_a \beta^{(g_1 g_2)}_a(\CU^{(g_2)}_{a}) \CV^{(g_1 g_2,g_3)}_a \beta^{(g_1 g_2 g_3)}_a(\CU^{(g_3)}_{a} \CU^{(g_2 g_3)-1}_{a}) \times \\ \times \CV^{(g_1,g_2 g_3)-1}_a \beta^{(g_1)}_a(\beta^{(g_2 g_3)}_a(\CU^{(g_2 g_3)}_{a} \CU^{(g_3)-1}_{a})) \beta^{(g_1)}_{a}(\CV^{(g_2,g_3)-1}_a) \beta^{(g_1)}_a(\CU^{(g_1)}_a \beta^{(g_2)}_a(\CU^{(g_2)}_{a}))^{-1} = \\ 
\beta^{(g_1)}_a(\CU^{(g_1)}_a \beta^{(g_2)}_a(\CU^{(g_2)}_{a})) \CV^{(g_1,g_2)}_a \CV^{(g_1 g_2,g_3)}_a \beta^{(g_1 g_2 g_3)}_a(\CU^{(g_3)}_{a} \CU^{(g_2 g_3)-1}_{a}) \times \\ \times \CV^{(g_1,g_2 g_3)-1}_a \beta^{(g_1)}_a(\beta^{(g_2 g_3)}_a(\CU^{(g_2 g_3)}_{a} \CU^{(g_3)-1}_{a})) \beta^{(g_1)}_{a}(\CV^{(g_2,g_3)-1}_a) \beta^{(g_1)}_a(\CU^{(g_1)}_a \beta^{(g_2)}_a(\CU^{(g_2)}_{a}))^{-1} = \\
\beta^{(g_1)}_a(\CU^{(g_1)}_a \beta^{(g_2)}_a(\CU^{(g_2)}_{a})) \CV^{(g_1,g_2)}_a \CV^{(g_1 g_2,g_3)}_a \CV^{(g_1,g_2 g_3)-1}_a \beta^{(g_1)}_{a}(\CV^{(g_2,g_3)-1}_a) \beta^{(g_1)}_a(\CU^{(g_1)}_a \beta^{(g_2)}_a(\CU^{(g_2)}_{a}))^{-1} = \\
= \omega^{(g_1,g_2,g_3)}_{a}
\end{multline}


\printbibliography

\end{document}